\def\01{\{0,1\}}
\newcommand{\eps}{\varepsilon}
\newcommand{\Tr}{\mbox{\sf Tr}}
\newcommand{\E}{\mathcal{E}}
\newcommand{\Cc}{{\mathcal C}} 
\newcommand{\calA}{\ensuremath{\mathcal{A}}}
\newcommand{\calB}{\ensuremath{\mathcal{B}}}
\newcommand{\G}{\ensuremath{\mathcal{G}}}
\newcommand{\calC}{\ensuremath{\mathcal{C}}}
\newcommand{\In}{\ensuremath{\mathscr{I}}}
\newcommand{\DP}{\ensuremath{\mathsf{DP}}}
\newcommand{\PAC}{\ensuremath{\mathsf{PAC}}}
\newcommand{\PPAC}{\ensuremath{\mathsf{PPAC}}}
\newcommand{\ext}{\ensuremath{\mathsf{ext}}}
\newcommand{\SB}{\ensuremath{\mathsf{SB}}}
\newcommand{\PRD}{\ensuremath{\mathsf{PRdim}}}
\newcommand{\Loss}{\ensuremath{\mathsf{Loss}}}
\newcommand{\DRD}{\ensuremath{\mathsf{DRdim}}}
\newcommand{\RSOA}{\ensuremath{\mathsf{RSOA}}}
\newcommand{\GL}{\ensuremath{\mathsf{GL}}}
\newcommand{\sfat}{\ensuremath{\mathsf{sfat}}}
\newcommand{\Ldim}{\ensuremath{\mathsf{Ldim}}}
\newcommand{\srac}{\ensuremath{\mathsf{SRAC}}}
\newcommand{\rac}{\ensuremath{\mathsf{RAC}}}
\newcommand{\fat}{\ensuremath{\mathsf{fat}}}
\newcommand{\UB}{\ensuremath{\mathsf{UB}}}
\newcommand{\Hi}{\ensuremath{\mathcal{H}}}
\newcommand{\U}{\ensuremath{\mathcal{U}}}
\newcommand{\A}{\ensuremath{\mathcal{A}}}
\newcommand{\B}{\ensuremath{\mathcal{B}}}
\newcommand{\calR}{\ensuremath{\mathcal{R}}}
\newcommand{\T}{\ensuremath{\mathcal{T}}}
\newcommand{\M}{\ensuremath{\mathcal{M}}}
\newcommand{\X}{\ensuremath{\mathcal{X}}}
\newcommand{\Z}{\ensuremath{\mathds{Z}}}
\newcommand{\De}{\ensuremath{\mathcal{D}}}
\newcommand{\id}{\ensuremath{\mathbb{I}}}
\DeclareMathOperator{\poly}{poly}
\newcommand{\Fe}{\ensuremath{\mathcal{F}}}
\newtheorem{theorem}{Theorem}[section] 
\newtheorem{lemma}[theorem]{Lemma}
\newtheorem{claim}[theorem]{Claim}
\newtheorem{corollary}[theorem]{Corollary}
\newtheorem{result}[theorem]{Result}
\newtheorem{definition}[theorem]{Definition}
\newtheorem{problem}[theorem]{Problem}
\DeclareMathOperator*{\argmax}{arg\,max}
\DeclareMathOperator*{\argmin}{arg\,min}
\def\01{\{0,1\}}
\newcommand{\supp}{\mathsf{supp}}
\title{Private learning implies quantum stability}
\begin{document}

\author[1]{Srinivasan Arunachalam\footnote{\texttt{srinivasan.arunachalam@ibm.com}}}
\affil{IBM Quantum, IBM T.J.\ Watson Research Center, Yorktown Heights, USA}
\author{Yihui Quek\footnote{\texttt{yquek@stanford.edu}}}
\affil{ Information Systems Laboratory, Stanford University, USA}
\author[1]{John Smolin\footnote{\texttt{smolin@us.ibm.com}}}

\date{\today}

\maketitle
\begin{abstract}
Learning an unknown $n$-qubit quantum state $\rho$ is a fundamental challenge in quantum computing. Information-theoretically, it is well-known that tomography requires exponential in $n$ many copies of an unknown  state $\rho$ in order to estimate it up to small trace distance.  
Motivated by computational learning theory, Aaronson and others introduced several (weaker) learning models: the $\PAC$ model of  learning quantum states (Proc.~of Royal Society A'07), shadow tomography (STOC'18) for learning ``shadows" of a quantum state, a learning model that additionally requires learners to be differentially private~(STOC'19), and  the online model of learning quantum states (NeurIPS'18). In  these models it was shown that an unknown quantum state can be learned ``approximately  well"  using \emph{linear} in $n$ many copies of~$\rho$. But is there any relationship between these learning models? In this paper we prove a sequence of (information-theoretic)  implications from differentially-private $\PAC$ learning to online learning and then to quantum stability.

\vspace{2mm}

Our main result generalizes the recent work of Bun, Livni and Moran~(Journal of the ACM, 2021) who showed that finite Littlestone dimension (of Boolean-valued concept classes) implies $\PAC$ learnability in the (approximate) differentially private ($\DP$) setting.  We first consider their work in the real-valued setting and further extend to their techniques to the setting of learning quantum states. Key to many of our results is our construction of a generic quantum online learner, Robust Standard Optimal Algorithm ($\RSOA$), which is robust to adversarial imprecision. We then show information-theoretic implications between $\DP$ learning quantum states in the~$\PAC$ model, learnability of quantum states in the one-way communication model, online learning of quantum states, quantum stability (which is our new \emph{conceptual} contribution) and various combinatorial parameters.  As an application, we also improve gentle shadow tomography (for classes of quantum states) and show connections between noisy quantum state learning and channel capacity, which might be relevant to physically-motivated learning~scenarios.
\end{abstract}
\newpage

     \setcounter{tocdepth}{2}
    \renewcommand{\baselinestretch}{0.7}\normalsize
    \tableofcontents
    \renewcommand{\baselinestretch}{1.0}\normalsize

\section{Introduction}
Quantum state tomography is a fundamental task in quantum computing whose goal is to estimate an unknown quantum state $\rho$, given copies of the state.
 Tomography is of great practical interest since it helps in tasks such as verifying entanglement, understanding correlations in quantum states, and is useful for  calibrating, understanding and controlling noise in   quantum devices.
In the last few years,  questions about the fundamental limits of this task have gained a lot of theoretical attention, in particular, how many copies of an $n$-qubit quantum state~$\rho$ are necessary and sufficient to estimate the density matrix $\rho$ up to small error? In this direction, recent breakthrough results of~\cite{o2016efficient,DBLP:conf/stoc/ODonnellW17,haah2017sample} showed that ${\Theta}(2^{2n}/\varepsilon^2)$ copies of $\rho$ are \emph{necessary and sufficient} to learn~$\rho$ up to trace distance~$\varepsilon$. Unfortunately, this exponential scaling in complexity is reflected in practical applications of tomography;  the best known experimental implementation of full-state quantum tomography has been for a 10-qubit quantum state~\cite{song201710}. Moving beyond 10-qubits is also hard since the theoretical guarantees of tomography require millions of copies of the unknown state~\cite{o2016efficient,DBLP:conf/stoc/ODonnellW17,haah2017sample} in order to {\em fully} characterize it -- a formidable tax on resources.

This raises the natural question: is it always necessary for experimental purposes to estimate the {\em full} density matrix of $\rho$? Rather than learning $\rho$ up to trace distance $\varepsilon$, do there exist weaker but still practically useful learning goals, which would enable savings in sample complexity?  These questions have turned the attention to  `essential' models of learning, which aim to learn only the useful properties of a unknown quantum state. In this direction, a few works have introduced models for learning quantum states, inspired by classical computational learning theory. In this paper, we show (information-theoretic) implications between these seemingly different quantum learning models.

\subsection{Background: Models of interest}
 To explain our main results, we start by introducing some learning models of interest.  Below we describe  the PAC learning model, online learning model, learning under differential privacy constraints and one-way communication complexity for learning quantum~states. We formally define these models in the Section~\ref{subsec:learning_models}. 
\vspace{1mm}

\noindent\textbf{PAC learning.} Probably Approximately Correct ($\PAC$) learning, introduced by~\cite{valiant1984theory}, lays the foundation for computational learning theory.~\cite{aaronson2007learnability} considered the natural analog of learning quantum states in the $\PAC$ model. In this model, let $\rho\in \calC$ be an unknown quantum state (picked from a \emph{known} \emph{concept class}~$\calC$ of states) and let $D:\E\rightarrow [0,1]$ be an \emph{arbitrary} unknown distribution over all possible $2$-outcome measurements $E$. Suppose a quantum learner obtains \emph{training examples} $(E_i,\Tr(\rho E_i))$ where $E_i$ is drawn from $D$, and the goal is to output~$\sigma$ such that with probability at least $0.99$, $\sigma$ satisfies $\Pr_{E\sim D}[|\Tr(\sigma E)-\Tr(\rho E)|\leq \zeta]\geq 1-\alpha$ (this second probability is over a fresh example from $D$). How many training examples suffice for such a $(\zeta,\alpha)$-$\PAC$ learner? In answer,~\cite{aaronson2007learnability} showed that the number of examples necessary and sufficient to learn~$\calC$ is captured by the \emph{fat-shattering dimension} of $\calC$.

\vspace{1mm}

\noindent\textbf{PAC learning with Differential privacy.}  A well-studied area of computer science is differential privacy ($\DP$) (which says that an algorithm should behave ``approximately" the same given two datasets that differ in one element). This notion can be extended to the quantum realm, where we ask that the quantum $\PAC$ learner proposed above is also \emph{differentially private}, wherein given two datasets $S=\{(E_i,\Tr(\rho E_i))\}_i$,  $S'=\{(E'_i,\Tr(\rho E'_i))\}_i$ such that there exists a unique $i$ such that $E_i \neq E'_i$,\footnote{In Boolean $\DP$, one allows the examples to be the same and labels be different for neighboring databases. Since we look at  real-valued $\DP$, we consider the case when the examples are different.} then  a quantum  $(\gamma,\delta)$-$\DP$ $\PAC$ learning algorithm needs to satisfy
$
\Pr[\calA(S)=\sigma]\leq e^{\gamma} \Pr[\calA(S')=\sigma]+\delta,
$
where $\calA(S)$ is the output of~$\calA$ on input $S$.\footnote{Our notion of $\DP$ differs from the notion of $\DP$ proposed by~\cite{aaronson2019gentle}. They consider \emph{$\DP$ measurements} with respect to a class of \emph{product states}, whereas here we require $\DP$ with respect to the dataset $\{(E_i,\Tr(E_i\rho)\}_i$, which naturally quantizes the classical  definition of differential privacy.}
\vspace{1mm}

\noindent\textbf{Communication complexity.} Consider the standard one-way communication model between Alice and Bob. Suppose Alice has a quantum state $\rho$ (unknown to Bob) and Bob has an unknown (to Alice) measurement $E$. The goal of Bob is to output an approximation of $\Tr(\rho E)$ if only Alice is allowed to communicate to Bob. A trivial strategy for this communication task is for Alice to send a classical description of $\rho$, but can we do better? If so, how many bits of communication suffice for this task? 
\vspace{1mm}

\noindent\textbf{Online learning.} Several features of the $\PAC$ quantum learning model and tomography are somewhat artificial: first, the assumption that the measurements (training examples) are drawn from the same unknown distribution $D$ that the learner will be evaluated on, which does not account for adversarial or changing environments, and secondly, it may be infeasible to possess $T$-fold tensor copies of the unknown quantum state $\rho$, rather we may only be able to obtain sequential copies of it. The quantum online learning model addresses these aspects. Online learning consists of repeating the following rounds of interaction: for an unknown state $\rho$, at every round a  maintains a local $\sigma$ which is its guess of $\rho$, obtains a description of measurement operator $E_i$ (possibly adversarially) and predicts the value of $y_i=\Tr(\rho E_i)$. Subsequently it  receives as feedback an $\varepsilon$-approximation of $y_i$. On every round, if the learner's prediction satisfies $|\Tr(\sigma E_i)- y_i|\leq \varepsilon$ then it is \emph{correct}, otherwise it has made a \emph{mistake}. The goal of the learner is the following: minimize $m$ so that after making $m$ mistakes (not necessarily consecutively), it makes a correct prediction (i.e., approximates $\Tr(\cdot \rho)$ well-enough) on \emph{all} future rounds. 

Importantly, while the goal in the above model is to make real-valued predictions $y_i$, it departs from the real-valued online learning literature in allowing for $\eps$-imprecision in the feedback. This imprecision is inherent to all learning settings where the feedback is generated by a statistical algorithm or physical measurements (in the quantum learning setting, the feedback arises from processing the outcomes of quantum measurements), and this generalization has non-trivial implications, as we show. Working in this model,~\cite{aaronson2018online} showed that for learning the class of all quantum states, it suffices to let $m$ be at most  \emph{sequential fat-shattering dimension} of $\Cc$ (a combinatorial parameter which was originally introduced in the classical work of~\cite{rakhlin2010online}).
\vspace{1mm}

\noindent All these  learning models can be seen as variants of full-state tomography, and are known to require \emph{exponentially} fewer resources than tomography. A natural question is:
\begin{center}
    \emph{Is there a relation between these {learning} models, {communication} and {combinatorial}~parameters?}
\end{center}

Understanding this question classically in the context of Boolean functions has received tremendous attention in computational learning theory and theoretical computer science in the last two years. There have been a flurry of papers establishing various connections~\cite{bun2020equivalence,jung2020equivalence,ghazi:sampleproperPAC,alon2019private,bousquet2019passing,bun2020computational,gonen2019private,alon2020closure,haghtalab2020smoothed}. However  understanding if the results in these papers apply to the quantum framework has remained unexplored.

\subsection{Overview of main results}

To condense our (affirmative) answer to the question above, we derive a series of implications going through all these models, starting from differentially-private $\PAC$ learning to online learning to quantum stability (our conceptual contribution which we define and discuss below).

  {\renewcommand{\arraystretch}{1.4} 
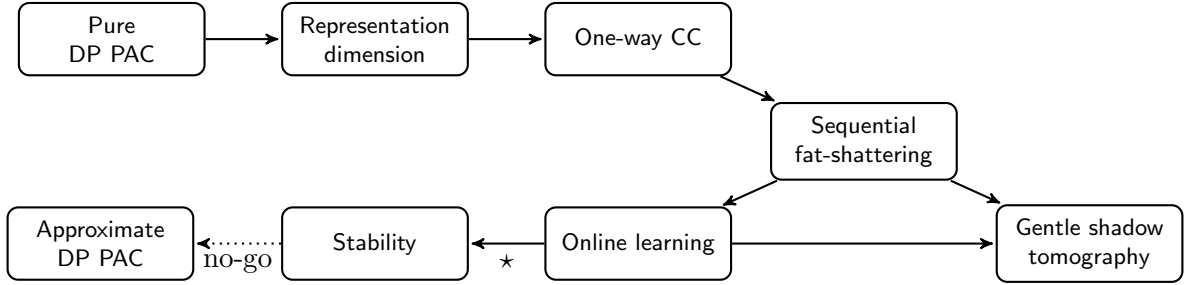
\begin{figure*}[!ht]
\centering
\begin{tikzpicture}
[->,>=stealth',shorten >=1pt,auto,  thick,yscale=0.8,
main node/.style={circle,draw}, node distance = 1cm and 1.8cm,
block/.style   ={rectangle, draw, text width=5.5em, text centered, rounded corners, minimum height=2.5em, fill=white, align=center, font={\footnotesize}, inner sep=5pt}]
        \node[main node,block,] (PureDPPAC) at (-4,2.2) {\textsf{Pure\\ DP PAC}};
    \node[main node,block] (Prdim) at (-0.5,2.2) {\textsf{Representation dimension}};
    \node[main node,block] (Roneway) at (3,2.2) {\textsf{One-way CC}};
   \node[main node,block] (sfat) at (6,0.5) {\textsf{Sequential fat-shattering}};
  \node[main node,block] (online) at (3,-1.2) {\textsf{Online learning}};
  \node[main node,block] (stability) at (-0.5,-1.2) {\textsf{Stability}};
  \node[main node,block] (Appdim) at (-4.15,-1.2) {\textsf{Approximate DP PAC}};
    \node[main node,block] (shadow) at (9,-1.2) {\textsf{Gentle shadow tomography}};
    \path [->] (PureDPPAC) edge node {} (Prdim);
    \path [->](Prdim) edge node {} (Roneway);
    \path [->](Roneway) edge node {} (sfat);
    \path [->](sfat) edge node {} (online);
    \path [->](sfat) edge node {} (shadow);
    \path [->](online) edge node {} (shadow);
    \path [->](online) edge node {$\star$} (stability);
   \path [dotted,->](stability) edge node {no-go} (Appdim);
\end{tikzpicture}
    \caption{High-level summary of results relating models of learning. These results apply to the setting of learning real-valued classes and quantum states with imprecise feedback. Except for the $\star$-arrow, an arrow $\textsf{A}\rightarrow \textsf{B}$ in the figure implies that, if the sample complexity of learning in model $\textsf{A}$ or the combinatorial parameter $\textsf{A}$ is~$S_{\textsf{A}}$, then the complexity of learning in model $\textsf{B}$ or the combinatorial parameter $\textsf{B}$ is $S_{\textsf{B}}=\poly(S_{\textsf{A}})$. For the $\star$-arrow, the overhead is $S_{\textsf{B}}=\exp(S_{\textsf{A}})$. The dotted arrow signifies that a technique used to prove that corresponding implication for Boolean function classes is a no-go for our quantum learning setting.
    }
    \label{fig:Dependencies}
\end{figure*}}

Taking a step back, this is surprising:  quantum online learning and $\DP$ $\PAC$ quantum learning seem very different on the surface. Online learning ensures that {\em eventually}, after a certain number of  mistakes, we have learned the state up to trace distance. $\DP$ $\PAC$ learning is {\em not} online -- it separates the learning into train (offline) and test (online) phases, and also introduces a distribution, $D$, from which measurements are drawn.  Ultimately $\DP$ $\PAC$ learning says that after seeing $T$ measurements from $D$, we have (privately) learned the state.\footnote{At least, well-enough to predict its behavior on future measurements from $D$ with high probability.} We show that in fact $\DP$ $\PAC$ learning's sample complexity can be lower bounded by the {\em sequential fat-shattering dimension}
which also characterizes the complexity of online learning \cite{rakhlin2010online}. We give a high-level summary of our results in Figure~\ref{fig:Dependencies} (we say an algorithm is \emph{pure} $\DP$ (resp.~\emph{approximate} $\DP$) when $\delta=0$ in our definition (resp.~$\delta >0$). We remark that only a few of the arrows are efficient in both sample and time complexity, otherwise these implications are primarily information-theoretic. 

\noindent While some of these implications are known classically (for Boolean functions), our quantum learning is concerned with learning real-valued functions with \emph{imprecise} feedback. This difference has non-trivial consequences (which we highlight later), one of which is that a technique due to~\cite{bun2020equivalence} showing stability implies approximate $\DP$ $\PAC$ for Boolean functions, is a no-go in our~setting, as we show later.

\paragraph{Conceptual contribution.} The main center piece in establishing these connections is the concept of \emph{quantum stability}, which is the new conceptual contribution in this work. Intuitively, we say a quantum learning algorithm is stable if,  for an unknown state $\rho$, given a set of noisy labelled examples drawn i.i.d.~from a distribution $D$, there exists one state $\sigma$ such that, with ``high" probability, the output of the learning algorithm is ``close" to $\sigma$. More formally, we say a learning algorithm~$\calA$ is $(T,\varepsilon,\eta)$-\emph{stable} with respect to distribution $D$ if, given $T$ many labelled examples $S$ consisting of $E_i$ and approximations of $\Tr(\rho E_i)$, there exists a state $\sigma$ such~that 
\begin{equation}\label{eq:stable}
\Pr[\calA(S) \in \calB(\varepsilon, \sigma)]\geq \eta,
\end{equation}
where the probability is taken over the examples in $S$ and $\calB(\varepsilon,\sigma)$ is the ball of states within trace distance $\varepsilon$ of $\sigma$. In other words, quantum stability means that up to an $\eps$-distance, there is some $\sigma$ that is output by $\calA$ with probability at least $\eta$. 

While we will make this precise later, the significance of an algorithm $\calA$ being {\em stable} is that~$\sigma$, the output state at the `center of the ball', is a good hypothesis for estimating measurement probabilities (and hence $\calA$ is a good learner). This is not at all obvious from the definition of stability, which does not inherently require that this $\sigma$ is a good approximation of $\rho$. Yet, we show that if $\mathcal{A}$ is a stable and consistent learner (i.e., its output does not contradict any of the training examples it has seen), $\sigma$ has low loss with respect to $D$. This means that, using hypothesis $\sigma$ to predict outcomes of future measurements drawn from distribution $D$ as $\Tr(E\sigma)$ will yield $\eps$-accurate predictions with high~probability. 

Classically, stability is conceptually linked to differential privacy. In fact,~\cite{dwork2014} state that ``\emph{Differential privacy is enabled by stability and ensures stability...we observe a tantalizing moral equivalence between learnability, differential privacy, and stability,}" and this notion was crucially~used in~\cite{bun2020equivalence,alon2019private,abernethy2019online,bousquet2019passing}.  Such a connection has remained unexplored (and even undefined) in the quantum setting and in this work we explores this interplay between privacy, stability and quantum~learning. Our definition of stability marks a crucial departure from the classical notion of stability used by~\cite{bun2020equivalence}, in the following sense: a classical learning algorithm is stable if a \emph{single} function is output by the algorithm with high probability. In contrast we say that a quantum learning algorithm is stable if a \emph{collection} of quantum states is output with high probability. Given the significance of the notion of stability in classical $\DP$ research, we believe our definition could find further applications in quantum computing.

\subsection{Proof techniques and further contributions}
We break down the proof of the arrows in Figure~\ref{fig:Dependencies} into four steps and discuss them below. 

\paragraph{1. Pure $\DP$ $\PAC$ implies finite sequential fat-shattering dimension.}  It is well-known classically that if there is a $\DP$ $\PAC$ learning algorithm for a class $\calC$ then the \emph{representation dimension} of the class is small (we define this dimension formally in Definitions~\ref{def:DRD},~\ref{def:PRD}). Representation dimension is then known to upper bound classical communication complexity as well as a combinatorial dimension of the concept class known as the {\em sequential fat-shattering dimension} $\sfat(\calC)$. All of the above connections are classical, but we show that they can be ported to learning classes of quantum states. 
To do so, we make all implications mentioned above robust to our `quantum' version of $\DP$  $\PAC$ learning, that is for real-valued functions and with adversarial imprecision. One particular contribution in this direction is: prior to our work, Feldman and Xiao~\cite{feldman2014sample} showed that representation dimension is only a lower bound for one-way \emph{classical} communication complexity, but we show that this dimension even lower bounds \emph{quantum} communication complexity. For the case of \emph{Boolean} valued concept classes,  Zhang~\cite{Zhang2011OnTP} proved a weaker version of our result relating Littlestone dimension and communication complexity, and the  proof of our main result easily recovers his result, significantly simplifies and extends his proof. 

For the remaining part of the introduction, we slightly abuse notation: for a class of quantum states $\Cc$, we define $\sfat(\Cc)$ as the sequential fat shattering dimension, not of the class $\Cc$, but of an associated class of functions acting on the domain $\M$ of all possible $2$-outcome measurements on states in $\Cc$. To be precise, for every $\Cc$, we associate the real-valued concept class $\Fe_\Cc=\{f_\rho: \M\rightarrow [0,1]\}_{\rho \in \Cc}$ where $f_\rho(E)=\Tr(E\rho)$ for all $E\in \M$. The $\zeta$-\emph{sequential fat shattering dimension} (denoted $\sfat_\zeta(\cdot)$) of the class of states $\Cc$  is defined in terms of the class of real-valued functions $\Fe_\Cc$, i.e., $\sfat_\zeta(\Cc):=\sfat_\zeta(\Fe_\Cc)$ (we define this combinatorial parameter more formally in Section~\ref{sec:prelim}).

\paragraph{2. Finite $\sfat(\cdot)$ implies online learning.} 
In the second step, the goal is to go from a concept class $\calC$ having finite $\sfat(\calC)$ to design an online learning algorithm for $\calC$ that makes at most $\sfat(\calC)$ mistakes.  In this direction, one of our technical contributions is to construct a \emph{robust} standard optimal algorithm (denoted $\RSOA$) which satisfies this mistake-bound. This $\RSOA$ algorithm summarized in the result below will be crucial for the following steps. 
\begin{result}[Informal]
\label{thm:rsoaresult}
Let $\Cc$ be a  class of quantum states with $\sfat_\zeta(\Cc)=d$.  There is an explicit robust standard optimal algorithm $\RSOA$ that makes at most $d$ mistakes in online learning~$\Cc$. 
\end{result}
We now make a few comments regarding this result. Classically, for the Boolean setting, it is well-known that the so-called Standard Optimal Algorithm is an online learner for \emph{any} concept class $\Cc$, that makes at most Littlestone dimension of $\Cc$-many mistakes~\cite{littlestone}. Eventually, Rakhlin et al.~\cite{rakhlin2010online} generalized the work of Littlestone for real-valued functions, showing that real-valued concept classes can be learned using their $\textsf{FAT-SOA}$ algorithm, with at most $\sfat(\Cc)$ many~mistakes. Now, our $\RSOA$ algorithm generalizes this, showing that real-valued concept classes can be learned with $\sfat_{\zeta}(\Cc)$ many mistakes, even in the presence of adversarial imprecision of magnitude $\zeta$ (which is the case for quantum learning).

The following basic principle underlies our $\RSOA$ algorithm and previous algorithms: after every round during which the learner obtains an $x$ and $\zeta$-approximation of $c(x)$, eliminate the concepts in~$\Cc$ that are ``inconsistent" with the adversary's feedback. In more detail: first, the learner discretizes the function range $[0,1]$ into $1/\zeta$-many $\zeta$-sized bins. In the first round of learning, upon receiving the domain point $x$, the learner evaluates all the functions in $\Cc$ at $x$ and `counts' (using $\sfat(\cdot)$ dimension as a proxy) the number of functions mapping to each bin, and chooses the bin (and outputs the midpoint of this bin as its guess for $c(x)$)  with the highest $\sfat(\cdot)$ dimension.  After the learner obtains a $\zeta$-approximation of $c(x)$, it removes those functions in $\Cc$ that were inconsistent with this approximation and then proceeds to the next round. After $\sfat(\Cc)$ many mistakes, one can show that the learner identifies the unknown~concept.  We remark that in prior works, it was assume that the learner obtained $c(x)$ exactly, whereas here it only receives $\zeta$-approximations. This robustness property allows us to use $\RSOA$ in the context of learning quantum states, where typically, the feedback is generated by measuring $E$ repeatedly on copies of the quantum state $\rho$, which will provides a $\zeta$-approximation of $\Tr(\rho E)$.

Prior to our work, Aaronson et  al.~\cite{aaronson2018online} showed that $\sfat(\cdot)$ of all $n$-qubit quantum states is $n$, which implies the \emph{existence} of a quantum online learning algorithm for the class of all quantum states that makes at most $n$ mistakes. However, their focus was on quantum online learning with {\em regret} bounds, and so they never provided an explicit algorithm that achieves the $\sfat(\cdot)$ mistake bound, and raised this as an open question. Our Result~\ref{thm:rsoaresult} resolves their question, by showing that our $\RSOA$ algorithm can online learn a class of quantum states $\Cc$ by making $\sfat(\Cc)$ many~mistakes.

\paragraph{3. Online learning implies stability.}

We now show that if a concept class  $\Cc$ satisfies $\sfat(\Cc)=d$ (i.e.,      it can be online-learned with $d$ many mistakes), then $\Cc$  can be learned by a globally-stable quantum algorithm with parameters $(\varepsilon^{-d},\varepsilon,\varepsilon^d)$, i.e.,      there exists a stable learner that, given $\varepsilon^{-d}$ many examples $(E_i,\Tr(\rho E_i))$,  with probability at least $\varepsilon^d$, outputs a state $\sigma$ that is $\varepsilon$ close to the unknown target state $\rho$.

\begin{result}
\label{res:global_stability}
Let $\Cc$ be a class of quantum states with $\sfat_{\zeta}(\Cc) = d$. There exists an algorithm~$\G$ that satisfies the following: for every $\rho\in \Cc$, given $T=\zeta^{-d}/\varepsilon$ many labelled examples $S$ consisting of $E_i$ drawn from a distribution $D$ over a set of orthonormal\footnote{For simplicity, we have required that the measurements are drawn from an orthonormal set. This is only necessary if we require the algorithm $G$ to be a proper learner, that is, its output function $f$ is guaranteed to be such that one can always construct a density matrix $\sigma$  for which $\Tr(\sigma M)=f(M)$ for all $M\in \M$.} $2$-outcome measurements and $\zeta$-approximations of $\Tr(\rho E_i)$, there exists a $\sigma$ such that $\Pr_{S\sim D^T}[\G(S) \in \B_{\M}(\zeta,\sigma)] \geq \zeta^d$ and $\Pr_{E\sim D}\big[|\Tr(\rho E)-\Tr(\sigma E)|\leq \zeta\big] \geq 1-\varepsilon$.
\end{result}

In order to prove this theorem we borrow the high-level idea from~\cite{bun2020equivalence} (for the case of Boolean functions). Like \cite{jung2020equivalence} which studied the case of online multi-class regression, we borrow the high-level idea from Bun et al.~\cite{bun2020equivalence} (originally developed for the case of Boolean functions) to construct our stable learner: we sample many labelled examples from the distribution $D$ and instead of feeding these examples directly to the black-box $\RSOA$, we plant amongst them some ``mistake examples" before giving the processed sample to $\RSOA$. A ``mistake example" is an example which is correctly labelled, but on which $\RSOA$ would make the wrong prediction. That is to say, from a large pool of $T = \zeta^{-d}$ examples drawn from $D$, craft a short sequence of $O(1/\zeta)$ examples that include at most $d$ mistake examples; now feed the short sequence into $\RSOA$. This works, because $\RSOA$ satisfies the guarantee (Result~\ref{thm:rsoaresult}) that after making $d = \sfat_\zeta(\Cc)$ mistakes, it would have completely identified the target concept.

We proceed similarly but tackle some subtleties related to our learning setting. First, before our work we didn't have an $\RSOA$ algorithm which could be used as a black-box in order to emulate the proof-technique of~\cite{bun2020equivalence} for the case of quantum states. Second, our technique for creating ``mistake examples" differs from that of~\cite{bun2020equivalence}. In the Boolean case, to insert a mistake, it suffices to do the following: suppose $c:\mathcal{X} \rightarrow \{0,1\}$ is the unknown target function, they take two candidate set of examples $S_1, S_2$, run two parallel runs of $\RSOA$ on $S_1,S_2$, and obtain two output hypothesis functions $f_1, f_2$. They then identify a point in the domain $x$ at which $f_1(x) \neq f_2(x)$; since they are Boolean functions, one must evaluate to $c(x)$ and the other to $\overline{c(x)}$. Say $f_1(x) = c(x)$ and $f_2(x) = \overline{c(x)}$, i.e., the hypothesis $f_2$ makes a \emph{mistake} at $x$. They append a ``mistake" example of the form $(x,c(x))$ to $S_2$, so that when $\RSOA$ is now run on $S_2 \circ (x,c(x))$, $\RSOA$ is forced to make a \emph{new} mistake on this new set of examples. A subtlety here is, a learner does not know $c(x)$ and~\cite{bun2020equivalence} simply flip a coin $b\in \01$ and let the mistake example be  $(x,b)$ so that with probability $1/2$, $b = c(x)$. For us this does not work because our target function is real-valued, i.e., $c(x)\in [0,1]$. Instead, we discretize  $[0,1]$ into $1/\zeta$ many $\zeta$-intervals, pick a uniformly random interval and let $b$ be the center of this interval.  Clearly now, with probability $1/\zeta$, $c(x)$ lies in the $\zeta$-ball around $b$, but still is not  \emph{equal} to $b$. The construction of our quantum stable learner and the analysis is more  involved to overcome this issue and errors in the adversary feedback.

\paragraph{4. Stability  does not imply approximate $\DP$ $\PAC$ (without  a domain size dependence).} 
So far we showed that quantum online learning $\Cc$ implies the existence of a globally stable learner (with appropriate parameters) for $\Cc$.  For Boolean-valued $\Cc$s, in~\cite{bun2020equivalence}  they went one step further and created a \emph{approximately differentially-private} learner from a stable learner; in this sense, stability can be viewed as an intermediate property between online learnability and differential privacy. A natural question here is, can we extend this result to our setting, i.e., we showed earlier quantum online learning implies stability, but does quantum stability imply quantum differential privacy? If such a result also worked for the quantum setting (or real-valued functions), then Figure~\ref{fig:Dependencies} would start and end with differential privacy (albeit starting from pure $\DP$ and resulting in approximate $\DP$) and answer the question ``what can be privately quantum-learned" (akin to the classical work of~\cite{kasiviswanathan2011can}).

In this work, we show that, one cannot go from a stable learner to a differentially private learner without a domain-size dependence. 
First observe that our ``stability" guarantees on $\G$ (Result~\ref{res:global_stability}) are somewhat unusual: there exists some {\em function ball} (around the target concept) such that the collective probability of $\G$ outputting its member functions is high, in contrast to the Boolean setting~\cite{bun2020equivalence}, where global stability means that a {\em single} function is output with high probability. Again, this difference in our definition of global stability is because we only require that our real learner outputs a pointwise approximation of the target function $c$ -- namely one that is in the ball of $c$.    In the Boolean setting, to convert a stable learner to a private learner,~\cite{bun2020equivalence} used  stable histograms algorithm~\cite{bun2019simultaneous} and the generic private learner~\ref{lem:GL} and obtained a private learner with sample complexity depending on on $\Ldim(\Cc)$, the privacy, accuracy parameters of the stable learner, but \emph{not} the domain size of the function class. 

 Now, in our quantum setting since the learner only obtains an $\eps$-accurate feedback from the adversary, we  allow the learner to output a function in the $\varepsilon$-ball around the target concept. We also show that the generic transformation from a stable learner to a private learner doesn't work for the real-valued setting (in particular also quantum setting). The idea to show this is that this problem is a general case of the {\em one-way marginal release} problem, whose complexity needs to depend on the domain size. In particular, for learning quantum states on the Pauli observables on $n$ qubits, the sample complexity of the $\DP$ $\PAC$ learner will depend exponentially in $n$. The proof of this lower bound uses ideas from classical fingerprinting codes~\cite{bun2018fingerprinting} (which were also used earlier by Aaronson and Rothblum~\cite{aaronson2019gentle} in order to give lower bounds on gentle shadow~tomography).\footnote{This argument was communicated to us by Mark Bun~\cite{bununpublished}.}

\paragraph{Comparison to prior work~\cite{jung2020equivalence}.} After completion of this work, we were made aware by an anonymous referee of the paper by Jung, Kim and Tewari~\cite{jung2020equivalence} that extends the work of Bun et al.~\cite{bun2020equivalence} to multi-class functions (i.e., when the concept class to be learned $\Cc$ maps to a discrete set $\{1,\ldots,k\}$). They claim that their results also apply to real-valued learning by discretizing the range of the functions (we couldn’t find a
version of the paper that spells out the proof that online learnability implies a stable real-valued learner, but this seems implicit from their proofs). Despite this similarity, our quantum learning setting and resulting analysis differs from theirs in several crucial ways, which we now outline. 

Firstly,~\cite{jung2020equivalence}'s notion of stability for learning real-valued functions resembles our definition, \emph{however} in order to prove that online learnability implies stability, they crucially rely on a {\em modified} Littlestone dimension. In this work, we use the standard notion of $\sfat(\cdot)$ -- which we also bound in the case of quantum states -- and still show this implication. Secondly, for both $\PAC$ learning and online learning settings,~\cite{jung2020equivalence} assume that the feedback received by the learner is {\em exact}, i.e., for online learning, on input $x$, the adversary produces $c(x)\in [0,1]$ and for $\PAC$ learning, the examples are of the form $(x, c(x))$. By contrast, in this work, we only assume that the feedback in all learning models we consider (which includes both these settings) is a $\varepsilon$-approximation of $c(x)$. This generalizes the previous settings and arises from the fact that, in quantum learning, the feedback comes from some quantum estimation process or quantum measurement. Thus, all implications proven in this work are robust to such adversarial imprecision. This imprecision crucially bars the usage of \cite{bun2020equivalence}'s technique, developed for Boolean functions as a black-box, to conclude approximate $\DP$ $\PAC$ learning from stable learning.

\subsection{Further applications}
\label{sec:furtherapplications}
We now highlight a few applications of the consequences we established above. 

\textbf{\emph{1}. Faster shadow tomography for classes of states.}  Aaronson~\cite{aaronson:shadow} introduced a learning model called \emph{shadow tomography}. Here, the goal is to learn the ``shadows" of an unknown quantum state $\rho$, i.e.,  given $m$ measurements $E_1,\ldots,E_m$, how many copies of $\rho$ suffice to estimate $\Tr(\rho E_i)$ for all $i\in [m]$. Aaronson surprisingly showed that $O(n,\log m)$ copies of $\rho$ suffice for this task, and an important open problem was (and remains) can we get rid of the $n$ dependence in the complexity (even for a class of interesting quantum states)\footnote{We remark that Aaronson's model~\cite{aaronson:shadow} is not concerned with specific classes of quantum states, and instead considers learnability of an arbitrary quantum state. Nevertheless it is often reasonable to assume we have some prior information on the state to be learned, which means that it comes from a smaller class.}? Subsequently, in a recent work, Aaronson and Rothblum~\cite{aaronson2019gentle} also showed that learnability in the \emph{online} setting can be translated to algorithms for \emph{gentle} shadow tomography (in an almost black-box fashion). 
In this work, we use our results on quantum online learning and ideas in~\cite{aaronson2019gentle} to show that, the complexity of shadow tomography (assuming that the unknown state $\rho$ comes from a set $\U$) can be made $O(\sfat(\U),\log m)$.

\textbf{\emph{2}. A better bound on $\sfat(\cdot)$.} Let $\U_n$ be the class of \emph{all} $n$-qubit states.  As we mentioned earlier, Aaronson et al.~\cite{aaronson2018online} showed that $\sfat(\U_n)$ is at most $O(n)$, but clearly for a subset $\U\subseteq \U_n$ of quantum states it is possible that $\sfat(\U)\ll \sfat(\U_n)$. In this direction, using techniques from quantum random access codes (which was also used before in the works of~\cite{aaronson2007learnability,aaronson2018online,nayak,ambainis2002dense}) we first give a general upper bound on the sequential fat shattering dimension of a class of quantum states in terms of Holevo information of an ensemble. An immediate consequence of this result is a class of states for which $\sfat(\cdot)$ is much smaller than $n$. Consider the set $\U$ of ``$k$-juntas",\footnote{$k$-juntas are well-studied in computational learning theory, wherein a Boolean function on $n$ bits is a $k$-junta if it depends on an \emph{unknown} subset of $k$ input bits.} i.e.,      each $n$-qubit state lives in the same \emph{unknown} $k$-dimensional subspace. In this case it is not hard to see that Holevo information of this ensemble is at most $\log k$, which improves upon the trivial upper bound of $n$ on $\sfat(\U)$. We discuss more such classes of states~below.

\textbf{\emph{3}. Relations to Shannon theory.}  Another intriguing connection we develop in this work is between quantum learning theory and Shannon theory. This connection is already implicit from the previous point since it relates the sequential fat shattering dimension (a well-studied notion in learning theory) with Holevo information of an ensemble (which is well-studied in quantum information theory). We now establish the following: let $\U$ again be a class of quantum states and let $\mathcal{N}$ be a quantum channel. Let $\U'=\{\mathcal{N}(U):U\in \U\}$ 
be the set of states obtained after passing through quantum channel $\mathcal{N}$. Suppose the goal is to learn $\mathcal{U}'$ (i.e.,      to learn a class of states that have passed through a noisy channel $\mathcal{N}$, for example if the state preparation channel is noisy). This connects to the question of experimental learning of quantum states, i.e., can we learn states prepared using a noisy quantum device and as a by product learn the unknown noise in the quantum device.

In this case we show that $\sfat(\U') \leq C(\mathcal{N})$, i.e., the sequential fat shattering dimension is upper bounded by the classical capacity of $\mathcal{N}$. Since we have shown that $\sfat(\cdot)$ is an important parameter in many learning models, this immediately implies that the complexity of learning the class of states $\U'$ is at most the channel capacity of $\mathcal{N}$. We now give a few consequences of this result. Consider states subject to depolarizing and Pauli noise, two commonly-used noise models. For these $n$-qubit channels, channel capacity is $n-\Delta$ where $\Delta$ is an error term depending on the channel parameters~\cite{king,siudzinska2019regularized,siudzinska2020classical}. 
Hence, for extremely noisy channels, for example when $\Delta=n-o(n)$, our new bound on $\sfat(\cdot)$ is $o(n)$ which is a significant improvement over $n$. We also consider quantum learning of Gaussian states. Let $S$ be the set of Gaussian states with finite average energy $E$. Considering the pure-loss Bosonic channel with transmissivity $1$, this enables us to bound $\sfat(S) \leq O(\log E)$~\cite{giovanetti2004,giovannetti_2014}. 
Observe that previous bounds would yield $\sfat(S)<\infty$, since these states are infinite-dimensional. As far as we are aware, this is the first work to consider learnability of continuous-variable states. These connections give the Shannon-theoretic notions of channel capacity and Holevo information, a learning-theoretic interpretation, and our results can be seen as an important interdisciplinary bridge between these~fields. 

\textbf{\emph{4}. Classical contribution.} Although our main results above have been stated in terms of learning quantum states, our explicit theorem statements below are in terms of learning real-valued functions with imprecise feedback on the examples. As far as we are aware, even \emph{classically} establishing equivalences between online learning, stability and approximate differential privacy for \emph{real-valued} functions with {\em precise} feedback was only recently explored in the work of~\cite{jung2020equivalence} (which we compare against our work in the previous section), and in our work we look at imprecise feedback. Indeed, learning $n$-qubit quantum states over an orthogonal basis of $n$-qubit quantum measurements, $\mathcal{M}$, is equivalent to learning -- with imprecise adversarial feedback -- an arbitrary real-valued function in the class $\mathcal{D} = \{f:\mathcal{X} \rightarrow [0,1]\}$, for $\mathcal{X} = \mathcal{M}$: there is a one-to-one mapping between the set of all quantum states and real-valued functions on $\mathcal{M}$, i.e., for every $\sigma$, one can clearly associate a function $f_\sigma:\M\rightarrow [0,1]$ defined as $f_\sigma(M)=\Tr(M\sigma)$ and for the converse direction, given an arbitrary $c:\M\rightarrow [0,1]$, one can find a density matrix $\sigma$ for which $c(M)=\Tr(M\sigma)$ for all $M\in \M$ (and this uses the orthogonality of $\M$ crucially). Hence, if one can learn $\mathcal{D}$ when we fix the $\X$ to be over an arbitrary orthogonal basis of $2$-outcome measurements then one can learn the class of quantum states~$\Cc$, and the converse is also true. All our main theorems are stated for the general case of learning $\mathcal{D}$ for arbitrary~$\X$. 

\paragraph{Open questions.} 
We now conclude with a few concrete questions. 
\begin{enumerate}
    \item In this work we work in the $\PAC$ setting where there is an unknown concept from the class labelling the training set. Do all these equivalences also work in the \emph{agnostic} setting where there might \emph{not} be a true concept labelling the training data? The agnostic model of learning is a way to model noise which is relevant when experimentally  learning quantum states.
     \item  In a very recent work, Ghazi et al.~\cite{ghazi:sampleproperPAC} improved upon the result of Bun et al.~\cite{bun2020equivalence} by showing that a \emph{polynomial} blow-up in sample complexity suffices in going from online learning to differential privacy, which is exponentially better than the result of Bun et al.~\cite{bun2020equivalence}. Can we improve the complexity in this work using techniques from Ghazi et al.~\cite{ghazi:sampleproperPAC}?
    \item Bun showed~\cite{bun2020computational} that  the equivalence between private learning and online learning cannot be made computationally efficient (even with polynomial sample complexity) assuming the existence of one-way functions, do these also extend to the quantum setting? 
    \item  Our work shows interesting classes of states for which we can improve the complexity of gentle shadow tomography. Could we make an analogous statement for the recent improved shadow tomography procedure of~\cite{buadescu2020improved}?   Furthermore, could we further get rid of the $\sfat(\cdot)$ dependence in the sample complexity of shadow tomography\footnote{We remark that in our setting, if $\Cc$ is the class of all quantum states then $\sfat(\Cc)=n$, so we get the same complexity as Aaronson~\cite{aaronson:shadow,aaronson2019gentle}.}?
    Additionally, can we also improve standard tomography problem? 
    \item Our $\RSOA$ algorithm is time-inefficient since it compute $\sfat(\cdot)$ of arbitrary classes of states. Is there a time-efficient quantum online learning algorithm for an interesting  clas  of states? 
    \item Is there a quantum algorithm that improves the complexity of $\RSOA$?  For the case of Boolean functions,  Kothari~\cite{kothari:oracle} showed how to use quantum techniques to improve the classical halving algorithm (which is the precursor to the Standard Optimal Algorithm). Can a similar technique be applied to our $\RSOA$ algorithm?
    \item Classically, Kasiviswanathan~\cite{kasiviswanathan2011can} established connections between statistical query learning and local differential privacy. Do these connections extend also to the quantum regime, using the recently defined notion of quantum statistical query learning~\cite{quantumSQ}?
   
\end{enumerate}

\paragraph{Acknowledgements.} We thank Mark Bun for various clarifications and also providing us a proof of Claim~\ref{claim}. SA was partially supported by the IBM Research Frontiers Institute and acknowledges support from the
Army Research Laboratory, the Army Research Office under grant number W911NF-20-1-0014. YQ was supported by the Stanford QFARM fellowship and an NUS Overseas Graduate Scholarship. JS and SA acknowledge support from the IBM Research Frontiers~Institute. 

\section{Preliminaries}
\label{sec:prelim}

\paragraph{Notation.} Throughout this paper we will use the following notation. We let $\X$ be the input domain of real-valued functions (eventually when instantiating to quantum learning, we will let $\X$ be a set of 2-outcome measurements denoted by $\M$). We will let $\Cc$ be a concept class of real valued functions, i.e,. $\Cc\subseteq \{f:\X\rightarrow [0,1]\}$ and let $\Hi$ be a \emph{collection} of concept classes $\Cc$. For a distribution $D:\X\rightarrow [0,1]$,  two functions $h,c:\X\rightarrow [0,1]$ and a distance parameter $r\in [0,1]$, we define loss as
\begin{equation}
    \Loss_{D}(h,c,r) := \Pr_{x \sim D} \big[|h(x)-c(x)| > r\big]. 
\end{equation}

\paragraph{Quantum learning setting.} While we are interested in the quantum learning setting -- learning $n$-qubit quantum states in the class $\U$ over an orthogonal basis of $n$-qubit quantum measurements, $\mathcal{M}$ -- our results apply more generally to learning an arbitrary real-valued function class $\Cc= \{f:\mathcal{X} \rightarrow [0,1]\}$ with imprecise adversarial feedback. Therefore the learning models we introduce, and our theorems in the rest of this paper, will be for the more general real-valued~setting.

For $\mathcal{X} = \mathcal{M}$, these two problems are equivalent: there is a one-to-one mapping between the set of all quantum states and real-valued functions on $\mathcal{M}$, i.e., for every $\sigma$, one can clearly associate a function $f_\sigma:\M\rightarrow [0,1]$ defined as $f_\sigma(M)=\Tr(M\sigma)$ and for the converse direction, given an arbitrary $c:\M\rightarrow [0,1]$ which is the learner's hypothesis function, one can find a density matrix $\sigma$ for which $c(M)=\Tr(M\sigma)$ for all $M\in \M$ (and this uses the orthogonality of $\M$ crucially). Hence, if one can learn $\Cc$ for $\X = \M$ then one can learn the class of quantum states~$\U$, and the converse is also true.  When~$\U$ is a subset of the set of all $n$-qubit states, the learner we construct is an improper learner, i.e., it could output a density matrix $\sigma$ not in $\U$, which nevertheless is useful for prediction. If it is not important that the hypothesis function corresponds to an actual density matrix, it is not necessary to restrict the measurements to come from an orthogonal basis.  

\subsection{Learning models of interest}\label{subsec:learning_models}

\paragraph{$\PAC$ learning.}
We first introduce the $\PAC$ learning model for the real-valued concept classes.
\begin{definition}[$\PAC$ learning\label{def:RPAC}]
Let $\alpha,\zeta \in [0,1]$. An algorithm $\A$ $(\zeta,\alpha)$-$\PAC$ learns $\Cc$ with sample complexity $m$ if the following holds: for every $c\in \Cc$, and distribution $D:\X\rightarrow [0,1]$, 
given $m$ labelled examples
$\{(x_i,\widehat{c}(x_i))\}_{i=1}^m$ where each $x_i\sim D$ and $|c(x_i)-\widehat{c}(x_i)|\leq \zeta/5$, then with probability at least $3/4$ (over random examples and randomness of $\A$) outputs a hypothesis $h$ satisfying\footnote{An alternative definition of the $\PAC$ model of learning is the following: a learner obtains $(x_i,b)$ where $b\in \01$ satisfies $\Pr[b=1]=c(x_i)$. Both these models are equivalent up to poly-logarithmic factors.}
\begin{equation}
    \Pr_{y\sim D}\big[|c(y)-h(y)|\geq \zeta\big] \leq \alpha.
\end{equation}
\end{definition}
\noindent We remark that in the definition above, we assume the success probability of the algorithm is  $3/4$ for notational simplicity. With an overhead of $O(\log(1/\beta))$, we can boost $3/4$ to $1-\beta$ using standard techniques as mentioned in~\cite{izdebski2020improved}. 

\paragraph{Online learning}
Let us now introduce the online learning setting in the form of a game between two players: the learner and the adversary. As always, we shall be concerned with learning real-valued concept classes $\Cc := \{f:\mathcal{X} \rightarrow [0,1]\}$ and we let the target function be $c \in \Cc$. In the rest of this paper, we will use the term ``online learning" to refer to {\em improper} online learning, also known in the literature as online {\em prediction}, where the learner's objective is to make predictions for $c(x)$ given some point $x\in \mathcal{X}$, and it may do so using a hypothesis function $f(x)$ not necessarily in $\Cc$. Importantly, we also depart from the real-valued online learning literature in allowing the adversary to be imprecise; that is, for the adversary to respond to the learner with feedback that is $\eps$-away from the true value (this is made more precise below). This generalization allows for the case when the feedback is generated by a randomized algorithm with approximation guarantees, a statistical sample, or a physical measurement. 

The following setting, which we also call the {\em strong feedback} setting, was introduced by~\cite{aaronson2018online} to model online learning of quantum states. A protocol in this setting is a $T$-round procedure: at the $t$-th round,
\begin{enumerate}
    \item Adversary provides input point in the domain: $x_t \in \mathcal{X}$.
    \item Learner has a local prediction function $f_t$ which may not necessarily be in $\Cc$, and 
    predicts $\hat{y}_t = f_t(x_t) \in [0,1]$. 
    \item Adversary provides strong feedback $\widehat{c}(x_{t}) \in [0,1]$ 
    satisfying $|\widehat{c}(x_t)- c(x_t)| < \eps$.
    \item Learner suffers loss $\left|\hat{y}_t - c(x_t)\right|.$
\end{enumerate}
 
At the end of $T$ rounds, the learner has computed a function $f_{T+1}$, which functions as its prediction rule. If the learner is such that $f_{T+1}$ is not guaranteed to be in $\Cc$, we call the learner an `improper learner'. Such a learner can, however, still make predictions $f_{T+1}(x)$ on any given input $x \in \mathcal{X}$. Alternatively, we could also require that the learner be `proper', that is, it must output some $f_{T+1} \in \Cc$. Generally, the goal of the learner is either to make as few prediction mistakes as possible within $T$ rounds (where a `mistake' is defined as $|f_t(x_t)- c(x_t)| > \eps$, to be discussed more below); or to minimize {\em regret} for a given notion of loss, which is the total loss of its predictions compared to the loss of the best possible prediction function that could be found with perfect foresight. The former, `mistake-bound' setting is the one relevant to quantum states, so we focus on that for the rest of this paper. 
 
Some variants of our strong feedback setting could also be considered, and we now explain how they are related to our setting. Firstly,~\cite{rakhlin2010online} and~\cite{jung2020equivalence} consider an alternative setting for online prediction of real-valued functions that differs from ours in step (3). There, the adversary's feedback is $c(x_t)$ itself and is infinitely precise; to recover that setting from ours, we merely set $\eps=0$ in step (3). Since in our setting we allow $\eps$ arbitrary, we accommodate the possibility of a precision-limited adversary, for instance if the adversary's feedback comes from some estimation process or physical measurement. A second alternative setting is where the adversary only commits to providing {\em weak feedback}: $\widehat{c}(x_t)=0$ if $|\hat{y}_t- c(x_t)| < \varepsilon$ and  $\widehat{c}(x_t)=1$ otherwise. Additionally, if the latter is true, the adversary specifies if  $c(x_t) > \hat{y}_t + \varepsilon$, or $c(x_t) < \hat{y}_t - \varepsilon$ to the learner. We have termed this `weak feedback' because it contains only two bits of information, whereas for the strong feedback setting considered above, the feedback contains $O(\log(1/\eps))$ bits of information.\footnote{That is to say, a learner that works in the strong feedback setting can also work in the weak feedback setting, by mounting a binary search of the range $[0,1]$ to obtain for itself an $\eps$-approximation of strong feedback at every round. Conversely, a learner that works for the weak feedback setting also works in the strong feedback setting, by throwing away some information in the strong feedback.}

\paragraph{Mistake bound for online learning.} 
We now introduce the notion of `mistake bound' of an online learner. Before defining the model, we first define an \emph{$\varepsilon$-mistake} at step $(3)$ of the the $T$-step procedure we mentioned above.  
 \begin{definition}[$\eps$-mistake\label{def:mistake}]
Let the target concept be $c$. At a given round, let the input point be $x_t$ and the learner's guess be $\hat{y}_t$. The learner has made a mistake if 
 $     |\hat{y}_t- c(x_t)|\geq \varepsilon$.
 \end{definition}
We now define the mistake-bound model of online learning. 

\begin{definition}[Mistake bound]\label{def:mistake_bound} Let $\A$ be an online learning algorithm for class $\Cc$. Given any sequence $S=\left(x_{1}, \widehat{c}\left(x_{1}\right)\right), \ldots,\left(x_{T}, \widehat{c}\left(x_{T}\right)\right),$ where $T$ is any integer, $c \in \Cc$ and $\widehat{c}$ is the feedback of the online learner on point $x_i$.  Let $M_{\A}(S)$ be the number of mistakes $A$ makes on the sequence $S$. 

We define the \emph{mistake bound of learner $\A$ (for $\Cc$)} as $\max_{S} M_\A(S)$ where $S$ is a sequence of the above form. We say that class $\Cc$ is online learnable if there exists an algorithm $A$ for which $M_{\A}(\Cc) \leq B<\infty$. We further define the \emph{mistake bound of a concept class} as $M(\Cc) := \min_{\A} M_{\A}(\Cc)$ where the minimization is over all valid online learners $A$ for $\Cc$. 
\end{definition}

The mistake bound of class $\Cc$, $M(\Cc)$ is one way to measure the online learnability of $\Cc$. For learning Boolean function classes,~\cite{littlestone} showed that this bound gives an operational interpretation to the Littlestone dimension of the function class: $\min_{\A} M_{\A}(\Cc) = \textsf{Ldim}(\Cc)$. For showing that there exists $\mathcal{A}$ such that $M_{\A}(\Cc) \leq \textsf{Ldim}(\Cc)$, Littlestone constructed a generic algorithm -- the {\em Standard Optimal Algorithm} -- to learn any class $\Cc$ that makes at most $\textsf{Ldim}(\Cc)$-many mistakes on any sequence of examples.  The mistake-bounded online learning model outlined above for quantum states recovers the `online learning of quantum states' model proposed by~\cite{aaronson2018online}. Aaronson et al.'s work~\cite{aaronson2018online} focuses on regret bounds for online learning and here we focus on online learning with bounded mistakes. While this can be viewed as a special case of bounding regret (with an indicator loss function), the mistake-bound viewpoint opens up the connection to other models of learning, as we will see in the rest of this paper.

\subsection{Other tools of interest}

\subsubsection{Differentially-private learning}
The task of designing randomized algorithms with privacy guarantees has attracted much attention classically with the motivation of preserving user privacy~\cite{dwork2014}. Below we formally introduce \emph{differential privacy}, one way of formalizing privacy. 
 Let $\A$ be a learning algorithm. Let $S$ be a sample set consisting of labelled examples $\{(x_i,\ell_i)\}_{i \in [n]}$ where $x_i \in \mathcal{X}, \ell_i \in [0,1]$, that is fed to a learning algorithm $\A$. We say two sample sets $S,S'$ are {\em neighboring} if there exists $i \in [n]$ such that $(x_i,\ell_i) \neq (x_i',\ell_i')$ and for all $j \neq i$ it holds that $(x_j,\ell_j)=(x_j',\ell_j')$. 
Additionally, we define $(\eps,\delta)$-indistinguishability of probability distributions: for $a, b, \varepsilon, \delta \in[0,1]$ let $a \approx_{\varepsilon, \delta} b$ denote the statement
$
a \leq e^{\varepsilon} b+\delta$ and $b \leq e^{\varepsilon} a+\delta$.
We say that two probability distributions $p, q$ are $(\varepsilon, \delta)$-indistinguishable if $p(E) \approx_{\varepsilon, \delta} q(E)$ for every event $E$.

\begin{definition}[Differentially-private learning]
\label{def:privatereallearner}
A randomized algorithm
\[
\A:(\mathcal{X} \times [0,1])^{n} \rightarrow [0,1]^{X}
\]
is $(\varepsilon, \delta)$-differentially-private if for every two neighboring examples $S, S^{\prime} \in (X \times [0,1])^{n}$, the output distributions $\A(S)$ and $\A\left(S^{\prime}\right)$ are $(\varepsilon, \delta)$-indistinguishable. 
\end{definition}

\begin{definition}[Differentially-private $\PAC$ learning\label{def:PRPAC}]
Let $\Cc\subseteq \{f:\X\rightarrow [0,1]\}$ be a concept class. Let $\zeta, \alpha\in [0,1]$ be accuracy parameters and $\varepsilon,\delta$ be privacy parameters. We say $\Cc$ can be learned with sample complexity $m(\zeta,\alpha,\varepsilon,\delta)$ in a private $\PAC$ manner if there exists an algorithm $\A$ that satisfies the following:
\begin{itemize}
\item \textbf{$\PAC$ learner} --- Algorithm $\A$ is a $(\zeta, \alpha)$-$\PAC$ learner for $\mathcal{C}$ with sample size $m$ (as formulated in Definition~\ref{def:RPAC}).
\item \textbf{Privacy} --- Algorithm $\A$ is $(\eps,\delta)$-differentially private (as formulated in Definition~\ref{def:privatereallearner}).
\end{itemize}
We shall say such a learner is $(\zeta,\alpha,\varepsilon,\delta)$-$\PPAC$. 
\end{definition}

\subsubsection{Communication complexity.}\label{subsubsec:comm}
In this section, we introduce one-way classical and quantum communication complexity. Different from the usual setting, here we consider communication protocols that compute real-valued and not just Boolean functions. 
In the one-way classical communication model, there are two parties Alice and Bob. Let $\Cc\subseteq \{f:\01^n\rightarrow [0,1]\}$ be a concept class. We consider the following task which we call $\textsf{Eval}_{\Cc}$: Alice receives a function $f\in \Cc$  and Bob receives an $x\in \X$. Alice and Bob share random bits and Alice is allowed to send classical bits to Bob, who needs to output a $\zeta$-approximation of~$f(x)$ with probability $1-\varepsilon$. We let $R_{\zeta,\eps}^{\rightarrow}(c,x)$ be the \emph{minimum} number of bits that Alice communicates to Bob, so that he can output a $\zeta$-approximation of  $f(x)$ with probability at least $1-\varepsilon$ (where the probability is taken over the randomness of Alice and Bob). Let $R_{\zeta,\eps}^{\rightarrow}(\Cc)=\max \{ R_{\zeta,\eps}^{\rightarrow}(c,x):c\in \Cc, x\in \X\}$.

We will also be interested in the quantum one-way communication model. The setting here is exactly the same as above, except that now Alice and Bob
can apply quantum unitaries locally and Alice is allowed to send qubits instead of classical bits to Bob. Like before, we let $Q_{\zeta,\eps}^{\rightarrow}(c,x)$ be the \emph{minimum} number of qubits that Alice communicates to Bob, so that he can output a $\zeta$-approximation of  $c(x)$ with probability at least $1-\varepsilon$ (where the probability is taken over the randomness of Alice and Bob). Let $Q_{\zeta,\eps}^{\rightarrow}(\Cc)=\max \{ Q_{\zeta,\eps}^{\rightarrow}(c,x):c\in \Cc, x\in \X\}$.

\subsubsection{Stability of algorithms}
An important conceptual contribution in this paper is the concept of \emph{stability} of algorithms. The notion of stability has been used in several previous works~\cite{dwork2014,bun2020equivalence,alon2019private,abernethy2019online,bousquet2019passing}.  In the context of real-valued functions we are not aware of such a definition. We naturally extend previous definitions of stability from Boolean-valued functions to real-valued functions as~follows.
\begin{definition}[Stability]\label{def:stability}
Let $\Cc\subseteq \{f:X\rightarrow [0,1]\}$ be a concept class and $\eta,\zeta \in [0,1]$. Let $\De:\X\rightarrow [0,1]$ be a distribution and $c\in \Cc$ be a target unknown concept. We say a learning algorithm $\A$ is $(T,\eta,\zeta)$-\emph{stable with respect to $D$} if: given $T$ many labelled examples $S=\{(x_i,c(x_i))\}$ when $x_i\sim \De$, there exists a hypothesis $f$ such that
$$
\Pr[\A(S) \in \T(\zeta, f)]\geq \eta,
$$
where the probability is taken over the randomness of the algorithm $\A$ and the examples $S$.
\end{definition}

It is worth noting that in the standard notion of global stability (for example the one used  in~\cite{bun2020equivalence}), we say an algorithm $\mathcal{A}$ is stable if a  {\em single } function is output by $\mathcal{A}$ with high probability. In the real-valued robust scenario, one cannot hope for similar guarantees because the adversary is allowed to be $\zeta$-off with his feedback at every round. In particular, the adversary's feedback could correspond to a different function from the target concept $c$. However, the intuition is that any adversarially-chosen alternative function cannot be ``too" far from $c$.

Inspired by the definition above we also define quantum stability as follows. 

\begin{definition}[Quantum Stability]\label{def:qstability}
Let $S$ be a class on $n$-qubit quantum states and $\eta,\zeta \in [0,1]$. Let $\De:\X\rightarrow [0,1]$ be a distribution over orthogonal $2$-outcome measurements and $\rho\in  S$ be an  unknown quantum state. We say a learning algorithm $\A$ is $(T,\eta,\zeta)$-\emph{stable with respect to $D$} if: given $T$ many labelled examples $Q=\{(E_i,\Tr(\rho E_i))\}$ when $E_i\sim \De$, there exists a quantum state $\sigma$  such that
\begin{equation}
\Pr[\calA(Q) \in \calB(\varepsilon, \sigma)]\geq \eta,
\end{equation}
where the probability is taken over the examples in $Q$ and $\calB(\varepsilon,\sigma)$ is the ball of states $\eps$-close to $\sigma$ with respect to $\X$, i.e.,  $\calB(\varepsilon,\sigma)=\{\sigma': |\Tr(E\sigma) - \Tr(E\sigma')|<\varepsilon \, \text{ for every } E \in \mathcal{X}\}$.
\end{definition}

\subsubsection{Combinatorial parameters.}
We define some combinatorial parameters used in $\PAC$ learning and online learning real-valued function classes $\{f:\mathcal{X} \rightarrow [0,1]\}$. These are the fat-shattering (for $\PAC$ learning) and sequential fat-shattering dimension (for online learning). They can be viewed as the real-valued analogs of the VC dimension and Littlestone dimension respectively for $\PAC$ learning and online learning Boolean function classes $\{f:\mathcal{X} \rightarrow \{0,1\}\}$. Below we define the combinatorial parameters for real-valued~functions.
\paragraph{Fat-Shattering dimension}
The set $\left\{x_{1}, \ldots, x_{k}\right\} \subseteq \mathcal{X}$ is $\gamma$-fat-shattered by concept class $\Cc$ if  there exists real numbers $\left\{\alpha_{1}, \ldots, \alpha_{k}\right\}\in [0,1]$ such that for all $k$-bit strings $y=(y_{1} \cdots y_{k})$ there exists a concept $f \in \Cc$ such that
\emph{if} $y_{i}=0$ then $f\left(x_{i}\right) \leq \alpha_{i}-\gamma$
 and \emph{if} $y_{i}=1$ then $f\left(x_{1}\right) \geq \alpha_{i}+\gamma$.
 
The fat-shattering dimension of $\Cc$, or $\fat_{\gamma}(\Cc)$ is the largest $k$ for which: there exists $\left\{x_{1}, \ldots ,x_{k}\right\} \in \mathcal{X}$ that is $\gamma$-fat-shattered by $\Cc$. We remark that if the functions in $\Cc$ have range $\01$ and $\gamma>0$, then $\fat_\gamma(\Cc)$ is just the standard $\textsf{VC}$ dimension.

\paragraph{Sequential Fat-Shattering dimension}
We also define an analog of the fat-shattering dimension for online learning. The presentation of this dimension closely follows~\cite{aaronson2018online}.
We say a depth-$k$ tree $T$ is an \emph{$\eps$-sequential fat-shattering tree} for $\Cc$ if  it satisfies the following:
\begin{enumerate}
    \item For every internal vertex $w\in T$, there is some domain point $x_w\in U$ and threshold $a_w \in [0,1]$ associated with $w$, and
    \item For each leaf vertex $v \in T$, there exists $ f \in \Cc$ that causes us to reach $v$ if we traverse $T$ from the root such that at any internal node $w$ we traverse the left subtree if $f\left(x_{w}\right) \leq a_{w}-\varepsilon$ and the right subtree if $f\left(x_{w}\right) \geq a_{w}+\varepsilon .$ If we view the leaf $v$ as a $k$ -bit string, the function $f$ is such that for all ancestors $u$ of $v,$ we have $f\left(x_{u}\right) \leq a_{u}-\varepsilon$ if $v_{i}=0,$ and $f\left(x_{u}\right) \geq a_{u}+\varepsilon$ if $v_{i}=1,$ when $u$ is at depth $i-1$ from the root.
   \end{enumerate}
The \emph{$\varepsilon$-sequential fat-shattering dimension} of $\Cc$, denoted $\sfat_{\eps}(\Cc)$, is the largest $k$ such that we can construct a complete depth-$k$ binary tree $T$ that is an $\eps$-sequential fat-shattering tree for $\Cc$. 
 Again, we remark that if the functions in $\Cc$ have range $\01$ and $\gamma>0$, then $\sfat_\gamma(\Cc)$ is just the standard Littlestone dimension~\cite{littlestone}.

\paragraph{Representation dimension.}
The representation dimension of concept class $\Cc$ 
roughly considers the collection of all distributions  over sets of hypothesis functions (not necessarily from the class~$\Cc$) that “cover” $\Cc$. We make this precise below. 
This dimension is known to capture the sample complexity of various models of differential private learning Boolean functions~\cite{kasiviswanathan2011can,beimel2010}.
 Because we shall be concerned with learning real-valued concept classes, we define these notions below with an additional `tolerance' parameter~$\zeta$.
 
\begin{definition}[Deterministic representation dimension $\DRD$, real-valued analog of~\cite{beimel2010}] \label{def:DRD}
Let $\Cc\subseteq \{f:\X\rightarrow [0,1]\}$ be a concept class. A class of functions $\Hi$ deterministically $(\zeta,\eps)$-represents~$\Cc$ if for every $f \in \Cc$ and every distribution $\mathcal{D}:\X\rightarrow [0,1]$, there exists $h\in \Hi$ such~that
\begin{equation}
 \Pr_{x\sim D}\big[|h(x)-f(x)| >\zeta\big] \leq \eps. \end{equation}
 The deterministic representation dimension  of $\Cc$ (abbreviated $\DRD(\Cc)$) is
 \begin{equation}
 \DRD_{\zeta,\eps}(\Cc) = \min_{\Hi} \log |\Hi|
 \end{equation}
where the minimization is over $\Hi$ that deterministically $(\zeta,\eps)$-represent $\Cc$. 
\end{definition}

\begin{definition}[Probabilistic representation dimension $\PRD$, real-valued analog of~\cite{beimel2013}] \label{def:PRD}
Let $\Cc\subseteq \{f:\X\rightarrow [0,1]\}$ be a concept class. Let $\mathscr{H}$ be a collection of concept classes of real-valued functions, and $\mathcal{P}:\mathscr{H}\rightarrow [0,1]$. We say $(\mathscr{H},\mathcal{P})$ is $(\zeta,\eps,\delta)$-representation of $\mathcal{C}$ if for every $f \in \Cc$ and distribution $D:\mathcal{X}\rightarrow [0,1]$, with probability at least $1-\delta$ (over the choice of $\Hi\sim \mathcal{P}$), there exists $h\in \mathcal{H} $ such that
\begin{equation}
    \Pr_{x\sim D}\big[|h(x)-f(x)| >\zeta\big] \leq \eps. 
\end{equation}
The probabilistic representation dimension  of $\Cc$ (abbreviated $\PRD(\Cc)$) is
\begin{equation}
\PRD_{\zeta,\eps,\delta}(\Cc) = \min_{(\mathscr{H},\mathcal{P})} \max_{\mathcal{H} \in \supp(\mathscr{H})} \log |\mathcal{H}|,
\end{equation}
where the outer minimization is over all sets $(\mathscr{H},\mathcal{P})$ of valid $ (\zeta,\eps,\delta)$-representations. 
\end{definition}

\section{Robust standard optimal algorithm and mistake bounds}
\label{sec:RSOA}
In this section, we present an algorithm that improperly online-learns a real-valued function class~$\Cc$, making at most $\sfat(\Cc)$ many mistakes (see Definition~\ref{def:mistake_bound}). This algorithm is an important tool for results in the rest of the paper. All results in this section are presented for the general case of online-learning arbitrary real-valued function classes, with imprecise adversarial feedback. Ultimately, we will use this algorithm as a subroutine for the specific setting of quantum learning. 

Our algorithm's learning setting generalizes that of \cite{rakhlin2010online} and \cite{jung2020equivalence}, who also studied online learning of real-valued and multi-class functions (i.e., functions mapping to a finite set), albeit, the former in the case of precise adversarial feedback ($\eps= 0$). \cite{jung2020equivalence} defined several extensions of the Littlestone dimension $\Ldim_{\tau}$ for $\tau \in \mathbb{Z}_+$ and showed that for learning a multi-class function class $\Cc$, $\Ldim_{\tau} < M(\Cc) < \Ldim_{2\tau}$. They also showed that for a real-valued function class $\Cc$, $\sfat(\Cc)$ is linked to the $\Ldim_{\tau}$ of a discretization of the function class, thus effectively transforming any real-valued learning problem into a multi-class learning problem. However, their approach does not work for our setting, for the following reason: if $c$ is the target real-valued function, and the true value of $c(x)$ is $\eps$-close to a boundary of some class within the discretized range, our $\eps$-imprecise adversary could choose a value of the feedback $\widehat{c}(x)$ that falls in the neighboring class. Hence the resulting multi-class learner has to deal with the adversary reporting the wrong class, which is beyond the scope of what they considered.

In Section~\ref{subsec:RSOA}, we first construct an algorithm Robust Standard Optimal Algorithm ($\RSOA$) whose mistake bound satisfies $M_{\RSOA}(\Cc) \leq \sfat(\Cc)$ for online-learning with strong feedback. In Section~\ref{subsec:RSOA_properties}, we prove some of the properties of this algorithm, which are essential for proving later results in this paper. Moreover, for online learning with weak feedback, we show that any adversary can force at least $\sfat(\Cc)$ mistakes. We cannot however make the same statement for online learning with strong feedback (this would be the real-valued analog of the relation $M(\Cc) = \Ldim(\Cc)$ proved by Littlestone for Boolean function classes). It is an open question whether we can close this gap, but for the rest of this paper, we are concerned solely with online learning with strong feedback and hence the implication $M_{\RSOA}(\Cc) \leq \sfat(\Cc)$ is~sufficient.

\subsection{Robust Standard Optimal Algorithm}\label{subsec:RSOA}
In this section, we give an algorithm to to online-learn real-valued functions with strong feedback. In order to handle subtleties caused by learning functions with output in $[0,1]$ instead of $\01$,
 we define the notion of an $\zeta$-cover. This was introduced by~\cite{rakhlin2010online} and in order to handle inaccuracies in the output of an adversary, we extend their notion to define an \emph{interleaved $\zeta$-cover}.

\begin{definition}[$\zeta$-cover and interleaved $\zeta$-cover]
 \label{def:cover}
Let $0<\zeta< 1$ be such that $1/\zeta$ is an integer. A \emph{$\zeta$-cover} of the $[0,1]$ interval is a set of non-overlapping half-open intervals (`bins') of width $\zeta$ given by $\big\{[0,\zeta),[\zeta,2\zeta),\ldots,[1-\zeta,1]\big\}$ with the midpoints
\[
\In_{\zeta}=\big\{\zeta / 2,3 \zeta / 2, \ldots, 1 -\zeta/2 \big\}
\] 
where 
 $|\In_{\zeta}| = 1/\zeta$.  Given a $\zeta$-cover $\In_{\zeta}$, the corresponding \emph{interleaved $\zeta$-cover} $\tilde{\In}_{\zeta}$ is the set of overlapping half-open intervals (`super-bins') of width $2\zeta$ (each consisting of two adjacent bins in $\In_{\zeta}$) given by $\big\{[0,2\zeta),[\zeta,3\zeta),\ldots,[1-2\zeta,1]\big\}$ with the midpoints 
\[
\tilde{\In}_{\zeta}=\big\{\zeta, 2 \zeta, \ldots, 1 -\zeta \big\}
\]
where 
$|\tilde{\In}_{\zeta}| = |\In_{\zeta}|-1$.  We denote a super-bin with midpoint $r$ as $\SB(r)$.
\end{definition}

We will also need the definition of a $\zeta$-ball.
\begin{definition}[$\zeta$-ball]
An $\zeta$-ball around an arbitrary point $x\in[0,1]$ (denoted $B(\zeta,x)$) is the open interval of radius $\zeta$ around $x$, i.e.,      $   B(\zeta, x):= (x-\zeta,x+\zeta)$
\end{definition}
As we mentioned earlier, the $\textsf{FAT}$-$\textsf{SOA}$ algorithm of~\cite{rakhlin2010online} used $\alpha$-covers to understand real-valued online learning, however, it does not suffice in the setting of quantum learning since the output of the adversary could be imprecise. To account for this, we use interleaved $\alpha$-covers defined above. 
Our learning algorithm will take advantage of the following property enjoyed by the interleaved $\alpha$-cover: the $\zeta$-ball of any point is guaranteed to be {\em entirely} contained inside some super-bin, i.e.,      for every $x\in (\zeta,1-\zeta)$,  $\alpha>2\zeta$ and $r =\argmin_{r\in \tilde{\In}_{2\zeta}} \{|x-r|\}$, we have $B(\zeta, x) \subset \SB(r)$.
Finally, we need one more notation: given a set of functions $V\subseteq \{f:\X\rightarrow [0,1]\}$, $r \in \tilde{\In}_{2\zeta}$ and $x\in \X$, define a (possibly empty) subset $V(r, x) \subseteq V$ as
\[
 V(r, x)=\big\{f \in V : f(x) \in B(2\zeta, r)\big\},
\]
i.e.,      $V(r,x)$ are the set of functions $f\in V$ for which $f(x)$ is within a $2\zeta$-ball around $r$ or $f(x)\in [r-2\zeta,r+2\zeta]$. We are now ready to present our mistake-bounded online learning algorithm for learning real-valued functions.
Our algorithm is Algorithm~\ref{algo:RSOA}.
  \begin{algorithm}[H]
		\textbf{Input:} Concept class $\Cc\subseteq \{f:\X\rightarrow [0,1]\}$, target (unknown) concept $c\in \Cc$, and $\zeta\in [0,1]$.
		\vspace{5pt}
		\textbf{Initialize}: $V_1 \gets \Cc$
		\vspace{5pt}
		\begin{algorithmic}[1]
\For{$t = 1, \ldots, T$}
\vspace{1mm}
    \State A learner receives $x_t$ and maintains set $V_t$, a set of ``surviving functions". \;
    \State For every super-bin midpoint $r\in \tilde{\In}_{2\zeta}$
    the learner computes the set of functions $V_t(r,x_t)$.
    \State A learner finds the super-bin which achieves the maximum $\sfat(\cdot)$ dimension
      $$
     R_{t}(x_t):=\left\{
    \argmax _{r \in \tilde{\In}_{2\zeta}} \sfat_{2\zeta}\left(V_{t}(r, x_t)\right)\in \tilde{\In}_{2\zeta}\right\}
    $$
    \State The learner computes the mean of the set $R_t(x_t)$, i.e.,      let 
    $$
    \hat{y}_t:=\frac{1}{\left|R_{t}(x_t)\right|} \sum_{r \in R_{t}(x_t)} r.
    $$ 
    \State The learner outputs $\hat{y}_t$ and receives feedback $\widehat{c}(x_{t})$.     
    \State Learner makes the update $V_{t+1} \leftarrow \{g \in V_t \mid g(x_t) \in B(\zeta,\widehat{c}(x_{t}))\}$ 
\EndFor
\end{algorithmic}

\textbf{Outputs:} The intermediate predictions $\hat{y}_t$ for $t\in[T]$, and a final prediction function/hypothesis which is given by $f(x):= R_{T+1}(x)$.
\caption{\textsf{Robust Standard Optimal Algorithm, $\RSOA_{\zeta}$}
}
\label{algo:RSOA}
\end{algorithm}
We first provide some intuition about this algorithm. At round $t$, the set of functions that has `survived' all previous rounds is $V_t$: in particular, $V_t$ consists of functions which are consistent with the feedback received in the previous $t-1$ iterations. Here, `consistent' means that suppose $x_1,\ldots ,x_{t-1}$ were presented to a learner previously, then, for every $g\in V_t$, $g(x_i)\in B(\zeta,\widehat{c}(x_i))$ for $i\in [t-1]$. This is clear from Line 7 of the algorithm; indeed, notice that $V_t$ either stays the same as $V_{t-1}$ or shrinks at every round. At round $t$,
 once a learner receives $x_t$, it always replies with $\hat{y}_t$ that is either $\zeta$-close to the true $c(x_t)$ else, aims to reduce $V_{t-1}$ as much as possible. In particular, for every super-bin $r\in \tilde{\In}_{2\zeta}$, the learner identifies the subset of surviving functions that map to that super-bin at $x_t$, i.e.,     $f\in V_t$ that satisfy $f(x_t)\in B(2\zeta, r)$. This forms the set $V_t(r, x_t)$. The learner then computes $\sfat_{2\zeta}$ of the set of functions $V_t(r, x_t)$ and picks out the super-bins $r\in \tilde{\In}_{2\zeta}$ that maximize this combinatorial quantity, and output the mean of their midpoints as the prediction $\hat{y}_t$. Intuitively, the parameter $\sfat(\cdot)$ serves as a surrogate metric for the number of functions mapping to a certain interval. Using $\sfat(\cdot)$ to define this prediction rule thus maximizes the number of eliminated functions for every mistake of the learner. Once it receives the feedback $\widehat{c}(x_t)$, the learner updates $V_t$ to $V_{t+1}$ and this process repeats for $T$ steps. We now list a few properties of this algorithm.

\subsection{Properties and guarantees of RSOA} \label{subsec:RSOA_properties}
\begin{lemma} 
\label{lem:props_RSOA}
 $\RSOA_{\zeta}$(denoted $\RSOA$) has the following properties: 
\begin{enumerate}
    \item $\zeta$-consistency: at the $t$-th iteration every $f\in V_t$ satisfies $|f(x_i)-\widehat{c}(x_{i})| \leq \zeta$ for  $i\in [t-1]$.
    \item Correctness: the target function $c$ is never eliminated, i.e.,       $c \in V_t$ for every $t \in [T]$.
    \item For every $t\in [T],  x \in \mathcal{X}$, any pair of points $r, r^{\prime} \in \tilde{\In}_{2\zeta}$ for which 
    \begin{equation}\label{eq:maximal_sfat}
    \sfat_{2\zeta}\left(V_t(r, x)\right) = \sfat_{2\zeta}\left(V_t(r', x)\right) = \sfat_{2\zeta}\left(V_t\right)
    \end{equation}
    also satisfies $|r-r^{\prime}|< 4\zeta$. 	Additionally for all $r\in \tilde{\In}_{2\zeta}$, $\sfat_{2\zeta }\left(V_{t}(r, x)\right) \leq \sfat_{2\zeta}\left(V_{t}\right)$.
    \item  $\RSOA$ is deterministic, 
    i.e.,      for the same sequence of inputs $(x_1,\widehat{c}(x_1)),\ldots,(x_T,\widehat{c}(x_T))$ provided by the adversary to the learner (each of which is followed by a response $\widehat{y}_1,\ldots,\widehat{y}_T$ of the learner), the $\RSOA$ algorithm produces the same function $f$.
    \end{enumerate}
\end{lemma}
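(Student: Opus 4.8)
The plan is to verify the four properties essentially by unwinding the definitions and the update rule (Line~7) of Algorithm~\ref{algo:RSOA}; none of them is deep, but property~3 requires a genuine combinatorial argument about sequential fat-shattering trees, so I expect that to be the main obstacle. Let me sketch each in turn.

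For property~1 ($\zeta$-consistency), I would argue by induction on $t$. Initially $V_1 = \Cc$ and the claim is vacuous. At round $i$, the update sets $V_{i+1} = \{g \in V_i : g(x_i) \in B(\zeta,\widehat{c}(x_i))\}$, so every surviving $g$ satisfies $|g(x_i) - \widehat{c}(x_i)| < \zeta$; combining with the inductive hypothesis for earlier rounds gives the claim for all $i \in [t-1]$. (I would double-check the open/closed endpoint convention so the stated $\le \zeta$ is justified.) Property~2 (correctness) is almost the same induction: it suffices to note that the adversary's feedback satisfies $|\widehat{c}(x_i) - c(x_i)| < \varepsilon$, but crucially the learner's update ball in Line~7 uses radius $\zeta$ while — wait, here one must be careful: the algorithm as written uses $\widehat{c}$ with imprecision bounded by the $\varepsilon$ of the online model. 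The key point is that when $\varepsilon \le \zeta$ (the regime in which $\RSOA_\zeta$ is invoked), we have $c(x_i) \in B(\zeta, \widehat{c}(x_i))$ whenever $|\widehat{c}(x_i) - c(x_i)| < \varepsilon \le \zeta$, so $c$ is never removed from $V_i$, and hence $c \in V_t$ for all $t$. I would state this dependence on $\varepsilon \le \zeta$ explicitly.

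Property~3 is the substantive one. The inequality $\sfat_{2\zeta}(V_t(r,x)) \le \sfat_{2\zeta}(V_t)$ is immediate since $V_t(r,x) \subseteq V_t$ and $\sfat_{2\zeta}$ is monotone under inclusion. For the claim that two maximizing super-bin midpoints $r,r'$ must satisfy $|r - r'| < 4\zeta$: suppose for contradiction $|r - r'| \ge 4\zeta$. Then the super-bins $\SB(r) = B(2\zeta,r)$ and $\SB(r')= B(2\zeta,r')$ are disjoint (their half-widths are $2\zeta$ each), so $V_t(r,x)$ and $V_t(r',x)$ are disjoint subsets of $V_t$ distinguished by the value at $x$. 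Now I would build an $2\zeta$-sequential-fat-shattering tree for $V_t$ of depth $\sfat_{2\zeta}(V_t)+1$, contradicting maximality: put $x$ at the root with threshold $a = (r+r')/2$ (or the appropriate midpoint), so that functions in $V_t(r,x)$ go one way and functions in $V_t(r',x)$ go the other, each side being separated from $a$ by at least $2\zeta$; hang an optimal depth-$\sfat_{2\zeta}(V_t)$ shattering tree for $V_t(r,x)$ (resp.\ $V_t(r',x)$) on the corresponding child. This gives a complete depth-$(\sfat_{2\zeta}(V_t)+1)$ tree, contradicting the definition of $\sfat_{2\zeta}(V_t)$. The main technical care here is checking the threshold arithmetic — that with $|r-r'| \ge 4\zeta$ one can indeed pick a single threshold $a$ witnessing $f(x) \le a - 2\zeta$ for one class and $f(x) \ge a + 2\zeta$ for the other — and making sure the two hung subtrees really do have depth exactly $\sfat_{2\zeta}$ of their respective (nonempty) vertex sets, which requires both $V_t(r,x)$ and $V_t(r',x)$ to be nonempty; if one is empty the corresponding super-bin has $\sfat_{2\zeta} = -\infty$ or $0$ and cannot be a maximizer unless $V_t$ itself is trivial, a boundary case I would dispatch separately.

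Property~4 (determinism) follows by inspection: every line of Algorithm~\ref{algo:RSOA} is a deterministic function of $V_t$ and $x_t$ — the sets $V_t(r,x_t)$, the $\argmax$ set $R_t(x_t)$ (a set, hence well-defined even under ties), the mean $\hat y_t$, and the update to $V_{t+1}$ — and $V_1 = \Cc$ is fixed. Hence, by induction on $t$, the entire transcript, and in particular the final hypothesis $f(\cdot) = R_{T+1}(\cdot)$, is determined by the sequence $(x_1,\widehat{c}(x_1)),\dots,(x_T,\widehat{c}(x_T))$. I would phrase this as: there is no internal coin flip anywhere in the algorithm, so the output is a deterministic function of the adversary's transcript. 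The only subtlety worth a sentence is that $R_{T+1}(x)$ as a prediction function is itself deterministic because the $\argmax$ returns a canonical set and we take its mean.
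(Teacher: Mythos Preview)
Your proposal is correct and follows essentially the same approach as the paper's proof: properties~1, 2, and 4 are dispatched by inspection of the update rule and the absence of randomness, while property~3 is proved by the same contradiction argument---place $x$ at the root with threshold $(r+r')/2$ and hang the two optimal depth-$d$ shattering trees for $V_t(r,x)$ and $V_t(r',x)$ as subtrees to obtain a depth-$(d+1)$ tree. You are, if anything, more careful than the paper in flagging the threshold arithmetic and the role of the feedback imprecision in property~2.
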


\begin{proof}
The first item follows by construction.   
At the end of $i$th round, the following update is performed: $V_{i+1} \leftarrow \{g \in V_i \mid g(x) \in B(\zeta,\widehat{c}(x_{i})))\} \subseteq V_{i}$. This eliminates all functions $g$ for which $g(x_i) \notin B(\zeta, \widehat{c}(x_{i}))$ from the set $V_{i+1}$, hence all functions for which $|f(x_i)-\widehat{c}(x_{i})| > \zeta$ are eliminated. 

The second item follows trivially: by assumption $y_{t} = c(x_t)$ is in the $\zeta$-ball of $\widehat{c}(x_{t})$. Thus the target concept $c$ is never eliminated in the update $V_{t+1} \leftarrow \{g \in V_t \mid g(x) \in B(\zeta,\widehat{c}(x_{t}))\}$.
    
We now show the third item. 
 Suppose by contradiction, there is a pair $r, r^{\prime} \in \tilde{\In}_{2\zeta}$ such~that 
\[
\sfat_{2\zeta}\left(V_t(r, x)\right) = \sfat_{2\zeta}\left(V_t(r', x)\right) = \sfat_{2\zeta}\left(V_t\right)
\] 
and $|r-r^{\prime}|> 4\zeta$. Let  $\sfat_{2\zeta}\left(V_t\right)=d$.  Without loss of generality, we assume $r>r'$. Then let $s = (r+r')/2$. Clearly, for every $f\in V_t(r,x)$ we have $f(x) \geq s+ \zeta$ and  $g\in V_t(r',x)$  we have $g(x) \leq s- \zeta$. This means that, given a sequential fat-shattering tree of depth $d$ for $V_t(r,x)$, and the tree also of depth $d$ for $V_t(r',x)$, we may join them together by adding a root node with the label $x$ and the threshold $s$, and this new tree of depth $d+1$ is sequentially fat-shattered by $V_t(r,x) \cup V_t(r',x)$ and hence by $V_t$ (which is a superset). This contradicts the assumption that  $\sfat_{2\zeta}(V_t)=d$, because by definition of $\sfat(\cdot)$ dimension, $d$ is the depth of the \emph{deepest} tree for the functions in $V_t$. The ``additionally" part follows immediately because $V_t(r,x) \subseteq V_t$.

The final item of the lemma is clear because steps $3$ to $7$ in the $\RSOA$ algorithm are deterministic and involve no randomness from a learner.
\end{proof}

Having established these properties, are now ready to prove our main theorem bounding the maximum number of prediction mistakes that $\RSOA$ makes.

\begin{theorem}[$\RSOA$ mistake bound]
\label{thm:RSOA}
Let $\Cc\subseteq \{f:\X\rightarrow [0,1]\}$ be a concept class and $\zeta>0$. Given the setting of online learning with strong feedback, i.e.,      at every round $t \in [T]$, the feedback $\widehat{c}(x_{t})$ is $\zeta$-close to the true value $|c(x_t)-\widehat{c}(x_{t})| \leq \zeta$, $\RSOA_{\zeta}$ (described in Algorithm~\ref{algo:RSOA})  is such that, for every $T$, the algorithm makes a predictions $\hat{y}_t$ satisfying
\[
\sum_{t=1}^{T} \id\big[\left|\hat{y}_t - c(x_t)\right|> 5\zeta \big] \leq \sfat_{2\zeta}{(\Cc)}
\]
\end{theorem}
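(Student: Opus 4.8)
The plan is to run the classical Standard Optimal Algorithm potential argument, with potential $\Phi_t := \sfat_{2\zeta}(V_t)$. Since Line~7 of Algorithm~\ref{algo:RSOA} only ever shrinks the surviving set, $V_{t+1}\subseteq V_t$, so $\Phi_t$ is non-increasing in $t$; by the Correctness property (Lemma~\ref{lem:props_RSOA}, item~2) $c\in V_t$ for all $t$, so $\Phi_t\ge 0$; and $\Phi_1=\sfat_{2\zeta}(\Cc)$. Hence it suffices to show that on every round at which the learner makes a $5\zeta$-mistake, i.e.\ $|\hat y_t-c(x_t)|>5\zeta$, the potential drops by at least one, $\Phi_{t+1}\le\Phi_t-1$. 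Summing this over all mistake rounds (and using non-increase on the others) immediately gives $\sum_t \id[|\hat y_t-c(x_t)|>5\zeta]\le\sfat_{2\zeta}(\Cc)$. (If $\sfat_{2\zeta}(\Cc)=\infty$ there is nothing to prove.)

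To get the strict drop, I would first identify the super-bin that ``catches'' everything surviving round $t$. Using the containment property of the interleaved cover noted after Definition~\ref{def:cover} (the $\zeta$-ball of any point lies inside some width-$4\zeta$ super-bin, with the first/last super-bin taking care of $\widehat c(x_t)$ near the endpoints of $[0,1]$), pick $r^\star\in\tilde{\In}_{2\zeta}$ with $B(\zeta,\widehat c(x_t))\subseteq \SB(r^\star)=B(2\zeta,r^\star)$. Then the update $V_{t+1}=\{g\in V_t: g(x_t)\in B(\zeta,\widehat c(x_t))\}$ forces $g(x_t)\in B(2\zeta,r^\star)$ for every surviving $g$, i.e.\ $V_{t+1}\subseteq V_t(r^\star,x_t)$, and therefore $\Phi_{t+1}\le \sfat_{2\zeta}\big(V_t(r^\star,x_t)\big)$.

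It then remains to show $\sfat_{2\zeta}(V_t(r^\star,x_t))<\Phi_t$ whenever round $t$ is a mistake. Suppose not: by Lemma~\ref{lem:props_RSOA}, item~3 (the ``additionally'' clause), $\sfat_{2\zeta}(V_t(r,x_t))\le\Phi_t$ for all $r$, so equality would make $r^\star$ a maximizer, hence $r^\star\in R_t(x_t)$. Every element of $R_t(x_t)$ then also attains $\sfat_{2\zeta}(V_t(r,x_t))=\sfat_{2\zeta}(V_t)$, so by Lemma~\ref{lem:props_RSOA}, item~3 each lies within $4\zeta$ of $r^\star$; but the points of $\tilde{\In}_{2\zeta}$ are integer multiples of $2\zeta$, so in fact $R_t(x_t)\subseteq\{r^\star-2\zeta,\,r^\star,\,r^\star+2\zeta\}$, and hence the prediction $\hat y_t=\frac{1}{|R_t(x_t)|}\sum_{r\in R_t(x_t)}r$ satisfies $|\hat y_t-r^\star|\le 2\zeta$. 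Now the triangle inequality gives $|\hat y_t-c(x_t)|\le |\hat y_t-r^\star|+|r^\star-\widehat c(x_t)|+|\widehat c(x_t)-c(x_t)|\le 2\zeta+2\zeta+\zeta=5\zeta$, where the middle term is bounded using $\widehat c(x_t)\in B(\zeta,\widehat c(x_t))\subseteq\SB(r^\star)$ and the last using the strong-feedback hypothesis $|\widehat c(x_t)-c(x_t)|\le\zeta$. This contradicts $|\hat y_t-c(x_t)|>5\zeta$. Therefore $\sfat_{2\zeta}(V_t(r^\star,x_t))\le\Phi_t-1$, so $\Phi_{t+1}\le\Phi_t-1$, as desired.

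The main thing to be careful about is the constant-chasing around the interleaved cover: verifying that a catching super-bin $r^\star$ always exists (including at the ends of $[0,1]$, where function values may pile up near $0$ or $1$), that the three triangle-inequality terms really sum to at most $5\zeta$ and no larger, and that the ``multiples of $2\zeta$'' observation legitimately upgrades the $|r-r'|<4\zeta$ of Lemma~\ref{lem:props_RSOA} to $|r-r'|\le 2\zeta$ (which is what pins $\hat y_t$ within $2\zeta$ of $r^\star$). Once those bookkeeping points are nailed down, the rest is just the textbook SOA potential argument.
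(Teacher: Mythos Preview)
Your proposal is correct and follows essentially the same potential argument as the paper: both set $\Phi_t=\sfat_{2\zeta}(V_t)$, both use Lemma~\ref{lem:props_RSOA} to argue that on a mistake round the super-bin containing $B(\zeta,\widehat c(x_t))$ cannot be a maximizer of $\sfat_{2\zeta}(V_t(\cdot,x_t))$, and both conclude $\Phi_{t+1}\le\Phi_t-1$. The only difference is organizational: the paper carries out an explicit four-case analysis on $|R_t(x_t)|\in\{0,1,2,3\}$, locating $\hat y_t$ and the maximizing super-bins concretely in each case, whereas you compress all four cases into a single contradiction (assume $r^\star\in R_t(x_t)$, use item~3 of Lemma~\ref{lem:props_RSOA} to pin $R_t(x_t)\subseteq\{r^\star-2\zeta,r^\star,r^\star+2\zeta\}$, then triangle-inequality your way to $|\hat y_t-c(x_t)|\le 5\zeta$), which is cleaner but relies on the same ingredients.
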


\begin{proof}
The intuition is that whenever the learner makes a mistake, functions are eliminated from the `surviving set', such that $\sfat(\cdot)$ of the remaining functions decreases by $1$. Since the true function $c$ is never eliminated from $V_t$, and the $\sfat(\cdot)$ dimension of a set consisting of a single function is $0$, no more than $\sfat(\cdot)$ mistakes can be made. 

First observe that, whenever the algorithm makes a mistake, i.e.,      $|\hat{y}_t - c(x_t)| > 5\zeta$, it also follows that $|\hat{y}_t - \widehat{c}(x_{t})| > 4\zeta$ because $\widehat{c}(x_{t})$ is an $\zeta$-approximation of $c(x_t)$. Below we show that on every round where $|\hat{y}_t - \widehat{c}(x_{t})| > 4\zeta$, $\sfat(V_{t+1})\leq \sfat(V_t)-1$. Together with property~2 of Lemma~\ref{lem:props_RSOA} and the fact that $V_1=\Cc$ this already implies that no more than $\sfat(\Cc)$ mistakes are made by $\RSOA$.  

Suppose $|\hat{y}_t - \widehat{c}(x_{t})| > 4\zeta$. Fix $t$ and $x_t$. Observe that by property 3  Eq.~\eqref{eq:maximal_sfat} (in Lemma~\ref{lem:props_RSOA}) there are at most three super-bins whose midpoints~$r$ satisfy $\sfat_{2\zeta}\left(V_t(r, x)\right) = \sfat_{2\zeta}\left(V_t\right)$, i.e.,      between $0$ and $3$ super-bins achieve the upper-bound on $\sfat(\cdot)$ at each round, which we now call $\UB_t:=\sfat_{2\zeta}(V_t)$.  We now analyze each of four cases for the number of upper-bound-achieving super-bins. 

\textbf{Case 1}: $\sfat_{2\zeta}(V_t(r,x_t)) < \UB_t$ for every $r\in \tilde{\In}_{2\zeta}$, i.e.,      no super-bins achieve $\UB_t$. Every update of $V_t$ updates it to the functions within some $\zeta$-ball, $\bigcirc := B(\zeta,\widehat{c}(x_{t}))$. Observe that  $\bigcirc$ is entirely contained within some super-bin, call it $\SB$ (note that even if $\widehat{c}_t$ is at the boundary of two super-bins, it would still be inside the super-bin that is in-between the two, by definition of the interleaved $\zeta$-cover). Hence, $\sfat(\bigcirc) \leq \sfat(\SB) <  \UB_t$ where the second inequality is by the assumption of the case.

\textbf{Case 2}: There exists exactly one $r\in \tilde{\In}_{2\zeta}$ such that 
$$
\sfat_{2\zeta}(V_t(r,x_t)) =  \UB_t,
$$
i.e.,      exactly one super-bin (centered at $r=2k\zeta$ for some $k\in \Z_+$) achieves $\UB_t$, let's call this $\SB^{\ast}=[2(k-1)\zeta, 2(k+1)\zeta)$. Since the super-bin's midpoint is at some bin boundary, the prediction is $\hat{y}_t = 2k\zeta$. Similar to the previous case, the update step retains only the functions in some $\bigcirc := B(\zeta,\widehat{c}(x_{t}))$. However, since $|\hat{y}_t - \widehat{c}(x_{t})| > 4\zeta$, we either have $\widehat{c}(x_{t})<2(k-2)\zeta$ or $\widehat{c}(x_{t})>2(k+2)\zeta$. $\bigcirc$, therefore, is entirely contained within some super-bin $\SB \neq \SB^{\ast}$. Since there is only one maximizing super-bin $\SB^{\ast}$, we have $\sfat(\bigcirc) \leq \sfat(\SB) < \sfat(\SB^{\ast}) =  \UB_t$. 

\textbf{Case 3}: There exists $r_1, r_2\in \tilde{\In}_{2\zeta}$ such that  
$$
\sfat_{2\zeta}(V_t(r_1,x_t)) = \sfat_{2\zeta}(V_t(r_2,x_t)) = \UB_t,
$$
i.e.,      two super-bins (centered at $r_1$, $r_2$ respectively) achieve $\UB_t$, call them $\SB_1^{\ast}, \SB_2^{\ast}$. Using Property 3 of Lemma~\ref{lem:props_RSOA}, these two super-bins must either be touching at a boundary (hence $\hat{y}_t = 2k\zeta$ where $\SB_1^{\ast}= [2k\zeta, 2(k+2)\zeta)$, $\SB_2^{\ast}= [2(k-2)\zeta, 2k\zeta)$) or intersecting at one bin (hence $\hat{y}_t = (2k+1)\zeta$ where $\SB_1^{\ast}= [2k\zeta, 2(k+2)\zeta)$, $\SB_2^{\ast}= [2(k-1)\zeta, 2(k+1)\zeta)$). In the former case, $\widehat{c}(x_{t})<2(k-2)\zeta$ or $\widehat{c}(x_{t})>2(k+2)\zeta$ and thus neither $\SB_1^{\ast}$ nor $\SB_2^{\ast}$ entirely contains $\bigcirc$, though there is some super-bin that does. In the latter case, $\widehat{c}(x_{t})<(2k-3)\zeta$ or $\widehat{c}(x_{t})> (2k+5)\zeta$ and thus neither $\SB_1^{\ast}$ nor $\SB_2^{\ast}$ entirely contains $\bigcirc$, though there is some super-bin that does. Identical reasoning to the previous two cases shows that the update thus decreases $\sfat(\cdot)$ on the remaining functions. 

\textbf{Case 4}: There exists $ r_1, r_2, r_3\in \tilde{\In}_{2\zeta}$ such that  $$
\sfat_{2\zeta}(V_t(r_1,x_t)) = \sfat_{2\zeta}(V_t(r_2,x_t)) = \sfat_{2\zeta}(V_t(r_3,x_t)) = \UB_t,
$$
i.e.,      three super-bins (centered at $r_1$, $r_2$, $r_3$ respectively) achieve $\UB_t$. Call them $\SB_1^{\ast}, \SB_2^{\ast}, \SB_3^{\ast}$. By Property 3 of Lemma~\ref{lem:props_RSOA}, there is only one configuration these three super-bins could be in, namely two super-bins have to be touching at a boundary, with the last super-bin straddling them: $\SB_1^{\ast}= [2k\zeta, 2(k+2)\zeta)$, $\SB_2^{\ast}= [2(k-1)\zeta, 2(k+1)\zeta)$, $\SB_3^{\ast}= [2(k-2)\zeta, 2k\zeta)$. Then $\hat{y}_t = 2k\zeta$ and $\widehat{c}(x_{t})<2(k-2)\zeta$ or $a-t > 2(k+2)\zeta$. None of $\SB_1^{\ast}, \SB_2^{\ast}, \SB_3^{\ast}$ entirely contains $\bigcirc$, though there is some super-bin that does, and identical reasoning to the previous three cases shows that the update thus decreases $\sfat(\cdot)$ on the remaining functions. 
\end{proof}

Theorem~\ref{thm:RSOA} says that the $\RSOA$ algorithm for a concept class $\Cc$ in the strong feedback model, makes at most $\sfat(\Cc)$ mistakes.  This is also the setting in the rest of the paper as well as most of the real-valued online learning literature. A natural question is, can we make fewer mistakes than the $\RSOA$ algorithm? Below we consider the \emph{weak} feedback model of online learning and show
no learner can do better than making $\sfat(\cdot)$ mistakes. An interesting open question is, can we even improve the lower bound in the theorem below for the strong feedback model setting? 

 \begin{theorem}\label{thm:qonline=sfat}
Let $\zeta\in [0,1]$ and $\Cc\subseteq \{f:\X\rightarrow [0,1]\}$. Every online learner $\mathcal{A}$ (in the weak feedback setting) for the class $\Cc$, satisfies $M_{\mathcal{A}}(\Cc) \geq~\sfat_{\zeta}(\Cc)$.
\end{theorem}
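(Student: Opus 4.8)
The plan is to run the textbook adversary argument, guided by a sequential fat-shattering tree, exploiting the fact that \emph{weak} feedback reveals very little to the learner. Fix $d := \sfat_{\zeta}(\Cc)$ and let $T$ be a complete depth-$d$ $\zeta$-sequential fat-shattering tree for $\Cc$, with a domain point $x_w$ and a threshold $a_w$ attached to each internal vertex $w$. The adversary maintains a pointer to a vertex of $T$, initialized at the root, and plays exactly $d$ rounds, descending one level of $T$ per round. At the end the pointer rests at a leaf $v$; we then let the (so far unspecified) target concept be any $c \in \Cc$ witnessing $v$ in the shattering, which exists by definition of $T$.

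At round $t$, with the pointer at an internal vertex $w$ of depth $t-1$, the adversary presents $x_t := x_w$ and receives the learner's prediction $\hat{y}_t \in [0,1]$. If $\hat{y}_t \geq a_w$ the adversary advances the pointer to the \emph{left} child of $w$ and returns weak feedback $\widehat{c}(x_t)=1$ together with the directional bit ``$c(x_t) < \hat{y}_t - \eps$''; if $\hat{y}_t < a_w$ it advances to the \emph{right} child and returns $\widehat{c}(x_t)=1$ with the bit ``$c(x_t) > \hat{y}_t + \eps$''. Here $\eps = \zeta$, matching the parameter in the statement. I would then check the two facts that make this a valid transcript forcing $d$ mistakes. \textbf{(i) Consistency of the adversary.} Every function witnessing a leaf in the subtree of $T$ below the current pointer satisfies, at each ancestor $u$ where the pointer went left, $f(x_u) \leq a_u - \zeta$, and where it went right, $f(x_u) \geq a_u + \zeta$; hence the eventual target $c$ is consistent with all feedback returned, and each directional bit is correct for $c$ (in the left case $c(x_t) \leq a_w - \zeta \leq \hat{y}_t - \eps$, and in the right case $c(x_t) \geq a_w + \zeta > \hat{y}_t + \eps$; should $\hat{y}_t = a_w$ exactly in the left case, only the \emph{strictness} of the directional bit is in question, not the occurrence of a mistake, and this is removed by an arbitrarily small relaxation of the fat-shattering margin). \textbf{(ii) Every round is a mistake.} If the pointer went left then $\hat{y}_t \geq a_w$ and $c(x_w) \leq a_w - \zeta$, so $|\hat{y}_t - c(x_w)| \geq \zeta = \eps$; if it went right then $\hat{y}_t < a_w$ and $c(x_w) \geq a_w + \zeta$, so $|\hat{y}_t - c(x_w)| > \zeta = \eps$. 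By Definition~\ref{def:mistake} each of the $d$ rounds is an $\eps$-mistake of $\mathcal{A}$ against $c$. Since $c \in \Cc$ and the above is a legal run of $\mathcal{A}$ with weak feedback, $\mathcal{A}$ makes at least $d$ mistakes on this sequence, so $M_{\mathcal{A}}(\Cc) \geq d = \sfat_{\zeta}(\Cc)$.

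The only part needing genuine (if light) care is claim (i): verifying that the adversary is \emph{legitimate}, i.e., that all feedback it has committed to is simultaneously consistent with one fixed concept in $\Cc$, revealed only at the end. This is exactly what the nested-subtree structure of a sequential fat-shattering tree provides — the branch descended so far encodes precisely a list of one-sided margin constraints, and every leaf in the remaining subtree is witnessed by a concept meeting all of them, so the choice of $c$ at the end retroactively justifies every earlier response. Everything else is a direct unwinding of the definitions; the use of weak rather than strong feedback is what makes the argument go through, since the adversary never has to commit to a precise value of $c(x_t)$, only to which side of the margin it lies on — precisely the reason (as the surrounding text notes) that the analogous statement for strong feedback is left open.
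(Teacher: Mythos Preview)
Your proposal is correct and follows essentially the same approach as the paper's proof: both construct an adversary that traverses the depth-$d$ sequential fat-shattering tree, presenting $x_w$ at node $w$, branching left or right based on whether $\hat y_t \ge a_w$ or $\hat y_t < a_w$, and finally committing to a concept $c$ witnessing the reached leaf. Your write-up is in fact more careful than the paper's, explicitly separating the consistency check (i) from the mistake check (ii) and flagging the borderline $\hat y_t = a_w$ case for the directional bit.
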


\begin{proof}
We construct an adversary that can always force at least $\sfat(\Cc)$ mistakes in the weak model of learning (where the adversary only gives two bits of feedback to the learner). To do so, the adversary traverses the $\zeta$-fat-shattered tree starting at the root node, at every round interacting with the learner based on the information at the current node, {\em always} claiming the learner made a mistake, and then moving to one of the two daughter nodes. In particular, the interaction at node $v$ of the tree, which is associated with $(x_v,a_v)$, is as follows: The adversary gives the learner the point $x_{v}$. If the learner predicts $\hat{y}_t< a_v$, claim the learner is wrong and go to the right daughter node, thus committing the adversary to the subset of functions $f\in \Cc$ such that $f(x_v) \geq a_v +\zeta$. Go to the opposite node if the learner predicts $\hat{y}_t \geq a_v$. After $\sfat_{\zeta}(\Cc)$ rounds, the adversary will have reached a leaf node. At this point, by the definition of the $\sfat(\cdot)$ tree, there is at least one function consistent with all previous commitments of the adversary. This becomes the target function, which the adversary then commits to in the first place.
Since the depth of the tree is by definition $\sfat_{\zeta}(\Cc)$, the learner will have made $\sfat_{\zeta}(\Cc)$ mistakes by the time the adversary reaches a leaf and has to commit to a function.
\end{proof}

\section{Online learning implies stability}
\label{sec:onlineimpliesDP}

In this section we show that online learnability of a real-valued function class implies that there exists a real-valued $\DP$ $\PAC$ learner for the same class. More precisely, we will assume that the $\sfat(\cdot)$ dimension of the function class is bounded (which implies its online learnability, as discussed in Section~\ref{sec:RSOA}); then we will explicitly describe an algorithm that uses this learner to learn in a globally-stable manner. 

This, however, is only half of the implication shown in \cite{bun2020equivalence}. There, they go one step further and turn their stable learner into an approximately $\DP$ $\PAC$ learner, concluding overall that online learning implies approximate $\DP$ $\PAC$ learning. Supposing we could prove the same for our learning model, then combining this with the implication shown in Section \ref{sec:pureDPimpliesonline} (that pure $\DP$ $\PAC$ learning implies online learning) would make for almost a complete chain of implications starting at pure $\DP$ $\PAC$ learning, implying online learning, and finally implying approximate $\DP$ $\PAC$ learning. However, in the second half of this section, we use an argument from fingerprinting codes to show that the transformation in \cite{bun2020equivalence} from a stable learner to a $\DP$ $\PAC$ learner does not work with the stability guarantees we obtain for our real-valued learning setting.

We will use the following notation throughout this section. Let $\Cc\subseteq \{f:\X\rightarrow [0,1]\}$ be a concept class and $c\in \Cc$ be a target concept. Let $D:\X\rightarrow [0,1]$ be a distribution. In a slight abuse of notation, we use the notation $(x,\widehat{c}(x)) \sim D$ to mean that $x$ is drawn from the distribution $D$ and $\widehat{c}(x)$ satisfies 
$|\widehat{c}(x)- c(x)| < \zeta$. Also, we say $B \sim D^m$ to mean that a learner receives $m$ such examples $\{(x_i,\widehat{c}(x_i))\}_{i=1}^m$. We say that the learner has made a \emph{mistake} on input $x$ if he has made a $5\zeta$-mistake (refer to Definition~\ref{def:mistake}). Finally, because we are concerned with {\em real-valued} learning, functions in the vicinity of the target function are considered ``close enough" as hypotheses, and so we will make use of the following notion of {\em function ball}:

\begin{definition}[Function ball of radius $r$ around $c$\label{def:function_ball}]
Given a set of functions $\Hi \subseteq \{f:\X\rightarrow [0,1]\}$, a function ball of radius $r$ around  $c\in \Hi$ is the set of all functions $f\in \Hi$ such that
\begin{equation}
     |f(x)-c(x)|<r \quad \text{ for every } x\in \X,
\end{equation}
and we denote such a function ball by $\T(r,c)$.\footnote{The symbol $\T$ stands for `tube' since for a member of the function ball, closeness to $c$ must be satisfied at not just a single point but all points in the domain. We omit mentioning the function class $\Cc$, which is usually taken to be $\mathcal{R}$, the set of all functions output by $\RSOA$. Because $\RSOA$ is an improper learner, $\mathcal{R}$ is not the same as~$\Cc$.}   
Moreover, for a set of functions $\E=\{f_1,\ldots,f_k\}$, we let $\T(r,\E)=\cup_{i=1}^k \T(r,f_i)$. 
\end{definition}

In Section~\ref{sec:onlineimpliesstab}, we prove that given a mistake-bounded online learner, there exists a stable learner. In Section~\ref{sec:stabdoesntimplyapproxDP}, we prove that stability does not, in turn, imply approximate $\DP$ learning using the transformation of \cite{bun2020equivalence}, without a domain size dependence in the sample complexity. In Section~\ref{sec:quantumimplications}, we turn our attention to how our results apply to learning quantum~states.

\subsection{Online learning implies stability}
\label{sec:onlineimpliesstab}

In this section we prove the following theorem:
\begin{theorem}
\label{thm:onlineimpliesstab}
Let $\alpha,\zeta \in [0,1]$. Let $\Cc\subseteq \{f:\X \rightarrow [0,1]\}$ be a concept class with $\sfat_{2\zeta}(\Cc)=d$. Let $D:\mathcal{X}\rightarrow [0,1]$ be a  distribution and let $S=\{(x_i, \widehat{c}(x_i))\}$ be a set of 
$$
T=O \left(\zeta^{-d}\cdot \frac{d}{\alpha}\right)
$$ 
examples where $x_i\sim D$ and $|\widehat{c}(x_i) - c(x_i)| < \zeta$ where $c\in \Cc$ is a unknown concept. There exists a $(T,\zeta^{-O(d)},O(\zeta))$-stable learning algorithm $\G$, 
that outputs $f$ satisfying $\Loss_{D}(f,c,O(\zeta)) \leq \alpha$.
\end{theorem}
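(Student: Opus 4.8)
The plan is to use $\RSOA_{\zeta}$ (Algorithm~\ref{algo:RSOA}) as a black box and wrap it in a ``mistake-example planting'' construction in the spirit of Bun--Livni--Moran~\cite{bun2020equivalence}, adapted to real-valued targets with imprecise feedback. The guarantees I will lean on are from Lemma~\ref{lem:props_RSOA} and Theorem~\ref{thm:RSOA}: $\RSOA$ is deterministic, is $\zeta$-consistent, never discards the target $c$, and on any strong-feedback sequence makes at most $\sfat_{2\zeta}(\Cc)=d$ mistakes; moreover, the case analysis inside the proof of Theorem~\ref{thm:RSOA} shows that feeding $\RSOA$ a point $x$ whose prediction $\hat y$ satisfies $|\widehat c(x)-\hat y|>4\zeta$ strictly decreases $\sfat_{2\zeta}$ of its version space. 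One further elementary fact: if a version space $V$ with $c\in V$ has $\sfat_{2\zeta}(V)=0$, then $\max_{g\in V}g(x)-\min_{g\in V}g(x)<4\zeta$ at every $x\in\X$, so every $f\in V$ (and $\RSOA$'s associated prediction function) satisfies $|f(x)-c(x)|=O(\zeta)$ for all $x$; equivalently $V\subseteq\T(O(\zeta),c)$ (Definition~\ref{def:function_ball}).

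The algorithm $\G$, on input $S$ of size $T$, maintains a short ``virtual transcript'' $\Sigma$ of planted labelled points, initialised empty, and runs $\RSOA$ on $\Sigma$ to obtain a version space $V^{\Sigma}$ and prediction function $f^{\Sigma}$. It proceeds in at most $d$ rounds. In round $j$ it consumes a fresh block $S_j\subseteq S$ and locates a point $x$ on which $f^{\Sigma}$ makes a $5\zeta$-mistake, i.e.\ $|f^{\Sigma}(x)-c(x)|>5\zeta$; this is the step where, following~\cite{bun2020equivalence}, an auxiliary independent run of the process is used so that such an $x$ can be exhibited even when $f^{\Sigma}$ is already accurate on $D$ (see the last paragraph). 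Since $c(x)$ is not known exactly, $\G$ discretises $[0,1]$ into $1/\zeta$ bins as in Definition~\ref{def:cover}, picks a uniformly random bin midpoint $b$, appends the planted example $(x,b)$ to $\Sigma$, and feeds it to $\RSOA$. Call the guess \emph{good} if $c(x)\in B(\zeta,b)$; this happens with probability $\Omega(\zeta)$. When the guess is good, $(x,b)$ is a genuine \emph{mistake example}: it keeps $c\in V^{\Sigma}$ (Lemma~\ref{lem:props_RSOA}, item~2, since $|b-c(x)|<\zeta$), while $|f^{\Sigma}(x)-b|\ge|f^{\Sigma}(x)-c(x)|-|c(x)-b|>4\zeta$, so $\sfat_{2\zeta}(V^{\Sigma})$ drops by one. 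After $d$ good guesses $\sfat_{2\zeta}(V^{\Sigma})=0$, and $\G$ terminates (possibly earlier, when no $5\zeta$-mistake point can be found on a fresh block) and outputs the current $f^{\Sigma}$.

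For \emph{stability}, take the centre of the ball to be $c$ itself, which is trivially a good hypothesis since $\Loss_D(c,c,O(\zeta))=0\le\alpha$. Condition on the event $\mathcal{E}$ that $\G$ runs all $d$ rounds and each of the $d$ random bin guesses is good; as the guesses are independent of the sample and each is good with probability $\Omega(\zeta)$, we get $\Pr[\mathcal{E}]\ge(c_0\zeta)^{d}=\zeta^{O(d)}$ for a constant $c_0>0$ (the ``runs all $d$ rounds'' part being secured by the auxiliary-run mechanism discussed below). On $\mathcal{E}$ the final version space has $\sfat_{2\zeta}(V^{\Sigma})=0$ with $c\in V^{\Sigma}$, so by the elementary fact above the output $f^{\Sigma}$ lies in $\T(O(\zeta),c)$ regardless of which concrete sample points were planted. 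Hence $\Pr_{S\sim D^{T}}[\G(S)\in\T(O(\zeta),c)]\ge\zeta^{O(d)}$, i.e.\ $\G$ is $(T,\zeta^{O(d)},O(\zeta))$-stable. For \emph{accuracy on $D$}: $\G$ only halts once a fresh block of size $\Theta(\log(1/\beta)/\alpha)$ contains no $5\zeta$-mistake point of the current hypothesis; if it halted with $\Loss_D(f,c,5\zeta)>\alpha$, such a point would appear with probability $\ge 1-\beta$, and planting it would force a further mistake, contradicting either the halting rule or the budget $\sfat_{2\zeta}(\Cc)=d$ of Theorem~\ref{thm:RSOA}. Summing $|S_1|,\dots,|S_d|$ and the verification blocks yields $T=O(\zeta^{-d}\cdot d/\alpha)$: the exponential-in-$d$ factor arises, as the $2^{d}$ blow-up does in~\cite{bun2020equivalence}, from the need for geometrically growing blocks $|S_j|\asymp\zeta^{-(d-j)}\cdot d/\alpha$ to survive the $d$-fold conditioning on the rare ``good guess'' events.

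\emph{Main obstacle.} The crux is guaranteeing a \emph{full} $d$ rounds of $\sfat$-decreasing plantings rather than stalling with a hypothesis that is accurate on $D$ but whose version space still has positive $\sfat_{2\zeta}$ (such a hypothesis need not lie in any fixed small function ball, so the stability argument would fail). This is exactly where the Boolean argument of~\cite{bun2020equivalence} must be re-engineered: run a second independent copy of the process, compare the two prediction functions, plant at a point where they differ by more than $5\zeta$ (so at least one is $5\zeta$-wrong there), and fall back to a convergence argument when the copies agree everywhere. Two features of our setting make this delicate: feedback is imprecise, so a planted label $b$ that is only \emph{$\zeta$-close} to $c(x)$ must still provably activate the $4\zeta$-trigger of Theorem~\ref{thm:RSOA}; and the target is real-valued, so the single coin flip of~\cite{bun2020equivalence} becomes a $1/\zeta$-way choice of discretisation bin, which is both the source of the $\zeta^{-d}$ sample blow-up and the reason one must insist on $5\zeta$-separated (not merely distinct) predictions when comparing the two copies. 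Threading the chain of constants ($\zeta$-consistency, the $2\zeta$ super-bins, the $4\zeta$ trigger, the $5\zeta$ mistakes) coherently through the planting argument is the bulk of the technical work.
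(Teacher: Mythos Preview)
Your high-level plan---BLM-style mistake-planting on top of $\RSOA$, with the Boolean coin flip replaced by a uniform choice among the $1/\zeta$ bin midpoints---is indeed the paper's approach. However, there is a genuine gap in how you establish stability, and a related confusion about what ``accuracy'' must be proved.

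You commit to $c$ as the stable centre and argue that $\Pr[\G(S)\in\T(O(\zeta),c)]\ge\zeta^{O(d)}$ by conditioning on $\mathcal{E}=\{\text{all $d$ bin guesses good}\}\cap\{\text{algorithm runs all $d$ rounds}\}$. The second event is \emph{not} cheap. The two-copy comparison produces a disagreement point only when the outputs at the previous level are \emph{not} already concentrated in an $O(\zeta)$-ball; once concentration occurs at some level $k^*<d$, the mechanism stalls (the copies agree within $11\zeta$ everywhere with non-negligible probability), and within a fixed sample budget $T$ the probability of pushing through to round $d$ can be far below $\zeta^{O(d)}$. Your ``fall back to a convergence argument'' is exactly the right fix, but it yields a centre $f^*$ that is the point of early concentration---in general $f^*\neq c$---so it is inconsistent with your commitment to $c$. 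The paper makes this precise: Lemma~\ref{lem:freq} shows by contradiction that \emph{some} $k^*\le d$ and \emph{some} $f^*$ satisfy $\Pr[\RSOA(S\circ B)\in\T(O(\zeta),f^*)]\ge\zeta^d$, and the final algorithm (Algorithm~\ref{algo:stable_algo}) randomly guesses $k\in\{0,\dots,d\}$ to hit $k^*$ with probability $1/(d+1)$.

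This in turn breaks your accuracy argument. Because the stable centre $f^*$ need not be $c$, the bound $\Loss_D(c,c,O(\zeta))=0$ is irrelevant; one must show $\Loss_D(f^*,c,O(\zeta))\le\alpha$. Your halting-rule argument controls the loss of each individual \emph{output} of $\G$, not of the fixed centre $f^*$ around which those outputs cluster. The paper's mechanism is different: it appends a fresh i.i.d.\ block $B\sim D^m$ to the planted transcript and feeds $S\circ B$ to $\RSOA$; by $\zeta$-consistency (Lemma~\ref{lem:props_RSOA}) every output is $\zeta$-consistent with $B$, hence $f^*$ is $O(\zeta)$-consistent with $B$, and a standard tail bound (Lemma~\ref{lem:generalizability_v2}) converts this into $\Loss_D(f^*,c,O(\zeta))\le d\ln(1/\zeta)/m$. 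This is where the $d/\alpha$ factor in $T$ actually enters---through the block size $m$, not through a separate verification-and-halting step.
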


The algorithm $\G$ is the $\RSOA$ run on a carefully tailored input distribution over the examples, with $T$ being the overall sample complexity of our algorithm. Most of the work in the proof arises in explaining how to tailor the set of examples drawn from the original distribution $D$ into a new set $S$ on which $\RSOA$ is guaranteed to succeed.  In this section, when we write $\RSOA_{\zeta}(S)$ where $S$ is a sample, i.e.,      $S = \{(x_i,\hat{c}(x_i))\}$, we mean that we feed the examples in $S$ into $\RSOA$ sequentially, as in the online learning setting. 
We will prove this theorem in three parts, corresponding to the subsequent three sub-subsections: 
\begin{itemize}
    \item 
    Our Algorithm~\ref{algo:distributions}, 
    is a tailoring algorithm that {\em defines}  distributions $\ext(\De,k)$ for $k\in[d]$ as a function of the distributions $\De$, to which we have black-box access. Just as in~\cite{bun2020equivalence}, the key idea for the tailoring is to inject examples into the sample that would force mistakes. We have adapted this idea for the robust, real-valued setting. Unfortunately, this algorithm could potentially use an unbounded number of examples (in the worst case), which we handle~next. 
    
    \item Next, we seek to impose a cutoff on the number of examples drawn in the algorithm above. In Lemma~\ref{lem:expectedsamplecomp}, we compute the expected number of examples drawn by Algorithm~\ref{algo:distributions}. Then, we use Markov's inequality to compute what the cutoff should be. The final tailoring algorithm is simply Algorithm~\ref{algo:distributions}, cut off when the number of examples drawn exceeds this threshold. 
    \item Finally, we state the globally-stable learning algorithm Algorithm~\ref{algo:stable_algo}, which essentially invokes Algorithm~\ref{algo:distributions} with the cutoff we defined above. In Theorem~\ref{thm:global_stability} we prove the correctness and sample complexity of Algorithm~\ref{algo:stable_algo}.
\end{itemize}

\subsubsection{Sampling from the distributions $\ext(D,k)$}
In the following, the symbol $S\circ T$ between two  sets of examples means the concatenation of the two sets $S,T$. Intuitively our learning algorithm is going to obtain $T$ examples overall and break these examples into blocks of size $m$ (a parameter which will be fixed later in Theorem~\ref{thm:global_stability}), each block followed by a single mistake example, all of which which are fed to an online learner. Additionally, below we can think of $k\leq \sfat(\Cc)$ as the number of mistakes we want to inject into the examples we feed to an online learner.

\begin{algorithm}
	\textbf{Input:} Distribution  $D:\mathcal{X}\rightarrow [0,1]$, $m\geq 1$, $k \in \{0,\ldots,d\}$.
	\vspace{2mm}

	\textbf{Output:} A sample from the distribution $\ext(D,k)$.\vspace{2mm}
	
For $k\geq 0$, the distributions $\ext(D,k): \X^{k(m+1)}\times [0,1]\rightarrow [0,1]$ are defined inductively as follows: 
	
\begin{enumerate}
    \item $\ext(D,0)$ : output the empty sample $\emptyset$ with probability 1.
     \item Sampling from $\ext(D,k)$ involves recursively sampling from $\ext(D,k-1)$ as follows:
    \begin{enumerate}[$(i)$]
        \item Draw $S^{(0)}, S^{(1)} \sim \ext(D,k-1)$ and two sets of $m$ examples $B^{(0)}, B^{(1)} \sim D^{m}$.
        \item $\operatorname{Let} f_{0}= \RSOA_{\zeta}\left(S^{(0)} \circ B^{(0)}\right), f_{1}=\RSOA_{\zeta}\left(S^{(1)} \circ B^{(1)}\right)$.
        \item If $|f_{0}(x)- f_{1}(x)| \leq 11\zeta $ for every $x \in \mathcal{X}$ then go back to step (i).
        \item Else pick $x'$ such that $|f_{0}(x')- f_{1}(x')| > 11 \zeta$ and sample $\alpha \sim \In_{\zeta}$ uniformly.\footnotemark
        \item Let $M_k := (x',\alpha)\in \X \times [0,1]$. If $|\alpha-f_0(x')|<|\alpha-f_1(x')|$, output $S^{(1)} \circ B^{(1)} \circ M_k$, else output $S^{(0)} \circ B^{(0)} \circ M_k$        
    \end{enumerate}
\end{enumerate}
\caption{An algorithm to sample from distributions $\ext(D,k)$.}
\label{algo:distributions}
\end{algorithm}
\footnotetext{Recall the definition of the $\zeta$-cover, $\In_{\zeta}=\big\{\zeta / 2,3 \zeta / 2, \ldots, 1 -\zeta/2 \big\}$}

\paragraph{Intuition of the algorithm.} We first explain Algorithm~\ref{algo:distributions} on an intuitive level. Recall the goal: using our $\RSOA$ online learning algorithm for $\Cc$, we would like to design a \emph{globally stable} $\PAC$ learner for $\Cc$. To this end, let $\De$ be the unknown distribution (under which we need the $\PAC$ learner to~work).    Algorithm~\ref{algo:distributions} `tailors' a sample (fed to the online learner) as follows: in the $k$th iteration it repeatedly draws pairs of batches of $(k-1)(m+1)$ examples from $\ext(D,k-1)$ and then decides whether to keep or discard each batch based on the outcome of running $\RSOA$ on the batches. If some batch is kept, it is appended with a {\em single} example which is guaranteed to force a mistake on $\RSOA$, and the resulting sample $S$ is output by the algorithm. This process of outputting $S$ can be regarded as drawing sample $S$ from the distribution $\ext(D,k)$. 
The structure of $S$ is illustrated in Figure~\ref{fig:curated_sample}. Each $B_i$ is a block of $m$ examples each drawn i.i.d.~from $D$. Each $M_i = (x_i, \alpha_i)$, forces a mistake when $S$ is fed to $\RSOA$. $S$ has $k$ blocks and $k$ mistake examples in total.

\begin{figure}[!ht]
    \centering
    \includegraphics[scale=1.2]{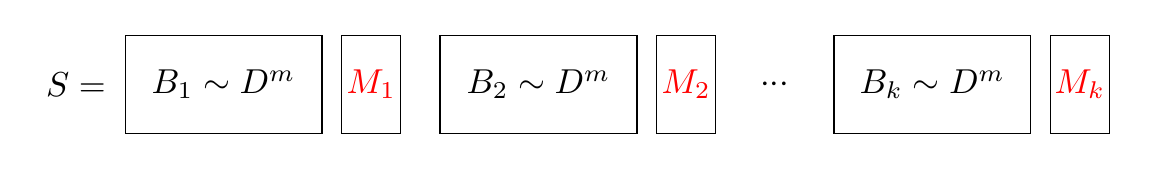}
    \caption{Structure of curated sample $S$ obtained  resulting from Algorithm~\ref{algo:distributions}. Each $B_i$ is a block of $m$ examples $(x,c(x))$ where $x\sim D$ and $M_i=(x,b)$ is an example which forces a \emph{mistake}.}
    \label{fig:curated_sample}
\end{figure}

We now focus on explaining steps 2$(i)$ to 2$(v)$ which `force a mistake'.  In step 2$(i)$ we draw two examples, $S^{(0)}\circ B^{(0)}$ and $S^{(1)}\circ B^{(1)}$. In 2$(ii)$, we feed $S^{(0)}\circ B^{(0)}$ into $\RSOA$, which returns function $f_0$, and do the same for $S^{(1)}\circ B^{(1)}$, returning $f_1$. There are now two possibilities, either $f_0,f_1$ are ``close" or $f_0$ and $f_1$ differ significantly at some $x\in \X$ and step 2$(iii)$ checks which is the case as follows.  
\begin{enumerate}
    \item $f_0,f_1$ agree to within $11\zeta$ on every point in $\X$: then draw a new pair $S^{(0)}\circ B^{(0)}$ and $S^{(1)}\circ B^{(1)}$ afresh, going back to step 2i). 
    \item $|f_0(x)-f_1(x)|>11\zeta$ for some $x \in \mathcal{X}$. Note that this $x$ need not be from an example previously given to the learner. Intuitively, in this case, the predictions $f_0$ and $f_1$ are so far apart at $x$ that they cannot both be $5\zeta$-correct, and so at least one of them is a mistake. More precisely, in the $\zeta$-cover, let $b_c\in \In_{\varepsilon}$ be the midpoint of the bin (of width $\zeta$) that contains $c(x)$. Since $|f_0(x)-f_1(x)|>11\zeta$, at least one of the predictions $f_0(x)$, $f_1(x)$ is $5\zeta$-far from $b_c$ (though we don't know which it is, since we don't know $c$!)
   \end{enumerate}
Steps 2$(i)$ to 2$(iii)$ are repeated until we are in the second case. Note that steps 2$(i)$ to 2$(iii)$ could be repeated an unbounded number of times, each repetition drawing fresh examples. For the remainder of this section, we assume that steps 2$(i)$ to 2$(iii)$ terminate eventually so that we may argue about the final output sample. In Section~\ref{sec:stoppingcriterion}, we show it suffices to ``impose" a cut-off of $T$ examples so that with high probability the algorithm (with an appropriate value of $k$) terminates before drawing $T$-many examples. 

In order to create $M_k$, we uniformly draw some $\alpha \sim \In_{\zeta}$ (the set of {all possible} bin midpoints), which means $\alpha = b_c$ with probability $\zeta$.\footnote{Note that this step crucially differs from~\cite{bun2020equivalence} since for them the true value of $f_0(x)$ or $f_1(x)$ is always $0$ or $1$, so they can flip a coin and force a mistake with probability at least $1/2$.}  If $\alpha = b_c$, we are guaranteed that $f_i$ is a mistake for $i := \arg \max_i |\alpha - f_i(x)|$. Therefore, we concatenate our mistake example with $S^{(i)} \circ B^{(i)}$, eventually outputting $S:=S^{(i)} \circ B^{(i)} \circ (x,\alpha)$ as the output of Algorithm~\ref{algo:distributions}. By the end of these steps, we will have a sample $S'\circ B' \circ M_k$ where $S' \sim \ext(D,k-1)$, $B' \sim D^m$ and $M_k$ is a single `mistake' example with the following two properties:
(i) $M_k = (x',\alpha)$ is a valid example (i.e.,      $|\alpha-c(x')| \leq \zeta$).
(ii) If $\RSOA$ is fed $S'\circ B' \circ M_k$, $\RSOA$ will make a mistake upon seeing the example $M_k$, i.e.,      at the round corresponding to $M_k$, $\RSOA$ predicts $\hat{y}$ such that $|\hat{y}-c(x')|>5\zeta$.

\paragraph{Key Lemma.} We now prove our key lemma on global stability. Let $\calR$ be the set of all possible functions that could be output by the $\RSOA$ algorithm when run for arbitrarily many rounds. \begin{lemma}[Some function ball is output by $\RSOA$ with high probability] \label{lem:freq}
 Let $\sfat_{2\zeta}{(\Cc)}=d$. There exists $k \leq d$ and some $f\in \mathcal{R}$ such that 
\begin{equation}
\Pr_{\substack{S \sim \ext(D,k),\\ B\sim D^{m}}}[\RSOA_{\zeta}(S\circ B)\in \T(5\zeta,f)] \geq \zeta ^{d}.
\end{equation}
\end{lemma}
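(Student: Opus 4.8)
The plan is to port the global-stability argument of Bun--Livni--Moran to the robust real-valued setting, viewing $\ext(D,k)$ as the device that plants $k$ forced mistakes into $\RSOA$'s input and tracking how concentrated $\RSOA$'s output distribution becomes as $k$ grows. For $k\in\{0,\dots,d\}$ write $\mu_k$ for the law of $\RSOA_\zeta(S\circ B)$ with $S\sim\ext(D,k)$, $B\sim D^m$, and set $q_k:=\max_{f\in\mathcal R}\mu_k\big(\T(5\zeta,f)\big)$. The goal is an index $k\le d$ with $q_k\ge\zeta^d$, and I would argue by contradiction, assuming $q_k<\zeta^d$ for all $k\le d$.

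First I would control a single recursion step. One pass of the loop defining $\ext(D,k)$ draws $f_0,f_1$ i.i.d.\ from $\mu_{k-1}$ and rejects iff $\|f_0-f_1\|_\infty\le 11\zeta$. I would show $\Pr_{f_0,f_1\sim\mu_{k-1}}\big[\|f_0-f_1\|_\infty\le 11\zeta\big]\le N\,q_{k-1}$, where $N=(1/\zeta)^{O(d)}$ bounds the number of $5\zeta$-balls needed to cover the $11\zeta$-neighbourhood of a point in $\supp(\mu_{k-1})$; the point is that this support consists only of $\RSOA$-outputs, which are ``mean of super-bin midpoints'' functions valued on a $\zeta/2$-grid and consistent with a surviving set of $\sfat_{2\zeta}$-dimension at most $d$, so such a covering number is finite and --- crucially --- independent of the domain size, by a growth-function/net estimate for fat-shattering classes. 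Under the contradiction hypothesis this makes the per-pass acceptance probability at least $1-N\zeta^d>0$, so every $\ext(D,k)$, $k\le d$, is well defined and its loop terminates almost surely, and moreover two independent outputs are $11\zeta$-separated with overwhelming probability.

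Next I would analyse an accepting pass. When the loop accepts at level $k$ there is some $x'$ with $|f_0(x')-f_1(x')|>11\zeta$; the algorithm then draws $\alpha\sim\In_\zeta$ uniformly and retains whichever branch's function is farther from $\alpha$ at $x'$. With probability exactly $\zeta$ the draw hits $b_c$, the midpoint of the bin containing $c(x')$, in which case $|\alpha-c(x')|<\zeta$ (so $M_k=(x',\alpha)$ is a legal example) and the retained $f_i$ obeys $|f_i(x')-\alpha|>11\zeta/2$, whence $|f_i(x')-c(x')|>5\zeta$ --- i.e.\ $M_k$ forces a genuine $5\zeta$-mistake on $\RSOA$ applied to (retained branch)$\,\circ M_k$. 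This is exactly where the argument departs from the Boolean case: a coin flip no longer suffices and the uniform draw over $\In_\zeta$ is what costs a factor $\zeta$ per level rather than $1/2$. Feeding this into the surviving-set bookkeeping from the proof of Theorem~\ref{thm:RSOA} (each mistake drops $\sfat_{2\zeta}$ of the surviving set by at least one, and $c$ is never eliminated, Lemma~\ref{lem:props_RSOA}), a sample that follows an ``all bin-choices correct'' branch of $\ext(D,k)$ leaves $\RSOA$ with a surviving set of $\sfat_{2\zeta}$-dimension at most $d-k$; at $k=d$ that dimension is $0$, so (as in the proof of Lemma~\ref{lem:props_RSOA}, item~3) the surviving set has pointwise width $<4\zeta$ and, containing $c$, is confined to a tube $\T(O(\zeta),c)$ --- appending $B$ keeps it there --- so $\RSOA(S\circ B)$ lands in a single $5\zeta$-ball (after fixing the constant $11\zeta$ in the algorithm appropriately).

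Finally I would run the induction on $k$ that combines these. The $\sfat$ budget forces ``full concentration'' to arrive by level $d$, while the covering bound of the first step says any level with $q_{k-1}<\zeta^d$ behaves like an ``unstable'' level; one then shows $q_k\ge\zeta\cdot q_{k-1}$ (up to negligible corrections), because conditioning on a correct bin-choice --- probability $\zeta$, independent of the retained branch --- pins the post-mistake output of $\RSOA$ into a $5\zeta$-ball determined by that branch, so a $q_{k-1}$-fraction of the retained-branch mass is passed up, diminished by at most $\zeta$. Telescoping from $k=d$ (where the ball mass is $1$ on the all-correct event) down through the levels produces some $k\le d$ with $q_k\ge\zeta^d$, contradicting the hypothesis. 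I expect the main obstacle to be precisely this accounting: one \emph{cannot} simply demand that both recursive sub-samples be ``all correct'' (if they were, both $\RSOA$-outputs would already be pinned near $c$, every pass would reject, and the recursion would stall), so --- as in Bun--Livni--Moran --- the argument must be phrased through the heaviest-ball potential $q_k$ together with the domain-independent covering estimate for $\sfat$-bounded, grid-valued $\RSOA$-outputs, rather than a naive ``all-paths-good'' probability.
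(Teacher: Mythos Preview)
Your plan is vastly more elaborate than what is needed, and the machinery you introduce (covering numbers for $\RSOA$-outputs, a heaviest-ball potential $q_k$, the recursion $q_k\ge\zeta\,q_{k-1}$) is both unnecessary and unjustified as written. The paper's proof is essentially two lines: take $k=d$ and $f=c$. Each of the $d$ mistake examples $M_j=(x_j',\alpha_j)$ in the output $S\sim\ext(D,d)$ has its $\alpha_j$ drawn uniformly from $\In_\zeta$ \emph{after} $x_j'$ is fixed, so $\alpha_j$ lands in the bin of $c(x_j')$ with probability at least~$\zeta$; the event ``all $d$ injected $\alpha$'s are valid'' therefore has probability at least $\zeta^d$. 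On that event the whole sample $S\circ B$ is $\zeta$-consistent with $c$ \emph{and} forces $d$ genuine $5\zeta$-mistakes (this is exactly the computation in your second step), so by the $\RSOA$ mistake bound (Theorem~\ref{thm:RSOA}) the output function lies in $\T(5\zeta,c)$. That already gives $\Pr[\RSOA_\zeta(S\circ B)\in\T(5\zeta,c)]\ge\zeta^d$ for $k=d$, which is the lemma.

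Two specific points. First, your ``main obstacle'' is a non-issue: the $\alpha$'s are drawn \emph{after} the rejection loop accepts, so loop termination never depends on whether the $\alpha$'s are valid; and at levels $k<d$, even an ``all-correct'' sub-sample has forced only $k<d$ mistakes, which does \emph{not} pin $\RSOA$'s output near $c$, so there is no reason two such outputs must be $11\zeta$-close. (The well-definedness of the $\ext(D,k)$ distributions is handled separately in the paper via a sample-complexity cutoff; for this lemma one simply assumes the loops terminate.) Second, your covering-number step is both unneeded and dubious: you assert a domain-independent bound $N=(1/\zeta)^{O(d)}$ on the number of $5\zeta$-balls covering an $11\zeta$-neighbourhood in $\supp(\mu_{k-1})$, appealing vaguely to ``growth-function/net estimates for fat-shattering classes'', but $\RSOA$ is an improper learner and its outputs need not lie in any class of bounded $\sfat$ dimension, so this estimate would require real work that you have not supplied. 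The paper sidesteps all of this by anchoring the argument at the single function $f=c$.
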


\begin{proof}
Towards contradiction, suppose for every $k\leq d$ and $f \in \mathcal{R}$, we have
\begin{equation}\label{eq:supposition}
\Pr_{\substack{S \sim \ext(D,d),\\ B\sim D^{m}}}[\RSOA_{\zeta}(S\circ B) \in \T(5\zeta,f)] < \zeta^{d}.
\end{equation}
In particular, Eq.~\eqref{eq:supposition} holds for $f=c$ where $c$ is the target concept. 

In Step 2$(iv)$, Algorithm~\ref{algo:distributions} picks $\alpha$ uniformly from the set of midpoints in $\In_{\zeta}$. Call a mistake example $(x,\alpha)$ `valid' if $|\alpha-c(x)| \leq \zeta$. Notice there are actually two midpoints in $\In_{\zeta}$ which are less than $\zeta$ away from any $c(x)$, and hence, the probability that a mistake example is valid is $2\zeta > \zeta$. Hence the probability that all $d$ mistake examples are valid is at least $\zeta^d$. In the event that all mistake examples are valid, $S$ is a valid sample. Since $S$ contains $d$ mistake examples, and Theorem~\ref{thm:RSOA} guarantees that $\RSOA_{\zeta}$ on a valid sample always outputs some hypothesis function in $\T(5\zeta,c)$ after making $d$ mistakes, this contradicts Eq.~\eqref{eq:supposition}.
\end{proof}
\begin{lemma}[Generalization]\label{lem:generalizability_v2}
Let $\ext(D,\ell)$ be such that $\ell\geq 1$ and there exists $f$ such that
\begin{equation}\label{eq:LB}
\Pr_{\substack{S \sim \ext(D,\ell),\\ B \sim D^{m}}}[\RSOA_{\zeta}(S \circ B) \in \T(5\zeta,f)] \geq  \zeta^d.
\end{equation}
(The above property is the analog of the distribution $\ext(D,\ell)$ being `well-defined' in~\cite{bun2020equivalence}.)

Then, every $f$ satisfying Eq.~\eqref{eq:LB} also satisfies $\Loss_{D}(f,c,6\zeta) \leq d \ln(1/\zeta)/m$.
\end{lemma}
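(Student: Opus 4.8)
Let $f$ be any function satisfying Eq.~\eqref{eq:LB}. The plan is to argue by contraposition: suppose toward contradiction that $\Loss_D(f,c,6\zeta) > d\ln(1/\zeta)/m$; I will deduce that $\Pr_{S \sim \ext(D,\ell),\, B\sim D^{m}}[\RSOA_{\zeta}(S\circ B)\in \T(5\zeta,f)] < \zeta^{d}$, contradicting Eq.~\eqref{eq:LB}. This is the imprecision-robust, real-valued analogue of the generalization step in~\cite{bun2020equivalence}: the fresh block $B$ of $m$ i.i.d.\ examples from $D$ plays the role of a hold-out set, and because the output function of $\RSOA$ is (approximately) consistent with every example it was trained on, the \emph{center} $f$ of any ball $\T(5\zeta,f)$ containing that output cannot disagree too much with $c$ on the points of $B$. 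Hence a large $6\zeta$-loss for $f$ would, with overwhelming probability over $B$, produce a point witnessing a violation, which then prevents $\RSOA_\zeta(S\circ B)$ from ever landing inside $\T(5\zeta,f)$.

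Concretely I would proceed in three steps. \textbf{(i) Consistency transfer.} Fix $S$ and a block $B=\{(x_1,\widehat c(x_1)),\dots,(x_m,\widehat c(x_m))\}$ with $x_j\sim D$, and let $g:=\RSOA_\zeta(S\circ B)$ with final surviving set $V$. By Lemma~\ref{lem:props_RSOA}(1), every function in $V$ maps each $x_j$ into $B(\zeta,\widehat c(x_j))$, so only super-bins whose midpoints are $O(\zeta)$-close to $\widehat c(x_j)$ can have a non-empty intersection with $V$ at $x_j$; combined with Lemma~\ref{lem:props_RSOA}(3) (the $\sfat_{2\zeta}$-maximizing super-bins are mutually within $4\zeta$), this pins the mean $g(x_j)=R_{T+1}(x_j)$ to within $\zeta$ of $\widehat c(x_j)$, hence to within $2\zeta$ of $c(x_j)$ since $|\widehat c(x_j)-c(x_j)|<\zeta$. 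Therefore, if $g\in\T(5\zeta,f)$ then $|f(x_j)-c(x_j)|<5\zeta+\zeta=6\zeta$ for every $x_j\in B$; contrapositively, if $B$ contains a point $x_j$ with $|f(x_j)-c(x_j)|>6\zeta$ then $g\notin\T(5\zeta,f)$. \textbf{(ii) Hold-out bound.} For this fixed $S$, step (i) gives $\Pr_{B\sim D^{m}}[g\in\T(5\zeta,f)] \le \Pr_{B}[\forall j:\ |f(x_j)-c(x_j)|\le 6\zeta] = (1-\Loss_D(f,c,6\zeta))^{m}$, using independence of the $m$ draws; crucially this bound is independent of $S$. \textbf{(iii) Averaging and the parameter choice.} Averaging over $S\sim\ext(D,\ell)$ yields $\Pr_{S,B}[g\in\T(5\zeta,f)]\le (1-\Loss_D(f,c,6\zeta))^{m}\le \exp(-m\,\Loss_D(f,c,6\zeta))$, and under the contradiction hypothesis $\Loss_D(f,c,6\zeta) > d\ln(1/\zeta)/m$ this is strictly below $\exp(-d\ln(1/\zeta))=\zeta^{d}$, contradicting Eq.~\eqref{eq:LB}.

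Steps (ii) and (iii) are routine. The step I expect to be the main obstacle is (i): one must control the consistency of $\RSOA$'s \emph{output} function $R_{T+1}$ rather than the members of $V_{T+1}$ (for which Lemma~\ref{lem:props_RSOA}(1) applies directly), and make the constants line up so that the loss is governed at level exactly $6\zeta$, i.e.\ the ball radius $5\zeta$ plus the feedback imprecision $\zeta$. This needs a careful use of the super-bin geometry in Lemma~\ref{lem:props_RSOA}(3) together with the interleaved-cover fact that a $\zeta$-ball sits inside a single super-bin of the $2\zeta$-cover, plus a check of the benign edge cases (e.g.\ $\widehat c(x_j)$ near the boundary of $[0,1]$, or $\sfat_{2\zeta}(V_{T+1})=0$), using that $c$ always survives (Lemma~\ref{lem:props_RSOA}(2)) so that at least one super-bin at $x_j$ is non-empty and $R_{T+1}(x_j)$ is well-defined.
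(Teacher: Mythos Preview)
Your approach is the same as the paper's: both argue that if $\RSOA_\zeta(S\circ B)$ lands in $\T(5\zeta,f)$ then the center $f$ must be $O(\zeta)$-consistent with every example in the fresh block $B$, then bound this event by $(1-\Loss_D(f,c,O(\zeta)))^m\le e^{-m\,\Loss}$ and compare with the lower bound $\zeta^d$. The paper's version is terser: it invokes Property~1 of Lemma~\ref{lem:props_RSOA} to assert that the output $f'$ is $\zeta$-consistent with $B$, then adds the ball radius $5\zeta$ to get $6\zeta$; you correctly flag that Property~1 literally speaks of members of $V_t$ rather than the output function $R_{T+1}$, and you try to repair this via the super-bin geometry, which is the right instinct. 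One small caution on constants: your claim that $g(x_j)$ is pinned to within $\zeta$ of $\widehat c(x_j)$ is slightly optimistic (a non-empty, maximizing super-bin midpoint can sit up to about $2\zeta$ away, e.g.\ when $\widehat c(x_j)$ coincides with a midpoint), and your final triangle inequality should read $5\zeta+2\zeta$ rather than $5\zeta+\zeta$ given your own bound $|g(x_j)-c(x_j)|<2\zeta$; these are cosmetic and do not affect the soundness of the argument.
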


\begin{proof}
Let $S \sim \ext(D,\ell)$ and $B \sim D^{m}$. Suppose  $\RSOA_{\zeta}(S\circ B)$ outputs a function $f'\in \T(5\zeta,f)$. Now, for $f'\in \mathcal{R}$, let $E_{f'}$ be the event  that $\operatorname{\RSOA}_{\zeta}(S \circ B) $ outputs $f'$. Then observe that 
\begin{align}
\label{eq:sum_probs}
\begin{aligned}
\Pr_{\substack{S \sim \ext(D,\ell),\\ B\sim D^{m}}}[\RSOA_{\zeta}(S\circ B) \in \T(5\zeta,f)] &= \sum_{f':\, f' \in \T(5\zeta,f)}\Pr_{\substack{S \sim \ext(D,\ell),\\ B \sim D^{m}}}[E_{f'}]\\
&\leq \sum_{f':\, f' \in \T(5\zeta,f)} \Pr_{\substack{S \sim \ext(D,\ell),\\ B \sim D^{m}}}[B\text{ is }\zeta\text{-consistent with }f'] \\
&\leq \Pr_{\substack{S \sim \ext(D,\ell),\\ B \sim D^{m}}}[B\text{ is }6\zeta\text{-consistent with }f],
\end{aligned}
\end{align}
where the first inequality follows from combining two observations:
\begin{enumerate}
\item Since $B$ is a subset of the examples fed to $\RSOA_{\zeta}$, by Property 1 in  Lemma~\ref{lem:props_RSOA}, if $\operatorname{\RSOA}_{\zeta}(S \circ B) $ outputs $f'$ then $f'$ is $\zeta$-consistent with all $m$ examples in $B$;
\item By Property 4 of Lemma~\ref{lem:props_RSOA} (for a fixed sample, no two different functions can be output by $\RSOA$), $\{E_{f'}\}_{f' \in \mathcal{R}}$ are disjoint on the sample space;
\end{enumerate}
and the last inequality used  that $f'$ is in a $5\zeta$-ball of $f$, hence $f$ is $\zeta+5\zeta = 6\zeta$ consistent with $B$. 
Recall that Eq.~\eqref{eq:LB} shows that the LHS of Eq.~\eqref{eq:sum_probs} is lower-bounded by $\zeta^d$.  If we define  $\Loss_{D}(f,c,6\zeta) := \alpha$, then by the definition of loss, since $B$ is a sample of $m$ i.i.d.~examples drawn from $ D$, the RHS of the inequality above is $(1-\alpha)^m$.  Putting together the lower and upper bound  $\zeta^d \leq (1-\alpha)^m \leq e^{-\alpha m}$, proves  the lemma statement. 
\end{proof}

\subsubsection{A Monte Carlo version of the tailoring algorithm}\label{sec:stoppingcriterion}

Algorithm~\ref{algo:distributions} that we described in the previous section could potentially run steps $(i)-(iii)$ forever.  Apriori it is not clear why this algorithm terminates. In this section, 
we compute the expected number of examples drawn by Algorithm~\ref{algo:distributions} and eventually  use Markov's inequality to define a ``stopping criterion" (a sample complexity cutoff) on Algorithm~\ref{algo:distributions} so that the algorithm eventually stops drawing a certain number of examples. The reason the number of examples drawn is a random variable is that steps 2$(i)$ to 2$(iii)$ of Algorithm~\ref{algo:distributions} must be repeated until there is one round where~$f_0$, $f_1$ are distance more than $11\zeta$ apart, i.e.,      there exists  $x\in \mathcal{X}$ satisfying $|f_0(x) - f_1(x)| > 11\zeta$.

\begin{lemma}[Expected number of examples drawn in Steps 2$(i)$ to 2$(iii)$]\label{lem:expectedsamplecomp}
Let $\zeta \in [0,1/2]$ and let~$k^*$ be the smallest value (guaranteed to exist by Lemma~\ref{lem:freq}) for which 
\begin{align}
\label{eq:promiseoflem}
\Pr_{\substack{S \sim \ext(D,k^*),\\ B\sim D^{m}}}[\RSOA_{\zeta}(S\circ B) \in \T(11\zeta,f)] \geq \zeta^{d}
\end{align}
holds. Let $\ell \leq k^{\ast}$ and $M_{\ell}$ denote the number of examples drawn from $D$ in order to generate a sample $S \sim \ext(D,\ell)$. Then
\[
\mathbb{E}\left[M_{\ell}\right] \leq 4^{\ell+1} \cdot m,
\]
where the expectation is taken over the random sampling process in Algorithm~\ref{algo:distributions}.
\end{lemma}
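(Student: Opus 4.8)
The plan is to prove the bound by induction on $\ell$, from $\ell=0$ up to $\ell=k^*$, peeling off one level of the recursion defining $\ext(D,\ell)$ in Algorithm~\ref{algo:distributions}. The base case is trivial: $\ext(D,0)$ deterministically outputs the empty sample, drawing no examples from $D$, so $\mathbb{E}[M_0]=0\le 4m$. For the inductive step fix $1\le\ell\le k^*$ and assume $\mathbb{E}[M_{\ell-1}]\le 4^{\ell}m$. Generating $S\sim\ext(D,\ell)$ consists of a run of i.i.d.\ \emph{attempts}: in an attempt one draws $S^{(0)},S^{(1)}\sim\ext(D,\ell-1)$ and $B^{(0)},B^{(1)}\sim D^{m}$, sets $f_b=\RSOA_{\zeta}(S^{(b)}\circ B^{(b)})$, and the attempt \emph{fails} (step~2(iii) loops back to step~2(i)) precisely when $|f_0(x)-f_1(x)|\le 11\zeta$ for every $x\in\X$. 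When an attempt succeeds, the appended mistake example $M_\ell=(x',\alpha)$ is built from $f_0,f_1$ and a uniform $\alpha\sim\In_\zeta$, costing no draws from $D$. Hence $M_\ell$ equals the sum, over the attempts actually performed, of a per-attempt cost of expectation exactly $2\mathbb{E}[M_{\ell-1}]+2m$ (two nested $\ext(D,\ell-1)$ samples plus $2m$ direct draws).

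The crucial estimate is that the per-attempt failure probability $q$ obeys $q\le\zeta^{d}\le 1/2$, and this is where the \emph{minimality} of $k^*$ enters. Let $\mu$ be the law of $\RSOA_\zeta(S\circ B)$ for $S\sim\ext(D,\ell-1)$, $B\sim D^m$. Since $\ell-1<k^*$, the definition of $k^*$ says no $f$ satisfies $\mu(\T(11\zeta,f))\ge\zeta^{d}$. Conditioning on $f_1=g$, the attempt fails with probability $\mu(\{h:\|h-g\|_\infty\le 11\zeta\})$, and averaging over $g\sim\mu$,
\[
q=\mathbb{E}_{g\sim\mu}\!\left[\mu\big(\{h:\|h-g\|_\infty\le 11\zeta\}\big)\right]\le\sup_{f}\,\mu\big(\{h:\|h-f\|_\infty\le 11\zeta\}\big)<\zeta^{d},
\]
where the last step invokes minimality of $k^*$. (The gap between the closed condition $\|h-f\|_\infty\le 11\zeta$ and the open ball defining $\T(11\zeta,f)$ is immaterial: $\RSOA$ outputs functions valued in a discrete set — multiples of $\zeta/2$ — so it is absorbed by an arbitrarily small slackening of the constant $11$.)

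Finally we combine. As attempts are i.i.d.\ with failure probability $q$, the number $N$ of attempts performed has $\Pr[N\ge j]=q^{j-1}$; moreover $\{N\ge j\}$ depends only on the fresh samples from attempts $1,\dots,j-1$, hence is independent of the cost $A_j$ of attempt $j$, for which $\mathbb{E}[A_j]=2\mathbb{E}[M_{\ell-1}]+2m$ by the inductive hypothesis. Therefore
\[
\mathbb{E}[M_\ell]=\sum_{j\ge 1}\mathbb{E}[A_j]\,\Pr[N\ge j]=\frac{2\mathbb{E}[M_{\ell-1}]+2m}{1-q}\le 4\mathbb{E}[M_{\ell-1}]+4m,
\]
using $q\le 1/2$. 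Unrolling this recursion from $\mathbb{E}[M_0]=0$ gives $\mathbb{E}[M_\ell]\le 4m\sum_{i=0}^{\ell-1}4^{i}\le 4^{\ell+1}m$, completing the induction.

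I expect the second paragraph to be the main obstacle: one must recognise the per-attempt failure probability as a ``collision''-type quantity $\mathbb{E}_{g\sim\mu}[\mu(\text{$11\zeta$-ball around }g)]$ dominated by $\sup_f\mu(\text{$11\zeta$-ball around }f)$, invoke the minimality of $k^*$ \emph{at level $\ell-1$} (where the concentration provably fails — it may genuinely hold at level $\ell$), and reconcile the harmless open/closed-ball discrepancy with the discrete range of $\RSOA$. The remaining pieces — the geometric tail for $N$, the independence of $\{N\ge j\}$ from the $j$-th attempt's cost, and unrolling the linear recursion — are routine.
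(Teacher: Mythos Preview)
Your proof is correct and follows essentially the same route as the paper: both use minimality of $k^*$ at level $\ell-1$ to bound the per-attempt failure probability by $\zeta^d\le 1/2$, then sum the geometric series to obtain the recursion $\mathbb{E}[M_\ell]\le 4(\mathbb{E}[M_{\ell-1}]+m)$ and unroll. You are in fact slightly more careful than the paper in two places—making the independence of $\{N\ge j\}$ from the $j$-th attempt's cost explicit, and flagging the open/closed ball discrepancy (which the paper silently conflates)—though your specific claim that $\RSOA$'s outputs are multiples of $\zeta/2$ is not quite right (they are multiples of $\zeta$, as the mean of at most three consecutive points of $\tilde\In_{2\zeta}$), but the discreteness point stands.
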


\begin{proof}
Because we have chosen $k^{\ast}$ to be the smallest value for which Eq.~\eqref{eq:promiseoflem} is true, this implies that for every $\ell'<k^{\ast}$ and $ f \in \mathcal{R}$, we have
$$ 
\Pr_{\substack{S \sim \ext(D,\ell'),\\ B \sim D^{m}}}[\RSOA_{\zeta}(S\circ B) \in \T(11\zeta,f)] < \zeta^d 
$$
  which is equivalent to  
    $$
    \Pr_{\substack{S \sim \ext(D,\ell'),\\ B \sim D^{m}}}[\RSOA_{\zeta}(S\circ B) \notin \T(11\zeta,f)] \geq 1- \zeta^d.\label{eq:unconcentrated}
$$
Now consider sampling from $\ext(D,\ell)$ such that $0\leq \ell \leq k^{\ast}$. Call each round of 2$(i)$ to 2$(iii)$ `successful' if it results in $f_0$, $f_1$ such that $|f_0(x)-f_1(x)|> 11\zeta$ for some $x$. 
Upon success, the algorithm proceeds to step 2$(iv)$. Let us assume that the probability of success for the $\ell$th round is  $\theta$. Then one can express $\theta$ as follows: 
\[
\begin{aligned}
\theta &=\sum_{f_0 \in \mathcal{R}} \Pr_{\substack{S_0\sim \ext(D,\ell-1),\\ B_0\sim D^{m}}}[\RSOA(S_0 \circ B_0)=f_0]\cdot   \Pr_{\substack{S_1\sim \ext(D,\ell-1),\\ B_1\sim D^{m}}}[\RSOA(S_1 \circ B_1 ) = f_1, \,f_1\not \in \T(11\zeta,f_0)]\\
& \geq (1-\zeta^d) \sum_{f_0 \in \mathcal{R}} \Pr_{\substack{S_0\sim \ext(D,\ell-1),\\ B_0\sim D^{m}}}[\RSOA(S_0 \circ B_0)=f_0] =1-\zeta^d,
\end{aligned}
\]
where the first equality is because `success' is defined as $|f_0(x)-f_1(x)|> 11\zeta$ at some $x$, equivalently $f_1\not \in \T(11\zeta,f_0)$, and we used Eq.~\eqref{eq:unconcentrated} in the inequality.

Furthermore, sampling from $\ext(D,\ell)$ involves sampling from $\ext(D,\ell-1),\ldots, \ext(D,0)$. Therefore, the number of examples drawn to sample from $\ext(D,\ell)$, $M_{\ell}$, is a function of $M_{\ell-1},\ldots ,M_0$. Let $M_{\ell}^{(j)}$ be the number of examples drawn during the $j$th attempt at sampling from distribution $\ext(D,\ell)$ and write
$    M_{\ell}=\sum_{j=1}^{\infty} M_{\ell}^{(j)}.
$
While sampling from distribution $\ext(D,\ell)$, if we succeed prior to the $j$-th attempt, $M_{\ell}^{(j)}=0$; otherwise, if the first $j-1$ attempts end in failure, we have to draw two examples from $\ext(D,\ell-1)$ and two examples from $D^m$. Therefore, we may define the recursive equation
\begin{equation}
\mathbb{E}\left[M_{\ell}^{(j)}\right]=(1-\theta)^{j-1} \cdot\left(2 \mathbb{E}\left[M_{\ell-1}\right]+2 m\right),
\end{equation}
since each attempt involves drawing two examples from $\ext(D,\ell-1)$ and two examples from $D^{m}$ and we used the fact that the probability of failure is $(1-\theta)^{j-1}$.
Therefore, we have
\begin{align}\label{eq:boundonE(M_l)}
\begin{aligned}
\mathbb{E}\left[M_{\ell}\right]=\sum_j\mathbb{E}\left[M_{\ell}^{(j)}\right]&=\sum_{j=1}^{\infty}(1-\theta)^{j-1} \cdot\left(2 \mathbb{E}\left[M_{\ell-1}\right]+2 m\right)\\
&=\frac{1}{\theta} \cdot\left(2 \mathbb{E}\left[M_{\ell-1}\right]+2 m\right) \\
&\leq \frac{1}{1-\zeta^d}\cdot ( 2\mathbb{E}\left[M_{\ell-1}\right]+2 m)\leq 4\cdot ( \mathbb{E}\left[M_{\ell-1}\right]+m),
\end{aligned}
\end{align}
where we have used the fact that $\zeta<1/2$ to obtain the last inequality. Using that $\mathbb{E}[M_0]=0$ and using induction on Eq.~\eqref{eq:boundonE(M_l)} gives us the lemma statement. 
\end{proof}

\subsubsection{Final algorithm}
Putting together these pieces, we now prove our main theorem.

\begin{theorem}[Globally stable learner from online learner] \label{thm:global_stability}
Let $\alpha>0$. Let $\Cc\subseteq \{f:\X\rightarrow [0,1]\}$ be a concept class with $\sfat_{2\zeta}(\Cc) = d$. Let $c\in \Cc$ be the target concept. Let
\[
T =\left(2\cdot (4/\zeta)^{d+1}+1\right) \cdot \frac{d \ln (1/\zeta)}{\alpha}.
\]
Let $D:\X\rightarrow [0,1]$ be a distribution. There exists a randomized algorithm $G:(\mathcal{X} \times[0,1])^{T} \rightarrow [0,1]^{\mathcal{X}}$ that satisfies the following: given $T$ many examples $S=\{(x_i,\widehat{c}(x_i))\}$ where $x\sim D$, there exists a hypothesis $f$ such that
\begin{equation}\label{eq:relaxed_stability}
\operatorname{Pr}[G(S) \in \T(11\zeta,f)] \geq \frac{\zeta^{d}}{2(d+1)} \text { and } \Loss_{D}(f,c, 12\zeta) \leq \alpha
\end{equation}
\end{theorem}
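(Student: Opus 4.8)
The plan is to realise $G$ as a Monte-Carlo version of the tailoring construction of Algorithm~\ref{algo:distributions}. First I would fix the block size $m := d\ln(1/\zeta)/\alpha$ and a consumption cutoff $C := 2(4/\zeta)^{d+1}m$. On input $T = C+m$ examples drawn i.i.d.\ from $D$ (with $\zeta$-imprecise labels), $G$ picks $k$ uniformly at random from $\{0,1,\dots,d\}$; it then runs Algorithm~\ref{algo:distributions} with parameters $(D,m,k)$ to produce a sample $S\sim\ext(D,k)$, consuming examples from the input as needed, \emph{but} it aborts and outputs some fixed arbitrary hypothesis as soon as it has consumed more than $C$ examples; finally, if it has not aborted, it reads one further block $B\sim D^m$ and outputs $h:=\RSOA_\zeta(S\circ B)$. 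By construction $G$ reads at most $C + m = (2(4/\zeta)^{d+1}+1)\cdot\frac{d\ln(1/\zeta)}{\alpha} = T$ examples, so the sample-complexity bound is immediate, and it remains to exhibit the target hypothesis $f$ and verify the two guarantees in Eq.~\eqref{eq:relaxed_stability}.

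For the hypothesis, I would invoke Lemma~\ref{lem:freq} together with the inclusion $\T(5\zeta,g)\subseteq\T(11\zeta,g)$: this produces a \emph{smallest} index $k^\ast\le d$ and a function $f^\ast\in\mathcal{R}$ for which Eq.~\eqref{eq:promiseoflem} holds, and this is exactly the $k^\ast$ featuring in Lemma~\ref{lem:expectedsamplecomp}. I set $f := f^\ast$. Assuming $k^\ast\ge 1$ (the degenerate $k^\ast=0$ case is immediate, since then $\RSOA$ already outputs a member of a small tube around $c$ from a plain i.i.d.\ sample), the $11\zeta$-analogue of Lemma~\ref{lem:generalizability_v2} gives the accuracy half of Eq.~\eqref{eq:relaxed_stability}: its proof carries over with $\T(11\zeta,f^\ast)$ in place of $\T(5\zeta,f^\ast)$ — any output of $\RSOA_\zeta(S\circ B)$ is $\zeta$-consistent with the i.i.d.\ block $B$ (Property~1 of Lemma~\ref{lem:props_RSOA}), distinct outputs correspond to disjoint events (Property~4), and membership in $\T(11\zeta,f^\ast)$ makes $f^\ast$ itself $12\zeta$-consistent with $B$ — whence $\zeta^d\le(1-\Loss_D(f^\ast,c,12\zeta))^m$ and therefore $\Loss_D(f^\ast,c,12\zeta)\le d\ln(1/\zeta)/m = \alpha$.

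For the stability half, I would condition on the event $\{k=k^\ast\}$, which occurs with probability $1/(d+1)$. Let $M$ be the number of examples that the un-cut Algorithm~\ref{algo:distributions} with parameters $(D,m,k^\ast)$ would consume. Lemma~\ref{lem:expectedsamplecomp} gives $\mathbb{E}[M]\le 4^{k^\ast+1}m\le 4^{d+1}m$, so Markov's inequality yields $\Pr[M>C]\le 4^{d+1}m/(2(4/\zeta)^{d+1}m) = \zeta^{d+1}/2\le\zeta^d/2$. On the event $\{M\le C\}$ the cutoff never fires, so $G$'s output agrees with that of the un-cut run, which by the defining property of $k^\ast$ lies in $\T(11\zeta,f^\ast)$ with probability at least $\zeta^d$. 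A union bound then gives, conditioned on $k=k^\ast$,
\[
\Pr[G(S)\in\T(11\zeta,f^\ast)] \;\ge\; \zeta^d-\Pr[M>C] \;\ge\; \zeta^d-\tfrac{\zeta^d}{2} \;=\; \tfrac{\zeta^d}{2},
\]
and multiplying by $\Pr[k=k^\ast]=1/(d+1)$ gives $\Pr[G(S)\in\T(11\zeta,f^\ast)]\ge\zeta^d/(2(d+1))$, which is the first half of Eq.~\eqref{eq:relaxed_stability}.

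The step I expect to be the real obstacle is calibrating the cutoff $C$. A naive cutoff of, say, $O(\mathbb{E}[M])$ would make the abort probability a constant, which would completely swamp the $\zeta^d$ probability of $\RSOA$ landing in the target tube, rendering the union bound vacuous; one is forced to push the cutoff up to $C=\Theta(\mathbb{E}[M]\cdot\zeta^{-d})$ so that the abort probability falls below $\tfrac12\zeta^d$, and this is precisely the source of the $\zeta^{-(d+1)}$ overhead in $T$. Everything else is bookkeeping: making sure the index $k^\ast$ delivered by Lemma~\ref{lem:freq} is literally the one used in Lemma~\ref{lem:expectedsamplecomp} despite the $5\zeta$-versus-$11\zeta$ mismatch between their statements, and carefully tracking the $5\zeta\to 11\zeta\to 12\zeta$ slack so that the accuracy guarantee comes out as $\Loss_D(f,c,12\zeta)\le\alpha$ rather than something weaker.
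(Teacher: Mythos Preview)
Your proposal is correct and follows essentially the same approach as the paper's own proof: the same Monte-Carlo algorithm (random $k\in\{0,\dots,d\}$, run Algorithm~\ref{algo:distributions} with cutoff $C=2(4/\zeta)^{d+1}m$, append a fresh block $B$, run $\RSOA_\zeta$), the same choice of $k^\ast$ as the minimal index satisfying the $11\zeta$-tube version of Lemma~\ref{lem:freq}, the same Markov bound on the abort probability via Lemma~\ref{lem:expectedsamplecomp}, and the same union-bound/multiply-by-$1/(d+1)$ finish. Your explicit handling of the $k^\ast=0$ case and of the $5\zeta\to11\zeta$ inclusion are careful touches that the paper leaves implicit.
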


\begin{proof}
The algorithm $G$ in the theorem statement is exactly the algorithm we defined in the previous two sections along with a cutoff at $T$ examples.

\begin{algorithm}
\begin{enumerate}
\item Draw $k \in\{0,1, \ldots, d\}$ uniformly at random.
\item Let $\ext(D,k)$ be the distribution described in  Algorithm~\ref{algo:distributions} but additionally imposing a cutoff $T$ on sample complexity (i.e., we output `fail' if the number of examples drawn in sampling from $\ext(D,k)$ ever exceeds $T$), where the auxiliary sample size is set to $m= d\ln (1/\zeta) /\alpha$ and cutoff $T= 2\cdot (4/\zeta)^{d+1} \cdot m$.\footnotemark 

Let $B \sim D^{m}$ and $S \sim \ext(D,k)$ and output $h=\RSOA_{\zeta}(S \circ B)$.
\end{enumerate}
\caption{Final globally-stable algorithm $G$ to learn concept class $\Cc\subseteq \{f:\mathcal{X}\rightarrow [0,1]\}$.
}
\label{algo:stable_algo}
\end{algorithm}
Note that because we have enforced the cutoff at $T$ examples in drawing $S \sim \ext(D,k)$, the sample complexity of $G$ is $|S|+|B| \leq T+m=\left(2\cdot (4/\zeta)^{d+1}+1\right) \cdot \frac{d \ln (1/\zeta) }{\alpha}$ as stated in the theorem statement.  Lemma~\ref{lem:freq} guarantees that there exists $k \leq d$ and $f^{\ast}$ such that Eq.~\eqref{eq:LB} holds. Let $k^{\ast}$ be the smallest $k$ such that Lemma~\ref{lem:freq} holds with the constant $5\zeta$ replaced by $11\zeta$, and
\begin{equation}
\label{eq:promiseoflem_2}
\Pr_{\substack{S \sim \ext(D,k^\ast),\\ B\sim D^{m}}}[\RSOA_{\zeta}(S\circ B) \in \T(11\zeta,f^{\ast})] \geq \zeta^{d}.
\end{equation}
Then Lemma~\ref{lem:generalizability_v2} (with a simple modification for the new constant) implies that
  $  \Loss_{D}(f,c,12\zeta) \leq  d\ln (1/\zeta) /m  \leq \alpha$.
\footnotetext{For simplicity in notation, we assume $cd/\alpha$ is an integer. If not, one can set $m=\lceil cd/\alpha \rceil$.} 

We now show that the probability that $G$ outputs some function in $\T(11\zeta,f^{\ast})$ is $\frac{1}{2(d+1)}\cdot \zeta^{d}$. Firstly, with probability $\frac{1}{d+1}$, the randomly drawn $k$ in step 2 is $k^{\ast}$. Conditioned on this, we now show that with high probability, the loop in Steps 2$(i)$ to 2$(iii)$ will terminate after drawing $T = 2\cdot (4/\zeta)^{d+1} \cdot m$ examples. 
\begin{align}
\label{eq:stoppingcriteron}
    \Pr\big[M_{k^{\ast}} > 2\cdot (4/\zeta)^{d+1}\cdot m\big] &\leq \Pr\big[M_{k^{\ast}} > 2\cdot \zeta^{-d}\cdot 4^{k^{\ast}+1}\cdot m\big]
    \leq \zeta^{d}/2,
\end{align}
where the first inequality used $k^*\leq d$ and
 the second inequality is by Markov's inequality and Lemma~\ref{lem:expectedsamplecomp}.  Putting together Eq.~\eqref{eq:promiseoflem_2} and~\eqref{eq:stoppingcriteron} 
 the probability that $\RSOA(S\circ B)$ outputs a function in $\T(11\zeta,f^*)$
 and also Algorithm~\ref{algo:distributions} terminates before the cutoff $T$ is
\begin{equation}
\begin{aligned}
&\Pr_{\substack{S \sim \ext(D,k^\ast),\\ B\sim D^{m}}}\left[\RSOA(S\circ B) \in \T(11\zeta,f^{\ast})\text { and } M_{k^{*}} \leq 2\cdot (4/\zeta)^{d+1} \cdot m\right] 
 \geq \zeta^{d} - \zeta^{d}/2 = \zeta^{d}/2
\end{aligned}
\end{equation}
Multiplying this together with $1/(d+1)$ yields our claim.
\end{proof}

\subsection{Stability does not imply approximate $\DP$ (without a domain-size dependence)} \label{sec:stabdoesntimplyapproxDP}

In the previous section we showed that if a concept class $\Cc$ can be learned in the quantum online learning framework, then there exists a globally stable learner (with appropriate parameters) for $\Cc$ as well. This implication was first pointed out by \cite{bun2020equivalence} for Boolean-valued $\Cc$s. In fact, they went one step further and created a \emph{approximately differentially-private} learner from a stable learner. In this sense, stability can be viewed as an intermediate property between online learnability and approximate differential privacy in the Boolean setting. Jung et al.~\cite{jung2020equivalence} used the same technique to show that stability implies approximate differential privacy in the {\em multiclass} learning setting as well (i.e., when the concept class to be learned maps to a discrete set $\{1,\ldots,k\}$), but they do not show that an analogous implication holds for real-valued learning, which they mention briefly. Note that their real-valued learning setting is less general than ours, as they assume that they receive exact feedback on each example (we discuss this at the end of this section). 

A natural question is: does this result still hold in the quantum learning setting, i.e., does quantum stability imply quantum differential privacy? In this section, we show that the \cite{bun2020equivalence} method for showing this implication for Boolean functions -- which held up in the case of learning multiclass functions -- fails for learning real-valued functions with imprecise feedback. Unlike in the former two cases, the transformation from stable learner to approximate $\DP$ learner necessarily incurs a domain-size dependence in the sample complexity. This is undesirable because, when $\X$ is a real-interval or if it is unbounded, this quantity could potentially be infinite. 

\subsubsection{Sample complexity of stability to privacy transformation}\label{subsubsec:stabletoprivate}

In the Boolean setting,~\cite{bun2020equivalence} showed that one could use the stable histograms algorithm~\cite{bun2019simultaneous} and the Generic Private Learner of~\cite{kasiviswanathan2011can}, to convert a Boolean globally-stable learner, in a black-box fashion, to a private learner. This learner's sample complexity depends on $\Ldim(\Cc)$ and the privacy and accuracy parameters of the stable learner, but \emph{not} the domain size of the function class. We now show that this technique cannot possibly yield a domain size-independent sample complexity for quantum learning.

Our stable learner $G$ has the following guarantees (given in Theorem~\ref{thm:global_stability}):  there exists some {\em function ball} (around the target concept) such that the collective probability of $G$ outputting its member functions is high. Contrast this with the global stability guarantee for learning Boolean functions~\cite{bun2020equivalence}, which says that $G$ outputs some {\em fixed function} with high~probability. The stability guarantees differ because, in our setting, the learner only obtains $\eps$-accurate feedback from the adversary. Hence the learner cannot uniquely identify the target concept $c$, since all functions that are in the $\varepsilon$-ball of $c$ would be consistent with the feedback of the adversary, and we thus allow the learner to output a function in the $\varepsilon$-ball around the target concept. However, this difference critically prevents us from using the~\cite{bun2020equivalence} technique to transform a stable learner into a private learner in the quantum case. We sketch this argument below\footnote{The following argument was communicated to us by Mark Bun~\cite{bununpublished}.}, which relies on ideas from classical fingerprinting codes~\cite{bun2018fingerprinting} (which were also used earlier by Aaronson and Rothblum~\cite{aaronson2019gentle} in order to give lower bounds on gentle shadow tomography).

\cite{bun2020equivalence}'s transformation from stable learner to private learner, applied to our setting, would be as follows: generate a list of functions in $\Cc$ by running the stable learner $G(S)$ of Theorem \ref{thm:global_stability},~$n$ many times, each of which outputs a single $f_i \in \Cc$. By Theorem \ref{thm:global_stability} and a Chernoff bound, one can show that with high probability, an $\eta = \zeta^d$-fraction of the list should be in $\T(\zeta, f^{\ast})$ for some~$f^{\ast}$. Next one would like to \emph{privately} output some function in $\T(\zeta, f^{\ast})$. We rewrite this now as follows.

\begin{problem}[Query release for function balls]
\label{prob:stabletoprivate}
Given a list of $n$ functions $\{f_i: \mathcal{X} \rightarrow \mathbb{R}\}_{i\in [n]}$, an $\eta$-fraction of which are in $\T(\zeta, f^{\ast})$ for some $f^{\ast}: \mathcal{X} \rightarrow \mathbb{R}$,
output some function $g \in \T(\zeta, f^{\ast})$.
\end{problem}

We could also consider the following problem of clique identification on a discrete domain.

\begin{problem}[Clique identification on a discrete domain]
\label{prob:clique}
{\em Clique identification} is the following problem: given a symmetric, reflexive relation $R \subseteq \mathcal{Y} \times \mathcal{Y}$ and a dataset $D \in \mathcal{Y}^{n}$ under the promise that $(x, y) \in R$ for every $x, y \in D,$ find any point $z \in \mathcal{Y}$ such that $(x, z) \in R$ for every $x \in D$. {\em Clique identification on a discrete domain} is {\em clique identification} with $\mathcal{Y} = [4]^d$ and $R = \{(x,y) \in \mathcal{Y}\times \mathcal{Y} :\| x-y \|_{\infty} \leq 1 \}$. 
\end{problem} 

Problem \ref{prob:clique} is a special case of Problem \ref{prob:stabletoprivate}, when we choose the functions $f$ to be of the form $f:[d] \rightarrow [4]$, $\eta=1$ and $\zeta = 1/2$, and let $D$ consist of the $n$ vectors $[f_i(1), \ldots f_i(d)], \, i\in [n]$. Hence, any $\DP$ algorithm for query release for function balls is also a $\DP$ algorithm for clique identification on a discrete domain. However, we claim the following: 

\begin{claim}\label{claim}
For $\delta< 1/1500$, any $(1,\delta)$-$\DP$ algorithm\footnote{It is not hard to modify this proof so as to allow an $\eps$ privacy parameter.} solving Problem \ref{prob:clique} with probability at least $1499/1500$ requires $n \geq \tilde{\Omega}(\sqrt{d})$.
\end{claim}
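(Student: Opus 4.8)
The plan is to reduce from a lower bound for \emph{one-way marginals} (equivalently, attribute-mean estimation) under approximate differential privacy, for which tight $\tilde\Omega(\sqrt d)$ bounds are known via fingerprinting codes~\cite{bun2018fingerprinting}. Recall the one-way marginals problem: given a dataset $D\in\{0,1\}^{n\times d}$, output $a\in[0,1]^d$ with $\|a-\bar D\|_\infty\le 1/3$ (say), where $\bar D$ is the coordinate-wise mean of the rows. It is known that any $(1,\delta)$-$\DP$ algorithm that succeeds with constant probability for $\delta$ a small enough constant must have $n\ge\tilde\Omega(\sqrt d)$; moreover the hard instances are \emph{product} distributions, and one may take the rows to be i.i.d.\ from a ``fingerprinting'' distribution on $\{0,1\}^d$ for which every coordinate marginal is bounded away from $0$ and $1$.

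First I would set up the reduction. Partition the $d$ coordinates of the clique-identification instance into $d$ blocks of size $1$ (so $\mathcal Y=[4]^d$), and use the pair $\{1,2\}\subset[4]$ to encode a bit: given a one-way-marginals dataset $D\in\{0,1\}^{n\times d}$, map row $i$ to the vector $x_i\in[4]^d$ with $(x_i)_j=1$ if $D_{ij}=0$ and $(x_i)_j=2$ if $D_{ij}=1$. This map is $1$-Lipschitz on the Hamming metric in the sense that changing one row of $D$ changes one row of the clique instance, so it preserves neighboring-ness and hence preserves $(1,\delta)$-$\DP$. The promise of Problem~\ref{prob:clique} is satisfied because any two of the constructed points differ coordinate-wise by at most $1$. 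Now suppose $A$ is a $(1,\delta)$-$\DP$ algorithm solving clique identification with probability $\ge 1499/1500$; it returns $z\in[4]^d$ with $\|z-x_i\|_\infty\le 1$ for every $i$. The key observation is: if for some coordinate $j$ the true marginal $\bar D_j$ is bounded away from $0$ and $1$ — say $\bar D_j\in[1/5,4/5]$ — then \emph{both} values $1$ and $2$ appear among $\{(x_i)_j\}_i$, which forces $z_j\in\{2\}$ (the unique integer within $\ell_\infty$-distance $1$ of both $1$ and $2$ when restricted to $\mathbb Z$)... more carefully, $z_j$ must lie in $\{1,2\}\cap\{1,2,3\}=\{1,2\}$. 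So $z_j$ alone does not pin down $\bar D_j$; to extract a real estimate I would instead post-process: after computing $z$, run a second $\DP$ step (or reuse the fact that the fingerprinting lower bound already forbids \emph{any} $(1,\delta)$-$\DP$ algorithm whose output, together with $D$, is ``traceable''). The cleanest route is to have $A$'s output be used as side information in a tracing attack: the fingerprinting-code adversary, given $z$, can identify a member of $D$, contradicting $\DP$, unless $n\ge\tilde\Omega(\sqrt d)$.

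Concretely, I would invoke the fingerprinting-code machinery in the form: there is a distribution over datasets $D\in\{0,1\}^{n\times d}$ (rows i.i.d.) and a tracing algorithm $\mathsf{Trace}$ such that for \emph{any} (randomized) $\mathrm{Mech}$ producing output $w$ with the ``marginal-accuracy'' property ($\forall j$, $w$ reveals $\bar D_j$ to accuracy $o(1)$ with good probability), either $\mathsf{Trace}(D,w)$ outputs some $i$ that is in $D$ with probability $\ge 1/1500$ — violating $(1,\delta)$-$\DP$ for $\delta<1/1500$ — or $n\ge\tilde\Omega(\sqrt d)$. So it suffices to argue that a clique-identification solution $z$ yields a marginal-accurate summary. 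For coordinates $j$ with $\bar D_j$ close to $1/2$, we just argued $z_j\in\{1,2\}$, which is \emph{not} informative — this is the gap in the naive reduction. I would close it by using a \emph{randomized} encoding: independently for each coordinate $j$, with the encoding above map $\{0,1\}\to\{1,2\}$ but with probability $1/2$ \emph{flip} whether a majority-$0$ column gets pushed to $1$ vs.\ $3$; more robustly, embed $D_{ij}$ into the $\ell_\infty$-ball structure so that the value $z_j$ forced by the clique constraint is a (noisy) rounding of $\bar D_j$. The right construction, which I would work out in detail, makes $z_j$ equal to $1$ if $\bar D_j<c$, $3$ if $\bar D_j>1-c$, and $2$ in between, for a constant $c$ — giving a ternary-quantized marginal, which is exactly the accuracy regime ($\Theta(1)$ additive error on each marginal) for which the fingerprinting lower bound $n\ge\tilde\Omega(\sqrt d)$ still applies~\cite{bun2018fingerprinting}. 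Finally, since $(1,\delta)$-$\DP$ is preserved under the (data-independent, row-wise) encoding and under post-processing, the lower bound transfers: any $(1,\delta)$-$\DP$ clique-identification algorithm needs $n\ge\tilde\Omega(\sqrt d)$.

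The main obstacle is precisely the step just flagged: showing that the $\ell_\infty\le 1$ clique constraint on $[4]^d$ is rich enough to \emph{force} the solver to reveal coarse marginal information on \emph{every} coordinate simultaneously, rather than trivially outputting the all-$2$ vector (which satisfies the promise whenever all rows lie in $\{1,2,3\}^d$). The resolution is to engineer the embedding so that for each coordinate the two extreme symbols used ($1$ vs.\ $3$) are at $\ell_\infty$-distance $2$, so they cannot both appear; then whichever of $1,3$ is absent is dictated by the majority value of $D_{\cdot j}$, and the clique solution is forced to commit to it — turning each coordinate of $z$ into a threshold function of $\bar D_j$. Once that combinatorial gadget is in place, the rest is a black-box citation of the fingerprinting-code lower bound together with the standard observation that restricting the output alphabet (here to a ternary quantization of each marginal) only strengthens, never weakens, the lower bound. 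I would also double-check the constants ($1499/1500$, $\delta<1/1500$) propagate correctly through the tracing-probability accounting, but these are exactly the constants appearing in the textbook fingerprinting argument and should go through verbatim.
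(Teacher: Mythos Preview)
Your proposal has a genuine gap at exactly the place you flag as ``the main obstacle,'' and your proposed resolution does not close it. If you encode bits as $\{1,3\}$ so that the two symbols are at $\ell_\infty$-distance $2$, then on any column $j$ where the dataset contains both a $0$-row and a $1$-row you would have both $1$ and $3$ present, violating the clique promise $\|x-y\|_\infty\le 1$. The alternative reading---that the encoding places only one of $\{1,3\}$ in each column, ``dictated by the majority value of $D_{\cdot j}$''---is not a row-wise map: changing a single row of $D$ can flip the majority in many columns, so neighboring datasets are not mapped to neighboring clique instances and the $(1,\delta)$-$\DP$ guarantee does not transfer. Either way, the reduction via one-way marginals does not go through, and the intermediate step of extracting a marginal-accurate summary from a clique solution is genuinely impossible (the all-$2$ vector is a valid solution on any $\{1,2\}^d$ instance, as you note).

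The paper's route sidesteps this entirely: it does \emph{not} go through marginals but feeds the fingerprinting code matrix $G\in\{2,3\}^{n\times d}$ \emph{directly} as the clique instance---any two rows in $\{2,3\}^d$ automatically satisfy $\|x-y\|_\infty\le 1$, so the promise holds with no encoding gadget at all. The clique solution $w\in[4]^d$ is then rounded to $w'\in\{2,3\}^d$ by $\{1,2\}\mapsto 2$, $\{3,4\}\mapsto 3$. The point is that the fingerprinting ``feasibility'' condition only demands agreement on \emph{marked} columns (those constant across all rows), which is precisely the situation where the clique constraint pins down $w_j$ to a small set; the robustness slack $r=1/25$ absorbs the remaining ambiguity. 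Once $w'$ is shown feasible with probability $\ge 2/3$, completeness gives $\Pr[T(w')\in[n]]\ge 1/2$, hence some $i^*$ is traced with probability $\ge 1/(2n)$, and the $(1,\delta)$-$\DP$ guarantee (applied to the neighboring dataset $D_{-i^*}$) forces $\Pr[T(A(D_{-i^*}))=i^*]\ge 1/(20n)$, contradicting soundness of a code with security $s=1/(20n)$ and $d=\tilde O(n^2)$ from Tardos. The key idea you are missing is to drop the marginals detour and let the fingerprinting code itself be the dataset.
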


We will prove the claim later, but we first explain why it implies a necessary domain size dependence in the transformation we hope to achieve. Noting that $d = |\mathcal{X}|$ in the translation from Problem \ref{prob:stabletoprivate} to Problem \ref{prob:clique},  we conclude from Claim \ref{claim} that any $(1,\delta)$-$\DP$ algorithm for Problem \ref{prob:stabletoprivate} requires $n \geq \tilde{\Omega}(\sqrt{|\X |})$. Hence, any algorithm to convert the stable real-valued learner $G$ of Theorem \ref{thm:global_stability} into an approximate-$\DP$ learner that also solves Problem \ref{prob:stabletoprivate}, also requires to run the stable learner $n$-many times, each of which consumes $T$ examples. Hence the total number of examples needed is 
\begin{equation}
\label{eq:lowerboundreal}
   \tilde{\Omega}\left(\sqrt{|\mathcal{X}|}  \left(2\cdot (4/\zeta)^{d+1}+1\right) \cdot \frac{d \ln (1/\zeta)}{\alpha} \right).
\end{equation}

In particular, this lower bound is also optimal for query release up to poly-logarithmic factors, i.e., using $\tilde{O}(\sqrt{ |\X|})$ examples one can solve Problem~\ref{prob:stabletoprivate} using the Private Multiplicative Weights method by Hardt and Rothblum~\cite{hardt2010multiplicative} (as also referenced in the work of Bun et al.~\cite{bun2018fingerprinting}).

To prove Claim \ref{claim}, we first need to first define weakly-robust fingerprinting codes (first introduced by Boneh and Shaw~\cite{boneh1998collusion}, then developed in~\cite{bun2018fingerprinting}). 
 
 \begin{definition}
 An $(n,d)$-fingerprinting code with security $s$ and robustness $r$ is a pair of random
variables $(G,T)$ where $G\in \{2,3\}^{n\times d}$ and $T:\{2,3\}^{d}\rightarrow2^{[n]}$ that satisfy the following. We say that a column $j\in [d]$ is marked if there exists $b\in \{2,3\}$ such that $x_{i;j} = b$ for all $i\in [n]$. Similarly, we say a string $w\in\{2,3\}^d$ is feasible for $G$ if for at least a $1-r$ fraction of the marked columns $j\in G$, the entry $w_j$ agrees with the common value in that column. The code must satisfy the properties of soundness and completness, as follows:

\noindent \emph{Completeness.} For every $A:\{2,3\}^{n\times d} \rightarrow \{2,3\}^d$, 
$\Pr_{w\leftarrow A(G)} [w \text{ is feasible for } G \text{ and } T(w)=\emptyset]\leq s$

\noindent \emph{Soundness.} For every $i\in [n]$, algorithm $A:\{2,3\}^{n\times d} \rightarrow \{2,3\}^d$, we have $\Pr_{w\leftarrow A(G_{-i})}[T(w)\ni i]\leq~s$
\end{definition}

\noindent We also need the following result for explicit construction of fingerprinting codes.

\begin{theorem}[\label{thm:tardos}\cite{tardos2008optimal}]
Then, for every $s \in (0,1)$, there exists an $(n,d)$-fingerprinting code with security $s$ and robustness $r=1/25$ with $d=\tilde{O}(n^2\log (1/s))$.
 \end{theorem}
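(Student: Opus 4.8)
The plan is to exhibit Tardos's code~\cite{tardos2008optimal} equipped with the robustness analysis of Bun, Ullman and Vadhan~\cite{bun2018fingerprinting}, identifying the alphabet $\{2,3\}$ with $\{0,1\}$. Set the code length $d$ to be a large enough constant multiple of $n^2\log(1/s)$, fix a small constant $t$, and build $(G,T)$ as follows. For each column $j\in[d]$, independently draw a bias $p_j\in[t,1-t]$ from the (truncated) arcsine distribution, i.e.\ with density proportional to $1/\sqrt{p(1-p)}$ on $[t,1-t]$, and then set $G_{i,j}=1$ independently with probability $p_j$. Given a pirate word $w$, assign user $i$ the Tardos score $S_i=\sum_{j=1}^{d} U_{i,j}$, where $U_{i,j}$ equals $+\sqrt{(1-p_j)/p_j}$ or $-\sqrt{p_j/(1-p_j)}$ according to whether $(G_{i,j},w_j)$ "agree", with signs flipped on columns with $w_j=0$; the accusation set is $T(w)=\{i:S_i>Z\}$ for a threshold $Z=\Theta(\sqrt{d\log(1/s)})$. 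The two structural facts used throughout are that, conditioned on $p_j$, each $U_{i,j}$ has mean $0$ and magnitude at most $\sqrt{(1-t)/t}=O(1)$, and that the arcsine law places only $O(\sqrt{\epsilon})$ mass on biases within $\epsilon$ of an endpoint.

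For soundness I would fix an innocent user $i$ and an adversary $A$, and note that the pirate word $w=A(G_{-i})$ is independent of row $i$. Conditioning on $(p_j)_j$ and on $w$, the score $S_i$ is a sum of $d$ independent, mean-zero random variables, each bounded by $O(1)$, with total conditional variance $O(d)$ — the last point using the arcsine weighting to control the rare large-variance columns in expectation, followed by a standard argument to pass from expected to realized variance. A Bernstein/Chernoff bound then gives $\Pr[S_i>Z]\le e^{-\Omega(Z^2/d)}\le s$ for the chosen $Z$, which is exactly the soundness clause; no union bound over $A$ is needed since the estimate holds for each fixed $A$.

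For completeness together with robustness, the target is: whenever $w=A(G)$ is feasible, $T(w)\neq\emptyset$ with probability $\ge 1-s$ (soundness then forces the accused user into the implicated coalition, here all of $[n]$). By pigeonhole it suffices to bound $\Pr[S_{[n]}\le nZ]\le s$ for $S_{[n]}:=\sum_{i=1}^{n} S_i$. Tardos's key columnwise inequality shows that, conditioned on the history and on the coalition's column values, whatever value the pirate picks for $w_j$ the expected contribution of column $j$ to $S_{[n]}$ is bounded below, so that $\mathbb{E}[S_{[n]}]\ge c_1 d$ were the marking assumption to hold on every column. Feasibility lets the pirate violate it on an $r$-fraction of the marked columns; but a marked all-ones (resp.\ all-zeros) column has $p_j$ concentrated near $1$ (resp.\ $0$), so its honest contribution and the magnitude of its cheated contribution are both small, and summing over all marked columns the total damage the pirate can do is only $O(rd)$ — not the na\"ive $O(rdn)$. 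Choosing $r$ a small enough constant (which works out to $r=1/25$ with Tardos's constants) leaves $\mathbb{E}[S_{[n]}]\ge (c_1/2)d$. Finally a concentration bound, obtained by processing the columns one at a time and bounding $\mathbb{E}[e^{-\lambda(\text{contribution of column }j)}\mid\mathcal{F}_{j-1}]$ uniformly over the pirate's choice at that column, upgrades this to $\Pr[S_{[n]}<(c_1/4)d]\le s$; since $(c_1/4)d\gg nZ\sim n\sqrt{d\log(1/s)}$ holds precisely when $d$ is a large enough multiple of $n^2\log(1/s)$, we obtain the claimed length $d=\tilde{O}(n^2\log(1/s))$.

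The main obstacle is the completeness analysis in the adaptive, robust regime: because $w$ may depend on the entire matrix $G$, one cannot invoke a black-box Azuma bound and must instead run the columnwise MGF argument, and this must be interleaved with the bookkeeping showing that a constant fraction $r$ of cheated marked columns costs only an $O(rd)$ additive loss (exploiting that marked columns have near-extreme biases). This combined argument is the delicate heart of the construction; soundness and the calibration of $d$, $Z$, $t$ are then routine.
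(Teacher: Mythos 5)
The paper does not prove this statement at all: Theorem~\ref{thm:tardos} is imported as a black box, attributed to~\cite{tardos2008optimal} for the underlying code and implicitly to~\cite{bun2018fingerprinting} for the robustness guarantee, and is used only inside the proof of Claim~\ref{claim}. Your proposal is therefore not an alternative to anything in the paper but a from-scratch reconstruction of the cited result, and as such it is a faithful outline of the known argument: the arcsine bias distribution, the (symmetrized) Tardos score with threshold $Z=\Theta(\sqrt{d\log(1/s)})$, soundness via independence of the pirate word from an innocent row plus Bernstein, and completeness via the columnwise expectation lower bound combined with a column-by-column MGF/martingale bound to handle adaptivity. Two points in your sketch are looser than the actual analysis and are worth flagging. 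First, the truncation parameter $t$ in Tardos's proof is taken to be $\Theta(1/n)$ (the coalition here is all of $[n]$), not a fixed small constant; with a constant $t$ the per-term magnitude bound is fine but the completeness lemma's constants and the treatment of near-extreme biases change, and the standard calibration that yields $d=O(n^2\log(n/s))=\tilde{O}(n^2\log(1/s))$ uses the $1/n$ scaling. Second, your claim that the total damage from cheated marked columns is $O(rd)$ needs more care: conditioned on a column being all-ones its bias satisfies $1-p_j=O(\log(\cdot)/n)$, so a single cheated marked column can shift $S_{[n]}$ by $\Theta(n\sqrt{(1-p_j)/p_j})=\Theta(\sqrt{n\log(\cdot)})$ rather than $O(1)$; the accounting in~\cite{bun2018fingerprinting} still closes because this is compared against $\mathbb{E}[S_{[n]}]=\Omega(d/1)$ with $d=\Omega(n^2)$, but the constant $r=1/25$ comes out of that more careful bookkeeping, not from a per-column $O(1)$ bound. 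Neither issue changes the asymptotic conclusion, and your high-level identification of the delicate step (interleaving the robustness accounting with the adaptive concentration argument) matches where the real work lies in the cited proofs.
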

 
 With this we now prove our main claim.
 
 \begin{proof}[Proof of Claim \ref{claim}]
The idea is to construct, from any $(\eps=1, \delta = 1/4n)$-$\DP$ clique identification algorithm with success probability at least $1499/1500$, an adversary $A:\{2,3\}^{n\times d} \rightarrow \{2,3\}^d$ for any $(n,d)$ fingerprinting code with robustness $1/25$, such that the code cannot be $1/20n$-secure against the adversary. However, because Theorem \ref{thm:tardos} guarantees the existence of a sound and complete $(n,d)$-fingerprinting code with $(s=1/20n,r=1/25)$-parameters as long as $n < \tilde{\Omega}(\sqrt{d})$, the claimed clique identification algorithm $M$ must have $n \geq \tilde{\Omega}(\sqrt{d})$. We now go into more detail about how to construct the adversary. 

Let $M$ be the alleged $\DP$ algorithm for clique identification, and let $G\in \{2,3\}^{n\times d}$ be the $G$ corresponding to the fingerprinting code. If we regard each of the rows of $G$ as being a point in $\mathcal{Y} = [4]^d$, then taking $D$ to be the set of all rows of $G$, $D$ fulfils the promise of Problem \ref{prob:clique}. Then the adversary $A$ is constructed out of $M$ as follows: on input $D$, run $M(D)$ producing a string $w\in [4]^d$. Return the string $w^{\prime} \in \{2,3\}^d$ where $w_{i}^{\prime}=2$ if $w_{i} \in\{1,2\}$ and $w_{i}^{\prime}=3$ if $w_{i} \in\{3,4\}$. A proof by contradiction, which we omit, shows that the string $w'$ produced in this manner is feasible for the fingerprinting code with probability at least $2/3$. By completeness of the code, $\operatorname{Pr}[T(A(D)) \in[n]] \geq 2 / 3-s \geq 1 / 2 .$ In particular, there exists some $i^{*} \in[n]$ such that $\operatorname{Pr}\left[T(A(D))=i^{*}\right] \geq 1 / 2 n .$ Now by differential privacy,
$$
\operatorname{Pr}\left[T\left(A\left(D_{-i^{*}}\right)\right)=i^{*}\right]  \geq e^{-\varepsilon}\left(\operatorname{Pr}\left[T(A(D))=i^{*}\right]-\delta\right) \geq e^{-1}\left(\frac{1}{2 n}-\frac{1}{4 n}\right)  \geq \frac{1}{20 n} .
$$
This contradicts the soundness of the code.
\end{proof}

\subsubsection{A quadratically worse upper bound on the sample complexity of privacy}
The previous section showed that going from a stable learner to a private learner of real-valued function classes should incur a sample complexity at least the square root of domain size. We now show to obtain a pure-$\DP$ learning algorithm for real-valued function classes over a finite domain (with no need for the stability intermediate step) that needs at most linear-in-$|\mathcal{X}|$ examples, which is  quadratically worse than the lower bound. This was also pointed out in the Appendix of~\cite{jung2020equivalence}. 

The private algorithm that accomplishes this is the Generic Private Learner of~\cite{kasiviswanathan2011can,bun2020equivalence}. We give its guarantees in the lemma below. Intuitively, this lemma states that given a collection of hypotheses, one of which is guaranteed to have low loss $\alpha$ with respect to some unknown distribution and target concept, by adding Laplace noise, one can \emph{privately} output with high probability a hypothesis with loss at most $2\alpha$ with respect to the unknown target concept and~distribution. 

\begin{lemma}[Generic Private Learner~\cite{kasiviswanathan2011can,bun2020equivalence}\label{lem:GL}]
Let $\Hi \subseteq \{h:\X\rightarrow [0,1]\}$ be a set of hypotheses. For
$$
m=O\left(\frac{\log |\Hi|}{\alpha \varepsilon}\right)
$$
there exists an $(\varepsilon,0)$-differentially private generic learner $\GL: (\X \times[0,1])^{m} \rightarrow \Hi$ such that the following holds. Let $D: \X \times [0,1] \rightarrow [0,1]$ be a distribution, $c:\X\rightarrow [0,1]$ be a target function, $\zeta$ be a distance parameter and  $h^{*} \in \Hi$ be such that with
$
\Loss_{D}\left(h^{*},c,\zeta \right) \leq \alpha.
$
Then on input $S \sim D^{m}$, algorithm $\GL$ outputs, with probability at least $2/3$, a hypothesis $\hat{h} \in \Hi$ such that
$\Loss_{D}(\hat{h},c,\zeta) \leq 2 \alpha.$
\end{lemma}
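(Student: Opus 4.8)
The plan is to realize $\GL$ as the \emph{exponential mechanism} over the finite hypothesis set $\Hi$ scored by empirical loss (equivalently: add independent appropriately-scaled $\Lap(\cdot)$ noise to each score and report the arg-min hypothesis), and then to run the standard privacy-and-utility analysis, checking that it survives the passage from Boolean labels to the real-valued $\zeta$-loss. Concretely, on input $S=\{(x_i,\ell_i)\}_{i=1}^m$ define the empirical loss $\widehat{\Loss}_S(h) := \tfrac1m\,\big|\{\, i\in[m] : |h(x_i)-\ell_i| > \zeta \,\}\big|$, and let $\GL$ output each $h\in\Hi$ with probability proportional to $\exp\!\big(-\tfrac{\varepsilon m}{2}\,\widehat{\Loss}_S(h)\big)$. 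Replacing a single example in $S$ changes the count $m\,\widehat{\Loss}_S(h)$ by at most $1$, so the score $-\widehat{\Loss}_S$ has $\ell_1$-sensitivity $1/m$; hence the mechanism is $(\varepsilon,0)$-differentially private by the standard guarantee for the exponential mechanism, which settles the privacy half of the lemma (and, being pure DP, requires no $\delta$).

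For utility, I would stack three textbook estimates, each made to hold by the choice $m=\Theta\!\big(\tfrac{\log|\Hi|}{\alpha\varepsilon}\big)$ (which, since $\varepsilon\le 1$, also forces $m=\Omega\!\big(\tfrac{\log|\Hi|}{\alpha}\big)$, as the concentration steps need). First, a multiplicative Chernoff bound applied to $h^{*}$, which has true loss $\Loss_D(h^{*},c,\zeta)\le\alpha$, yields $\widehat{\Loss}_S(h^{*})\le\tfrac{11}{10}\alpha$ except with small constant probability, so $\min_{h\in\Hi}\widehat{\Loss}_S(h)\le\tfrac{11}{10}\alpha$. Second, the exponential mechanism's utility bound gives that, except with small constant probability, the output $\hat h$ obeys $\widehat{\Loss}_S(\hat h)\le \min_h\widehat{\Loss}_S(h) + \tfrac{2}{\varepsilon m}\big(\ln|\Hi| + O(1)\big)$, and the additive term is $\le\tfrac{4}{10}\alpha$ once the hidden constant in $m$ is large enough, so $\widehat{\Loss}_S(\hat h)\le\tfrac32\alpha$. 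Third, a relative (lower-tail) Chernoff bound together with a union bound over all $|\Hi|$ hypotheses shows that, except with small constant probability, every $h$ with $\widehat{\Loss}_S(h)\le\tfrac32\alpha$ has $\Loss_D(h,c,\zeta)\le 2\alpha$; this again needs only $m=\Omega\!\big(\tfrac{\log|\Hi|}{\alpha}\big)$. A union bound over the three failure events, each tuned below $1/9$, shows that with probability at least $2/3$ the output satisfies $\Loss_D(\hat h,c,\zeta)\le 2\alpha$.

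Finally, in the $\PAC$ application (Definition~\ref{def:RPAC}) the sample labels are only $\zeta/5$-accurate rather than exact, so I would define the empirical loss above with the shrunken threshold $\tfrac{4}{5}\zeta$ in place of $\zeta$: this still detects every genuine $\zeta$-error with respect to $c$ and never spuriously flags a $\tfrac{3}{5}\zeta$-error, and it changes neither the sensitivity nor any concentration estimate. I expect the only real work to be bookkeeping rather than anything conceptual: one must fix the constant hidden in $m=\Theta(\log|\Hi|/(\alpha\varepsilon))$ large enough that the exponential-mechanism additive error plus the two Chernoff slacks together keep the final loss below $2\alpha$ (and not merely $O(\alpha)$), and one must certify that each of the three bad events has probability at most $1/9$ at that $m$. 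There is no obstacle of substance — the whole point is that the exponential mechanism's error scales with the logarithm of the number of candidate hypotheses and is oblivious to $|\X|$ — but this constant-tracking is the step most easily gotten wrong when porting the Boolean proof of~\cite{kasiviswanathan2011can,bun2020equivalence}.
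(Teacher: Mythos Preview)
The paper does not actually prove this lemma; it is stated as a citation from~\cite{kasiviswanathan2011can,bun2020equivalence} and used as a black box. Your proposal is correct and is precisely the standard exponential-mechanism argument from those references (score by empirical $\zeta$-loss, sensitivity $1/m$, utility via Chernoff plus the exponential-mechanism additive error $O(\log|\Hi|/(\varepsilon m))$), so there is nothing to compare against in the paper itself.
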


For every real-valued function class $\Cc$, one could discretize the $[0,1]$-range of its functions $h:\mathcal{X}\rightarrow [0,1]$ into bins of size $\zeta$. This obtains a discretized function class $\mathcal{H}$ with at most $(1/\zeta)^{|\mathcal{X}|}$ functions. Plugging this bound into the lemma above, we obtain a private learner with sample~complexity 
\begin{equation}\label{eq:upperboundreal}
m=O\left(\frac{|\mathcal{X}| \log (1/\zeta)}{\alpha \varepsilon}\right). 
\end{equation}

\subsection{Implications for quantum learning}
\label{sec:quantumimplications}
We now turn to the quantum implications of the results in the previous sections. While we have stated all our results for the case of learning real-valued functions with imprecise adversarial feedback, we now expressly translate them to the setting of learning quantum states. Recall that, as stated in Section \ref{sec:prelim}, in quantum learning we are given $\U$, a class of $n$-qubit quantum states from which the state to be learned is drawn; $\M$, a set of $2$-outcome measurements  and $D:\M\rightarrow [0,1]$, a distribution on the set of measurements.\footnote{To be more clear, $D$ can be viewed as a distribution over $\{(E_i,\id-E_i)\}_i$ where $\{E_i\}_i$ is an orthogonal basis for the space of operators on $n$-qubits satisfying $\|E_i\|\leq 1$.} Our results apply to quantum learning by associating, to every $\rho\in \U$, the real-valued function $c_\rho:\M\rightarrow [0,1]$ defined as $c_\rho(M)=\Tr(M\rho) \in [0,1]$ for every $M\in \X$, and taking the function class to be $\Cc_{\U} = \{c_{\rho}\}_{\rho \in \U}$.  Section \ref{sec:onlineimpliesstab} implies that given a $\Cc_{\U}$ with bounded $\sfat$ dimension, a stable learner for $\Cc_{\U}$ also exists. To translate this result into the quantum learning setting, we define quantum stability as follows:
\begin{definition}[Quantum stability]
A quantum learning algorithm~$\calA: (\M \times [0,1])^T \rightarrow \U$ is $(T,\varepsilon,\eta)$-\emph{stable} with respect to distribution $D:\M\rightarrow [0,1]$ if, given $T$ many labelled examples $S=\{(E_i,y_i)\}_{i\in [T]}$ where $|\Tr(\rho E_i) - y_i| <\zeta$, there exists a state $\sigma$ such that 
\begin{equation}\label{eq:stable_maintext}
\Pr[\calA(S) \in \calB_{\M}(\varepsilon, \sigma)]\geq \eta,
\end{equation}
where the probability is taken over the examples in $S$ and $\calB_{\M}(\varepsilon,\sigma) := \{\rho: |\Tr(E\rho)-\Tr(E\sigma)| \leq \varepsilon\}$, that is to say, the ball of states within distance $\eps$ of $\sigma$ on $\M$.
\end{definition}
In other words, quantum stability means that up to an $\eps$-distance on the measurements in $\M$, there is some $\sigma$ that is output by $\calA$ with ``high" (at least $\eta$) probability. Then the quantum version of Theorem \ref{thm:global_stability} is the following:

\begin{theorem}[Quantum-stable learner from online learner]
\label{thm:quantum_global_stability}
Let $\U$ be a class of quantum states with $\sfat_{2\zeta}(\Cc_{\U}) = d$, let $\M$ be a set of orthonormal $2$-outcome measurements and let $D:\M\rightarrow [0,1]$ be a distribution over measurements. There exists an algorithm~$\G: (\M \times [0,1])^T \rightarrow \U$ that satisfies the following: for every $\rho\in \U$, given 
\[T =\left(2\cdot (4/\zeta)^{d+1}+1\right) \cdot \frac{d \ln (1/\zeta)}{\alpha}.\]
many labelled examples $S=\{(E_i,y_i)\}_{i\in [T]}$ where $|\Tr(\rho E_i) - y_i| <\zeta$ and $E_i \sim D$, there exists a $\sigma$ such that $\Pr_{S\sim D^T}[\G(S) \in \B_{\M}(11\zeta,\sigma)] \geq \frac{\zeta^d}{2(d+1)}$ and $\Pr_{E\sim D}\big[|\Tr(\rho E)-\Tr(\sigma E)|\leq 12 \zeta\big] \geq 1-\alpha$. 
\end{theorem}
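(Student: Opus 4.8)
The plan is to deduce this directly from the real-valued statement of Theorem~\ref{thm:global_stability}, using the dictionary between $n$-qubit states and real-valued functions on an orthonormal measurement set established in Section~\ref{sec:prelim}; essentially no new argument beyond a faithful translation is required. \textbf{Step 1 (set up the reduction).} To the state class $\U$ associate the real-valued concept class $\Cc_{\U} = \{c_\rho : \M \to [0,1]\}_{\rho\in\U}$ with $c_\rho(M) = \Tr(M\rho)$; by hypothesis $\sfat_{2\zeta}(\Cc_{\U}) = d$. A quantum labelled example $(E_i,y_i)$ with $|\Tr(\rho E_i) - y_i| < \zeta$ is precisely an example $\bigl(E_i, \widehat{c}(E_i)\bigr)$ for the target concept $c := c_\rho$ in the sense of Section~\ref{sec:onlineimpliesstab} (which only requires $|\widehat{c}(x) - c(x)| < \zeta$), with $E_i \sim D$; so a draw $S\sim D^T$ of quantum examples is a draw of examples in the real-valued sense.

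\textbf{Step 2 (invoke Theorem~\ref{thm:global_stability}).} Run the algorithm $G$ of Theorem~\ref{thm:global_stability} on $\Cc_{\U}$ with target $c_\rho$, accuracy parameter $\alpha$, and the stated value of $T = \bigl(2\cdot(4/\zeta)^{d+1}+1\bigr)\cdot d\ln(1/\zeta)/\alpha$. This produces a randomized map $G : (\M\times[0,1])^T \to [0,1]^{\M}$ together with a hypothesis function $f$ such that $\Pr[G(S)\in\T(11\zeta,f)] \geq \zeta^d/(2(d+1))$ and $\Loss_D(f, c_\rho, 12\zeta) \leq \alpha$. \textbf{Step 3 (translate back to states).} Because $\M$ is orthonormal, every function in the range of $\RSOA$ — in particular $f$ and $G(S)$ — corresponds to a genuine density matrix. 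Let $\sigma$ be the state with $\Tr(M\sigma) = f(M)$ for all $M\in\M$, and let $\G(S)$ denote the density matrix corresponding to $G(S)$. For any function $g$ in the range of $\RSOA$ with corresponding state $\sigma_g$ one has $|g(M)-f(M)| = |\Tr(M\sigma_g) - \Tr(M\sigma)|$ for every $M\in\M$, hence $g\in\T(11\zeta,f)$ if and only if $\sigma_g \in \B_{\M}(11\zeta,\sigma)$; therefore $\Pr_{S\sim D^T}[\G(S)\in\B_{\M}(11\zeta,\sigma)] = \Pr[G(S)\in\T(11\zeta,f)] \geq \zeta^d/(2(d+1))$. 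Likewise $\Loss_D(f, c_\rho, 12\zeta) \leq \alpha$ unpacks to $\Pr_{E\sim D}[|\Tr(E\rho) - \Tr(E\sigma)| > 12\zeta] \leq \alpha$, i.e.\ $\Pr_{E\sim D}\bigl[|\Tr(\rho E) - \Tr(\sigma E)| \leq 12\zeta\bigr] \geq 1-\alpha$, which is the second claimed inequality.

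\textbf{Main obstacle.} There is no deep obstacle here; the content lies entirely in checking that the correspondence is faithful, and the two points needing care are the following. First, orthonormality of $\M$ is genuinely used: for a general measurement set the function output by $\RSOA$ need not correspond to any density matrix, so $\G$ would not even be well-defined. Second, the function ball $\T(11\zeta,f)$ must map exactly onto the measurement-ball $\B_{\M}(11\zeta,\sigma)$; this works because $\B_{\M}$ is defined by closeness on precisely the measurements in $\M$ (the quantum stability definition), which matches the ``for every $x\in\X$'' quantifier of Definition~\ref{def:function_ball} with $\X = \M$. One should also remark that $\G$ is an \emph{improper} learner: when $\U$ is a strict subclass of all $n$-qubit states, $\sigma$ and $\G(S)$ need not lie in $\U$, which is harmless since the hypothesis is only used to predict measurement outcomes.
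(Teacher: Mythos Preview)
Your proposal is correct and follows exactly the paper's approach: the paper does not give a separate proof of Theorem~\ref{thm:quantum_global_stability} at all, simply presenting it as ``the quantum version of Theorem~\ref{thm:global_stability}'' after recalling the state-to-function dictionary $c_\rho(M)=\Tr(M\rho)$ and the definition of quantum stability. Your explicit Step~3 translation (matching $\T(11\zeta,f)$ with $\B_{\M}(11\zeta,\sigma)$ and unpacking $\Loss_D$) spells out what the paper leaves implicit, and your remarks on the role of orthonormality and on improperness of $\G$ are exactly the caveats the paper flags elsewhere in Section~\ref{sec:prelim} and Section~\ref{sec:quantumimplications}.
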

Namely, $\G$ is $(T,11\zeta, \frac{\zeta^d}{2(d+1)})$-stable and furthermore, the state $\sigma$ has loss $\alpha$.  Section \ref{subsubsec:stabletoprivate} now gives a no-go result for going from the above-mentioned quantum-stable learner to an approximate-$\DP$ one. It shows that the technique of \cite{bun2020equivalence} to convert a stable learner to a private one necessarily incurs a domain-size dependence in the sample complexity. 

We say a few words about the implications of this on quantum learning. As explained earlier, it is often of most interest to choose $\M$ to be some orthogonal set of measurements. If, say, we choose it to be the orthogonal basis of $n$-qubit Paulis, then $|\M| = 4^n$ and so Equation \eqref{eq:lowerboundreal} implies that one needs sample complexity $\tilde{\Omega}(4^{n/2})$ in order to go from stability to approximate differential~privacy, whereas Equation \eqref{eq:upperboundreal} implies that even without stability, there exists a simple (pure) private learner for $\Cc_{\U}$ whose sample complexity is $\tilde{O}(4^{n})$, which is quadratically worse. 

\section{Pure differential privacy implies online learnability}
\label{sec:pureDPimpliesonline}
In this section we will prove the converse direction of the implication we showed in the previous section, namely that $\DP$ $\PAC$ learnability of a concept class $\Cc$ implies online learnability of $\Cc$.  To be more precise, we will show that the sample complexity of {\em pure} $\DP$ $\PAC$ learning $\Cc$ is linearly related to the $\sfat(\cdot)$ dimension of $\Cc$. 
Combining this with Theorem~\ref{thm:RSOA} implies learnability in the pure $\DP$ $\PAC$ setting implies online learnability of $\Cc$ in the strong feedback setting. The implications we will show are summarized in the diagram below:
\begin{figure}[!ht]
\centering
\begin{tikzpicture}
[->,>=stealth',shorten >=1pt,auto,  thick,yscale=0.8,
main node/.style={circle,draw}, node distance = 0.8cm and 1.8cm,
block/.style   ={rectangle, draw, text width=5em, text centered, rounded corners, minimum height=2.5em, fill=white, align=center, font={\footnotesize}, inner sep=5pt},
smallblock/.style   ={rectangle, draw, text width=3em, text centered, rounded corners, minimum height=2.5em, fill=white, align=center, font={\footnotesize}, inner sep=5pt},
medblock/.style   ={rectangle, draw, text width=4em, text centered, rounded corners, minimum height=2.5em, fill=white, align=center, font={\footnotesize}, inner sep=5pt}]
        \node[main node,smallblock] (PureDPPAC) at (-9,2) {\textsf{Pure\\ DP PAC}};
    \node[main node,block] (Prdim) at (-5.5,2) {\textsf{Representation dimension}};
    \node[main node,medblock] (Roneway) at (-1.6,2) {\textsf{One-way CC}};
   \node[main node,block] (sfat) at (2.1,2) {\textsf{Sequential fat-shattering dimension}};
  \node[main node,medblock] (online) at (6,2) {\textsf{Online learning}};
  
    \path [->] (PureDPPAC) edge node {Lem~\ref{lem:SCDPtoPRdim}} (Prdim);
    \path [->](Prdim) edge node {Lem~\ref{lem:PRdimtoR}} (Roneway);
    \path [->](Roneway) edge node {Lem~\ref{lem:R(C)tosfat(C)}} (sfat);
    \path [->](sfat) edge node {Thm~\ref{thm:RSOA}} (online);
\end{tikzpicture}
    \caption{Sample complexity of pure $\DP$ $\PAC$ upper-bounds $\sfat(\cdot)$.}\label{fig:DPtosfat}
\end{figure}
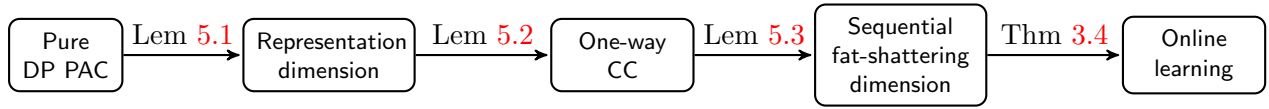

This section is organized as follows. 
 In Section~\ref{sec:puredpimpliesoneway} we show that the sample complexity of pure $\DP$ $\PAC$ is linearly related to the communication complexity of one-way public communication. As shown in Figure~\ref{fig:DPtosfat}, the link between these two notions goes through representation dimension. 
 In Section~\ref{sec:onewayandsfat} we show that one-way communication complexity is, in turn, characterized by $\sfat(\cdot)$. Additionally, we know from Theorem~\ref{thm:RSOA} that this combinatorial dimension upper-bounds the mistake bound of online learning $\Cc$, and this completes the chain of implications shown in Figure~\ref{fig:DPtosfat}. 

\subsection{Pure $\DP$ $\PAC$ implies one-way communication}
\label{sec:puredpimpliesoneway}

In this section we prove that the sample complexity of pure $\DP$ $\PAC$ learning upper bounds one-way communication complexity of a concept class $\Cc$. 

\subsubsection{Pure differential privacy and $\PRD$}
We start by relating the sample complexity of differentially-private $\PAC$ ($\PPAC$) learning (see Definition~\ref{def:PRPAC}) a concept class $\Cc$, to the probabilistic representation dimension of $\Cc$. As in the previous section, we use the shorthand $S\sim D^m$ to mean that the sample $S$ is of the form $\{(x_i,\widehat{c}(x_i))\}_{i=1}^m$ where each $x_i \sim D$ and for all $i$, $\widehat{c}(x_i)$ satisfies $|\widehat{c}(x_i)- c(x_i)| < \zeta/5$. 

\begin{lemma}[Sample complexity of $(\zeta, \alpha, \eps,0)$-$\PPAC$ learning and $\PRD$]
\label{lem:SCDPtoPRdim}
Let $\alpha<1/4.$ Suppose there exists an  algorithm $\A$ that $(\zeta, \alpha, \eps,0)$-$\PPAC$ learns a real-valued concept class $\Cc\subseteq \{f:\X\rightarrow [0,1]\}$ with sample size~$m$, then there exists a set of concept classes $\mathscr{H}$ and a distribution over their indices $\mathcal{P}$, such that  $(\mathscr{H},\mathcal{P})$  $(\zeta,1/4,1/4)$-probabilistically represents  $\Cc$, with $\operatorname{size}(\mathscr{H})=O(m \eps \alpha)$. This implies that the sample complexity of $(\zeta, \alpha, \eps,0)$-$\PPAC$ learning $\Cc$ is 
\begin{equation}
    \Omega \left(\frac{1}{\alpha \eps} \PRD_{\zeta,1/4,1/4}(\Cc) \right) .
\end{equation}
\end{lemma}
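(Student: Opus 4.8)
It suffices to construct a pair $(\mathscr{H},\mathcal{P})$ that $(\zeta,1/4,1/4)$-probabilistically represents $\Cc$ in the sense of Definition~\ref{def:PRD}, with $\max_{\Hi\in\supp(\mathscr{H})}\log|\Hi|=O(m\eps\alpha)$. Indeed, by the definition of $\PRD$ this immediately gives $\PRD_{\zeta,1/4,1/4}(\Cc)\le O(m\eps\alpha)$, which rearranges to the claimed lower bound $m=\Omega\big(\PRD_{\zeta,1/4,1/4}(\Cc)/(\alpha\eps)\big)$. The construction is the real-valued, imprecise-feedback analogue of the classical fact that pure-$\DP$ sample complexity is lower bounded by the representation dimension; the only genuinely new bookkeeping is that every ``agreement'' between functions is measured up to the tolerance $\zeta$, which is why the representation is taken at tolerance $\zeta$ rather than $0$, and this makes the $\zeta/5$ imprecision in the examples harmless.

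\textbf{Step 1: a target-independent distribution concentrated on good hypotheses.} Fix an arbitrary ``dummy'' sample $S_0\in(\X\times[0,1])^m$ (say, all $m$ coordinates equal to a single point with label $0$) and let $q:=\A(S_0)$ be the output distribution of $\A$ on $S_0$; crucially $q$ depends only on $\A$, not on the target concept or distribution. Since $\A$ is $(\eps,0)$-$\DP$, iterating the indistinguishability guarantee along a path that changes one example at a time (group privacy) gives, for every sample $S$ and every measurable set $\Hi$ of hypotheses, $\Pr[\A(S_0)\in\Hi]\ge e^{-\eps t}\Pr[\A(S)\in\Hi]$, where $t$ is the number of coordinates in which $S_0$ and $S$ differ (this uses exactly the ``change both point and label'' notion of neighbouring samples from Definition~\ref{def:privatereallearner}). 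Now fix any $c\in\Cc$ and distribution $D$. By Definition~\ref{def:RPAC}, if $S\sim D^m$, then $\Loss_D(\A(S),c,\zeta)\le\alpha$ with probability at least $3/4$ over $S$ and the coins of $\A$; averaging over $S$, at least half of all such $S$ are ``nice'', meaning $\Pr_{\mathrm{coins}}[\Loss_D(\A(S),c,\zeta)\le\alpha]\ge 1/2$. Fixing one nice $S^{\ast}$, which differs from $S_0$ in at most $m$ coordinates, we obtain
\[
\Pr_{h\sim q}\big[\Loss_D(h,c,\zeta)\le\alpha\big]\ \ge\ \tfrac12\,e^{-\eps m}.
\]
Thus a single, target-independent distribution $q$ places mass at least $\tfrac12 e^{-\eps m}$ on hypotheses that are $(\zeta,\alpha)$-accurate for $(D,c)$, hence a fortiori $(\zeta,1/4)$-accurate, simultaneously for every $(c,D)$.

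\textbf{Step 2: from $q$ to a probabilistic representation, and squeezing out the $\alpha$.} Define $\mathcal{P}$ by: draw $k$ hypotheses $h_1,\dots,h_k$ i.i.d.\ from $q$ and output $\Hi=\{h_1,\dots,h_k\}$. For any fixed $(c,D)$, the probability that no $h_i$ is $(\zeta,1/4)$-accurate for $(D,c)$ is at most $(1-\tfrac12 e^{-\eps m})^k$, which drops below $1/4$ once $k=O(e^{\eps m})$; then $(\mathscr{H},\mathcal{P})$ is a $(\zeta,1/4,1/4)$-probabilistic representation with $\log|\Hi|=\log k=O(\eps m)$ for every $\Hi$ in the support. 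This already proves the lemma with the factor $1/\alpha$ replaced by $1$. To recover the stated $O(m\eps\alpha)$, one first reduces the target accuracy from $\alpha$ to $1/4$: a $(\zeta,\alpha)$-$\PAC$ learner using $m$ examples can be turned, by a standard sub-sampling / sample-size-versus-accuracy argument, into a learner whose dummy anchor need differ from a nice sample in only $O(\alpha m)$ informative coordinates while still guaranteeing error $1/4$, which improves the mass bound in Step~1 to $e^{-O(\alpha\eps m)}$ and hence the class size to $\log k=O(\alpha\eps m)$.

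\textbf{Main obstacle.} The conceptual content is entirely in Step~1, and it is robust; the delicate part is the accuracy-reduction feeding into Step~2, because in our setting one must perform this sub-sampling/rescaling while the only feedback available on the examples is $\zeta/5$-accurate, so the error and tolerance parameters of the reduced learner must be tracked carefully through the reduction (this is precisely the reason the representation is stated at tolerance $\zeta$ rather than $0$, in contrast to the Boolean case). Verifying that group privacy for pure $\DP$ composes cleanly with the notion of neighbouring samples used here is routine, since changing one example (point and label together) is exactly one Hamming coordinate.
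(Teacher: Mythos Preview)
Your Steps~1 and~2 correctly establish the weaker bound $\log|\Hi|=O(\eps m)$, and the architecture (a fixed anchor sample, group privacy, then independent repetitions to amplify) is exactly the paper's. The gap is entirely in the sentence ``To recover the stated $O(m\eps\alpha)$, one first reduces the target accuracy from $\alpha$ to $1/4$ by a standard sub-sampling / sample-size-versus-accuracy argument.'' No such black-box reduction is available for $\DP$ learners: if you pad $O(\alpha m)$ real examples out to $m$ with dummy pairs $(0,0)$, the padded sample is no longer $\zeta/5$-consistent with $c$ (unless $c(0)\approx 0$), so the $\PAC$ guarantee of $\A$ says nothing about its output; and if you instead replicate the real examples, you destroy the privacy parameter. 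So your Step~2, as written, does not recover the $\alpha$ factor.

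The paper's (and Beimel et al.'s) device is different in a small but essential way. One replaces $D$ by the mixture $\tilde D$ that puts mass $1-4\alpha$ on a fixed point $0$ and mass $4\alpha D$ elsewhere; then (i) any hypothesis with $\Loss_{\tilde D}\le\alpha$ automatically has $\Loss_D\le 1/4$, and (ii) a Chernoff bound shows that with probability $\ge 1/4$ a sample from $\tilde D^m$ has at most $8\alpha m$ coordinates not equal to $(0,\lfloor c(0)\rfloor_{\zeta/5})$. This gives a ``nice'' correctly-labelled sample at Hamming distance $\le 8\alpha m$ from the anchor $S_{\mathrm{alt}}=\big((0,\lfloor c(0)\rfloor_{\zeta/5})\big)^m$, whence group privacy yields $\Pr[\A(S_{\mathrm{alt}})\text{ good}]\ge \tfrac14 e^{-8\alpha\eps m}$. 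The catch is that $S_{\mathrm{alt}}$ depends on $c$ through $c(0)$; your single anchor $S_0=((0,0))^m$ is at Hamming distance $m$ from $S_{\mathrm{alt}}$ whenever $c(0)\not\approx 0$, which is why your argument stalls at $e^{-\eps m}$. The fix, which is also what makes the construction target-independent again, is to enumerate: run $\A$ repeatedly on $S_z=((0,z))^m$ for every $z\in\In_{\zeta/5}$ (there are $5/\zeta$ of them) and take the union of outputs as $\Hi$. For the correct $z$, the analysis above kicks in, and one gets $|\Hi|=O\big((1/\zeta)\,e^{8\alpha\eps m}\big)$, i.e.\ $\log|\Hi|=O(\alpha\eps m+\log(1/\zeta))$. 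This enumeration over discretized labels is the one genuinely real-valued ingredient you are missing; in the Boolean case there are only two label values, which is why the step looks invisible there.
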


\begin{proof}
Our proof extends the work of Beimel et al.~\cite{beimel2013} to the case of robust real-valued $\PAC$ learning. We assume we are given a $(\zeta,\alpha,\eps,0)$-$\PPAC$ learner $\A$ of $\Cc$ that outputs some function in hypothesis class $\mathcal{F}$ with sample complexity $m$. The $\PAC$ guarantees hold whenever the feedback is a $\zeta/5$ approximation of $c(x_i)$, so for the rest of this proof, we will fix the examples $(x_i, \widehat{c}(x_i))$ to have feedback of the form: $\widehat{c}(x_i) := \lfloor c(x_i)\rfloor_{\zeta/5}$, where $\lfloor\,\, \rfloor_{\zeta/5}$ denotes rounding to the nearest point in $\In_{\zeta/5}$.  For every target concept $c\in \Cc$ and distribution $D$ on the input space $\mathcal{X}$, define the following subset of $\mathcal{F}$:
\begin{equation}
    G^{\alpha}_{D,\zeta} = \{h \in \mathcal{F}: \Loss_D(h,c,\zeta) \leq \alpha\},
\end{equation}
where $\Loss_D(h,c,\zeta) := \Pr_{x \sim D} \big[|h(x)-c(x)| > \zeta\big]$, so $G^{\alpha}_{D,\zeta}$ may be interpreted as a set of probably-$\zeta$-consistent hypotheses in $\mathcal{F}$. In~\cite{beimel2013}, they show that for every distribution $D$, there exists another distribution~$\tilde{D}$ on the input space, defined as 
\begin{align}\label{eq:tD}
\tilde{D}(x)=\left\{\begin{array}{ll} 1- 4\alpha+4 \alpha \cdot D(x), & x = 0 \\ 4 \alpha \cdot D(x), & x \neq 0\end{array}\right\}
\end{align}
(where $0$ is some arbitrary point in the domain) which has the property
\begin{equation}\label{eq:stability}
    \Pr_{S\sim\tilde{D}^m, \A}\left[\A(S) \in G_{D,\zeta}^{1/4}\right] \geq \frac{3}{4}
\end{equation}
where $\calA(S)$ means $\calA$ is fed with the sample $S$. The property in Eq.~\eqref{eq:stability} follows from the fact that $\Pr_{\tilde{D}}[x] \geq 4\alpha \cdot \Pr_{D}[x] \, \, \forall x \in \mathcal{X}$ by Eq.~\eqref{eq:tD} which implies $G^{\alpha}_{\tilde{D},\zeta} \subseteq G^{1/4}_{D,\zeta}$, and the assumption that $\mathcal{A}$ is $(\zeta,\alpha)$-PAC which can be re-written as
$\Pr_{\tilde{D},\mathcal{A}} [\mathcal{A}(S) \in G^{\alpha}_{\tilde{D},\zeta}] > 3/4$.

Let us now call a sample $S$ `good' if $\vec{x}$ has at least $(1-8\alpha)m$ occurrences of $0$.  Eq.~\eqref{eq:stability} may be rewritten as
\begin{align}\label{ref:rewrite}
    &\Pr_{S\sim\tilde{D}, \A}\left[\A(S) \in G_{D,\zeta}^{1/4}\right] \\
&    =\Pr_{S\sim\tilde{D}, \A} \left[\A(S) \in G_{D,\zeta}^{1/4} \wedge S \text{ is good}\right] + \Pr_{S\sim\tilde{D}, \A} \left[\A(S) \in G_{D,\zeta}^{1/4} \wedge S \,\,\text{is not good}\right] \geq \frac{3}{4}
\end{align}
Letting the random variable $X_S$ denote the number of occurrences of $0$ in $S$, Eq.~\eqref{eq:tD} shows that $\mathbb{E}[X_S] \geq (1-4\alpha)m$. With this we upper bound the term $\Pr_{S\sim\tilde{D}, \A} \left[\A(S) \in G_{D,\zeta}^{1/4} \wedge S \,\,\text{is not good}\right]$~by 
\begin{align}
    \Pr_{S\sim\tilde{D}, \A} \left[S \,\,\text{is not good}\right] &= \Pr_{S\sim\tilde{D}, \A} \left[X_S< (1-8\alpha)m \right]\\
    &= \Pr_{S\sim\tilde{D}, \A}[X_S \leq (1-\delta)(1-4\alpha)m]\leq e^{-\delta^2 (1-4\alpha)m/2} = e^{-2\alpha^2 m/(1-4\alpha)},
\end{align}
where the first inequality used $\delta = \frac{4\alpha}{1-4\alpha}$ and the second inequality follows from a Chernoff bound with $\mathbb{E}[X_S]$ replaced with the upper bound $(1-4\alpha)m$ on its expectation.

Therefore, one can bound the first term on the right hand side of Eq.~\eqref{ref:rewrite} by
\begin{equation}\label{eq:51}
    \Pr_{S\sim\tilde{D}, \A} \left[\A(S) \in G_{D,\zeta}^{1/4} \wedge S \text{ is good}\right] \geq \frac{3}{4} - e^{-2\alpha^2 m/(1-4\alpha)} \geq \frac{1}{4}.
\end{equation}
Eq.~\eqref{eq:51} implies that there exists {\em some} sample, $S_{\rm good}$ such that
\begin{equation}\label{eq:52}
    \Pr_{\A} \left[\A(S_{\rm good}) \in G_{D,\zeta}^{1/4}\right] \geq \frac{1}{4}.
\end{equation}
Without loss of generality we may write down $S_{\rm good}$ as
\begin{equation}
    S_{\rm good} := (\underbrace{(0,\lfloor c(0) \rfloor_{\zeta/5}),\ldots (0,\lfloor c(0) \rfloor_{\zeta/5})}_{\text{$k$ examples}}, (x_{k+1},\lfloor c(x_{k+1})\rfloor_{\zeta/5})\ldots (x_{m},\lfloor c(x_m)\rfloor_{\zeta/5}))
\end{equation}
for some $k\geq (1-8\alpha)m$. Consider an alternative sample, $S_{\rm alt}$, which takes the form $$
S_{\rm alt} = (\underbrace{(0,\lfloor c(0) \rfloor_{\zeta/5}),\ldots, (0,\lfloor c(0) \rfloor_{\zeta/5})}_{\text{$m$ examples}}).
$$ 
$S_{\rm alt}$ differs from $S_{\rm good}$ in exactly $m-k < 8\alpha m$ examples, and so by the $\eps$-$\DP$ property of $\calA$, we~have 
 \begin{equation}\label{eq:54}
     \Pr_{\calA} [\calA(S_{\rm alt}) \in G^{1/4}_{D,\zeta}] \geq \exp(-8\alpha\eps m) \Pr_{\calA} [\calA(S_{\rm good}) \in G^{1/4}_{D,\zeta}] \geq \frac{1}{4} \exp(-8\alpha\eps m).
 \end{equation}
For the remainder of this proof, we will use Eq.~\eqref{eq:54} to construct the pair $(\mathscr{H},\mathcal{P})$. 
Define
$$ 
S_z = (\underbrace{(0, z),\ldots, (0, z)}_{\text{$m$ examples}}).
$$
Now, for each $z \in \In_{\zeta/5}$, run $\calA(S_{z})$ repeatedly $4 \ln (4) e^{8 \alpha \varepsilon m}$ times. Store all the outputs in set $\mathcal{H}$, which has size $|\mathcal{H}| = 5/\zeta \cdot 4 \ln (4) e^{8 \alpha \varepsilon m}$. It is clear that for $z = \lfloor c(0) \rfloor_{\zeta/5}$, $S_{z} = S_{\rm alt}$, and Eq.~\eqref{eq:54} therefore gives us guarantees on the output of $\calA(S_{z})$. We may conclude from Eq.~\eqref{eq:54} that for set $\mathcal{H}$ generated in the above fashion,
\begin{equation}\label{eq:prdim}
    \Pr[\mathcal{H} \cap G^{1/4}_{D,\zeta} = \varnothing] \leq \left(1-\frac{1}{4} e^{-8 \alpha \eps m}\right)^{4 \ln (4) e^{8 \alpha \eps m}} \leq \frac{1}{4}.
    \end{equation}
Rearranging gives $m=\frac{1}{8 \alpha \eps}\left( \PRD_{\zeta,1/4,1/4}(\Cc) - \ln(5/\zeta\cdot 4\ln 4) \right) $.

We may therefore define $\mathscr{H}:= \left\{\mathcal{G} \subseteq \mathcal{F}:|\mathcal{G}| \leq 5/\zeta \cdot 4 \ln (4) e^{8 \alpha \varepsilon m}\right\}$ (note that $\mathcal{H}\in \mathscr{H}$) and further define $\mathcal{P}$ to be the distribution that puts all probability mass on $\mathcal{H}$. Comparing Eq.~\eqref{eq:prdim} with the definition of $\PRD$, Definition~\ref{def:PRD}, observe that $(\mathscr{H},\mathcal{P})$ make up a $(\zeta, 1/4,1/ 4)$ -probabilistic representation for the class $\Cc$. Hence $\textsf{PRDim}_{\zeta,1/4, 1/4}\leq \ln ( 5/\zeta \cdot 4 \ln (4)) + 8 \alpha \varepsilon m$.
\end{proof}

The following lemma is an immediate corollary of~\cite[Theorem 3.1]{feldman2014sample} who proved it for Boolean functions and the exact same proof carries over for our definition of $\PRD$ and randomized one-way communication model in the real-valued setting.

\begin{lemma}[$\PRD$ $\asymp$ Randomized Communication Complexity for real-valued functions]
\label{lem:PRdimtoR}
Let $\Cc$ be a concept class of real-valued functions. The following relations hold:
$$
 \PRD_{\zeta,\eps,\delta}(\Cc) \leq R^{\rightarrow, pub}_{\zeta,\eps\delta}(\Cc),\qquad
  R^{\rightarrow, pub}_{\zeta,\eps+\delta-\eps\delta}(\Cc) \leq \PRD_{\zeta,\eps,\delta}(\Cc).
$$
\end{lemma}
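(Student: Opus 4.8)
The plan is to verify that the proof of \cite[Theorem 3.1]{feldman2014sample}, which establishes this two-sided relation for Boolean-valued concept classes, carries over essentially verbatim once one replaces the Boolean predicate ``$h(x)=f(x)$'' everywhere by the real-valued predicate ``$|h(x)-f(x)|\le\zeta$'' and works throughout with the loss functional $\Loss_{D}(h,f,\zeta)$ in place of the $0/1$ disagreement probability. Concretely there are two inequalities, and I would prove each by a direct reduction.

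\emph{From a protocol to a representation ($\PRD_{\zeta,\eps,\delta}(\Cc)\le R^{\rightarrow,pub}_{\zeta,\eps\delta}(\Cc)$).} Given a public-coin one-way protocol $\Pi$ for $\textsf{Eval}_{\Cc}$ with error $\eps\delta$ and communication cost $c=R^{\rightarrow,pub}_{\zeta,\eps\delta}(\Cc)$, for each fixing $\mathsf{r}$ of the public string let $\Hi_{\mathsf{r}}:=\{\,x\mapsto \Pi_{B}(m,\mathsf{r},x):m\in\{0,1\}^{c}\,\}$ be the collection of all functions Bob could compute, so $\log|\Hi_{\mathsf{r}}|\le c$; let $\mathcal{P}$ be the distribution of $\mathsf{r}$ and $\mathscr{H}=\{\Hi_{\mathsf{r}}\}_{\mathsf{r}}$. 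Fixing a target $f\in\Cc$ and a distribution $D$ on $\X$, let $h_{\mathsf{r}}\in\Hi_{\mathsf{r}}$ be the function Bob computes when Alice holds $f$; correctness of $\Pi$ says $\Pr_{\mathsf{r}}[\,|h_{\mathsf{r}}(x)-f(x)|>\zeta\,]\le\eps\delta$ for every fixed $x$, so averaging over $x\sim D$ and swapping expectations gives $\Exp_{\mathsf{r}}[\Loss_{D}(h_{\mathsf{r}},f,\zeta)]\le\eps\delta$, and Markov's inequality yields $\Pr_{\mathsf{r}}[\Loss_{D}(h_{\mathsf{r}},f,\zeta)>\eps]\le\delta$. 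Thus with probability $\ge1-\delta$ over $\Hi\sim\mathcal{P}$ there is an $h\in\Hi$ of loss $\le\eps$, which is exactly the $(\zeta,\eps,\delta)$-representation property, so $\PRD_{\zeta,\eps,\delta}(\Cc)\le c$.

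\emph{From a representation to a protocol ($R^{\rightarrow,pub}_{\zeta,\eps+\delta-\eps\delta}(\Cc)\le\PRD_{\zeta,\eps,\delta}(\Cc)$).} Let $(\mathscr{H},\mathcal{P})$ be a $(\zeta,\eps,\delta)$-representation of $\Cc$ with $k:=\max_{\Hi\in\supp(\mathscr{H})}\log|\Hi|=\PRD_{\zeta,\eps,\delta}(\Cc)$. The protocol: Alice and Bob use the public coins to sample $\Hi\sim\mathcal{P}$ (known to both); Alice, who also knows $f$, plays an optimal mixed strategy $\mu_{f,\Hi}$ in the zero-sum game in which she chooses $h\in\Hi$, the adversary chooses a distribution $D$ on $\X$, and the adversary's payoff is $\Loss_{D}(h,f,\zeta)$; she sends the index of the sampled $h$ (at most $\lceil\log|\Hi|\rceil\le k$ bits) and Bob outputs $h(x)$. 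By von Neumann's minimax theorem the value of this game is $V(\Hi,f)=\max_{D}\min_{h\in\Hi}\Loss_{D}(h,f,\zeta)$, so for every $x$ the per-instance error is $\Pr_{h\sim\mu_{f,\Hi}}[\,|h(x)-f(x)|>\zeta\,]\le V(\Hi,f)$. Finally, ``$V(\Hi,f)\le\eps$'' is precisely the statement that $\Hi$ contains a hypothesis of loss $\le\eps$ against \emph{every} distribution $D$; the representation property supplies this with probability $\ge1-\delta$ over $\Hi\sim\mathcal{P}$, whence $\Exp_{\Hi}[V(\Hi,f)]\le\eps(1-\delta)+\delta=\eps+\delta-\eps\delta$, bounding the total error of the protocol and giving $R^{\rightarrow,pub}_{\zeta,\eps+\delta-\eps\delta}(\Cc)\le k$.

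I expect the only delicate point to be in the second direction: Definition~\ref{def:PRD} is phrased as ``for every $f$ and every $D$, with probability $1-\delta$ over $\Hi$ there is a good hypothesis'', whereas the minimax step needs ``with probability $1-\delta$ over $\Hi$, for every $D$ there is a good hypothesis'' (i.e.\ $V(\Hi,f)\le\eps$). Reconciling these two orders of quantifiers -- equivalently, arguing that one may take the adversary's worst-case $D$ to be its optimal mixed strategy against the \emph{sampled} $\Hi$ -- is exactly where one must follow \cite{feldman2014sample} (and \cite{beimel2013}) carefully; but nothing in that argument uses that the range of the concepts is $\{0,1\}$, so it transfers. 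The remaining ingredients (Markov's inequality in the first direction, the minimax theorem and the public-coin bookkeeping in the second) are routine and range-agnostic, since all of them see only the loss functional $\Loss_D(\cdot,\cdot,\zeta)$, which is an average of a $\{0,1\}$-valued indicator exactly as in the Boolean case.
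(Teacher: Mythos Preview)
Your approach matches the paper's: both simply observe that Feldman--Xiao's Boolean argument goes through verbatim once equality is replaced by $\zeta$-closeness, since the only object the proof manipulates is the $\{0,1\}$-valued indicator $\mathbb{1}[|h(x)-f(x)|>\zeta]$ and its average $\Loss_D(h,f,\zeta)$. Your first direction is clean and correct.

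In the second direction, however, the way you structure the minimax creates the very quantifier obstacle you then flag. Applying minimax \emph{per sampled} $\Hi$ leaves you needing $\Pr_{\Hi}[V(\Hi,f)\le\eps]\ge 1-\delta$, i.e.\ the swapped quantifier ``with probability $1-\delta$, for every $D$ \dots'', which Definition~\ref{def:PRD} does not give and which is genuinely stronger. The fix (and what Feldman--Xiao actually do) is to apply minimax once at the \emph{outer} level: take Alice's strategy space to be all maps $\Hi\mapsto$ (distribution over $\Hi$), and let the adversary choose a single distribution $D$ on $\X$. The game value is then
\[
\min_{\sigma}\ \max_{x}\ \Exp_{\Hi,\,h\sim\sigma(\Hi)}\big[\mathbb{1}[|h(x)-f(x)|>\zeta]\big]
\;=\;
\max_{D}\ \Exp_{\Hi}\Big[\min_{h\in\Hi}\Loss_D(h,f,\zeta)\Big],
\]
and the right-hand side is bounded \emph{directly} by the $\PRD$ guarantee with no quantifier swap: for each fixed $D$, with probability $\ge 1-\delta$ the inner $\min$ is $\le\eps$, and it is always $\le 1$, so the expectation is $\le(1-\delta)\eps+\delta=\eps+\delta-\eps\delta$. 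Alice then uses the optimal outer strategy (not the per-$\Hi$ optimal one), sending $\lceil\log|\Hi|\rceil$ bits. With this adjustment your argument is complete and matches the intended proof.
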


\subsection{One-way communication is characterized by $\sfat(\cdot)$}
\label{sec:onewayandsfat}
We next prove that for every real-valued concept class $\Cc\subseteq \{f:\X\rightarrow [0,1]\}$, the sequential fat-shattering dimension lower bounds the randomized communication complexity of $\Cc$. Namely, we prove the following lemma:
\begin{lemma}\label{lem:R(C)tosfat(C)}
Let $\Cc\subseteq \{f:\X\rightarrow [0,1]\}$ be a concept class. Then
 $   R_{\zeta,\eps}^{\rightarrow}(\Cc) \geq (1-H(\eps))\cdot \sfat_\zeta(\Cc)$.
\end{lemma}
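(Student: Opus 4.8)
The plan is to reduce an Augmented‑Index–style problem on $d:=\sfat_\zeta(\Cc)$ bits to the one‑way evaluation task $\textsf{Eval}_\Cc$, and then to lower bound that task by $(1-H(\eps))\,d$ via a standard one‑way information argument. First I would fix a complete depth‑$d$ $\zeta$‑sequential fat‑shattering tree $T$ for $\Cc$ (which exists precisely because $\sfat_\zeta(\Cc)=d$): each internal node $w$ carries a pair $(x_w,a_w)\in\X\times[0,1]$, each leaf is identified with a string $v\in\{0,1\}^d$, and there is a function $f_v\in\Cc$ whose root‑to‑$v$ path satisfies, at the ancestor $u$ of $v$ at depth $i-1$, that $f_v(x_u)\le a_u-\zeta$ if $v_i=0$ and $f_v(x_u)\ge a_u+\zeta$ if $v_i=1$.

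Next I would set up the reduction. Suppose $\Pi$ is a one‑way public‑coin protocol achieving $c:=R_{\zeta,\eps}^{\rightarrow}(\Cc)$, where Alice holds $f\in\Cc$, Bob holds $x\in\X$, and Bob outputs a $\zeta$‑approximation of $f(x)$ with probability $\ge 1-\eps$ \emph{on every input pair}; we may assume Alice's message has a fixed length $c$. Consider the thought experiment in which $V=(V_1,\dots,V_d)$ is uniform over $\{0,1\}^d$ and independent of the public randomness $R$; Alice runs $\Pi$ on $f_V$, producing message $M$, so $H(M\mid R)\le c$. The key point is that the prefix $V_{<i}$ determines a node $u_i=u_i(V_{<i})$ at depth $i-1$ of $T$ (the depth‑$(i-1)$ ancestor of $V$), and a valid $\zeta$‑approximation $\hat y$ of $f_V(x_{u_i})$ reveals $V_i$: since $f_V(x_{u_i})$ is either $\ge a_{u_i}+\zeta$ or $\le a_{u_i}-\zeta$, the test ``$\hat y> a_{u_i}$'' recovers $V_i$ whenever $\hat y$ is valid (the measure‑zero boundary case $\hat y=a_{u_i}$ is handled by an arbitrary tie‑break, or by shrinking the margin by an infinitesimal amount, and affects nothing asymptotically). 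Hence $\hat V_i:=\hat V_i(M,R,V_{<i})$, defined by feeding Bob's decoder of $\Pi$ the point $x_{u_i}$, satisfies $\Pr[\hat V_i\ne V_i]\le\eps$, using correctness of $\Pi$ on the pair $(f_V,x_{u_i})$.

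The information‑theoretic step is then routine: by Fano's inequality $H(V_i\mid M,R,V_{<i})\le H(\eps)$ for every $i$, while $H(V_i\mid R,V_{<i})=1$ since $V_i$ is uniform and independent of $(R,V_{<i})$. Using $H(M\mid R)\le c$ and the chain rule for mutual information,
\begin{align*}
c \;\ge\; H(M\mid R)\;\ge\; I(V;M\mid R)
&=\sum_{i=1}^{d} I(V_i;M\mid R,V_{<i})\\
&=\sum_{i=1}^{d}\Big(H(V_i\mid R,V_{<i})-H(V_i\mid M,R,V_{<i})\Big)\;\ge\; d\,\big(1-H(\eps)\big),
\end{align*}
which is exactly $R_{\zeta,\eps}^{\rightarrow}(\Cc)\ge (1-H(\eps))\,\sfat_\zeta(\Cc)$.

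The part that needs the most care is the \emph{reduction}, not the counting: one must exploit the \emph{augmented} index structure — Bob's ability to reconstruct the relevant node $u_i$ from the prefix $V_{<i}$ — because without the prefix he cannot locate the correct node of $T$; and one must check that an \emph{approximate} threshold comparison still pins down the shattering bit, the only subtlety being the boundary $\hat y=a_{u_i}$, which is handled by tie‑breaking. (I would also remark that a private‑coin protocol only makes the bound easier, and that replacing $M$ by a quantum message and the Shannon mutual information by its quantum analogue, together with Holevo's bound, yields the corresponding lower bound on $Q_{\zeta,\eps}^{\rightarrow}(\Cc)$ — though that is pursued separately.)
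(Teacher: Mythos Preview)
Your proof is correct and follows essentially the same approach as the paper: both use the depth-$d$ $\zeta$-sequential fat-shattering tree to embed (Augmented) Index into $\textsf{Eval}_\Cc$, with Alice's bitstring selecting a leaf function and Bob's prefix selecting the relevant internal node, so that a $\zeta$-approximation of $f_V(x_{u_i})$ recovers $V_i$ by thresholding at $a_{u_i}$. The only difference is cosmetic: the paper states the reduction as $R_{\zeta,\eps}^{\rightarrow}(\Cc)\ge R_{\eps}^{\rightarrow}(\textsf{AugIndex}_d)$ and then cites the known bound $R_{\eps}^{\rightarrow}(\textsf{AugIndex}_d)\ge(1-H(\eps))d$ from \cite{feldman2014sample}, whereas you unroll that standard chain-rule-plus-Fano argument inline.
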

With this lemma, we complete our chain of implications, and obtain the conclusion of this section, that the sample complexity of pure $\DP$ $\PAC$ learning upper-bounds the $\sfat(\cdot)$ dimension. 
We remark that the statement above is the real-valued version of the relationship exhibited in~\cite{feldman2014sample}, wherein the Littlestone dimension (Boolean analog of $\sfat(\cdot)$) lower-bounds the randomized communication complexity of Boolean function classes. The proof of Lemma~\ref{lem:R(C)tosfat(C)} proceeds in two steps. First, we define the communication problem $\textsf{AugIndex}_d$ and show that $R_{\zeta,\eps}^{\rightarrow}(\Cc) \geq R_{\eps}^{\rightarrow}(\textsf{AugIndex}_d)$ for $d$ the $\sfat$ dimension of $\Cc$. (We refer the reader to Section~\ref{subsubsec:comm} for the definitions of the quantities $R_{\zeta,\eps}^{\rightarrow}(\cdot)$ and $R_{\eps}^{\rightarrow} (\cdot)$ which pertain respectively to real- and Boolean-function communication complexity.) Next, we use the known relation $R_{\eps}^{\rightarrow}(\textsf{AugIndex}_d) > (1-H(\eps))d$ where $H:[0,1] \rightarrow [0,1]$ is the binary entropy function $H(x):= - x \log x - (1-x) \log (1-x)$. 

To do the first of the two steps, we will relate the one-way classical communication complexities of two communication tasks. The first is the task $\textsf{AugIndex}_d$ for $d\in\mathbb{Z}_+$ which is defined as follows: Alice gets string $x\in \01^d$, while Bob gets $x_{[i-1]}$ for some $i\in[d]$, which is the length-$(i-1)$ prefix of $x$. The task is for Bob to output the bit $x_i$ and we say that $\textsf{AugIndex}_d(x,i) = x_i$. The second is the task $\textsf{Eval}_{\Cc}$, defined in Section~\ref{subsubsec:comm}, for some real-valued function class $\Cc\subseteq \{f:\X\rightarrow [0,1]\}$. We repeat the definition for convenience: Alice is given a function $f\in \Cc$ and Bob a $z\in \X$ and Bob's goal is to approximately compute $f(z)$, i.e., Bob has to compute $b\in [0,1]$ satisfying
\begin{equation}\label{eq:b}
    \Pr_{}\big[|b-f(z)| \leq \zeta\big] \geq 1-\eps,
\end{equation}
where the probability is taken over the local randomness of Alice and Bob respectively. We denote the one-way randomized communication complexity of $\textsf{Eval}_{\Cc}$ as $R_{\zeta,\eps}^{\rightarrow}(\Cc)$ for short. 

\begin{lemma}\label{lem:Ctoaugindex}
If $\Cc\subseteq \{f:\X\rightarrow [0,1]\}$ satisfies $\sfat_\zeta(\Cc)=d$, then $R_{\zeta,\eps}^{\rightarrow}(\Cc) \geq R_{\eps}^{\rightarrow}(\textsf{AugIndex}_d)$.
    
\end{lemma}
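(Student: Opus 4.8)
## Proof proposal for Lemma~\ref{lem:Ctoaugindex}

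The plan is to give a reduction: I will show that any one-way protocol for $\textsf{Eval}_{\Cc}$ with parameters $(\zeta,\eps)$ can be used, as a black box, to solve $\textsf{AugIndex}_d$ with error $\eps$ and the same communication cost. The starting point is the hypothesis $\sfat_\zeta(\Cc)=d$, which by Definition of the sequential fat-shattering dimension gives us a complete depth-$d$ binary tree $T$ that is $\zeta$-sequentially fat-shattered by $\Cc$: every internal vertex $w$ carries a domain point $x_w\in\X$ and a threshold $a_w\in[0,1]$, and every leaf (viewed as a $d$-bit string) is realized by some function $f\in\Cc$ that goes left at $w$ when $f(x_w)\le a_w-\zeta$ and right when $f(x_w)\ge a_w+\zeta$.

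The reduction is as follows. In $\textsf{AugIndex}_d$, Alice holds $x\in\01^d$ and Bob holds the prefix $x_{[i-1]}$ for some $i\in[d]$. Alice interprets $x$ as a leaf of $T$ and picks the function $f_x\in\Cc$ witnessing that leaf; this is her input to the $\textsf{Eval}_{\Cc}$ protocol. Bob, knowing $x_{[i-1]}$, can walk from the root of $T$ down the path dictated by $x_1,\dots,x_{i-1}$ and thereby reach the (unique) internal vertex $w$ at depth $i-1$; he takes $z := x_w$ as his input to the $\textsf{Eval}_{\Cc}$ protocol and records the associated threshold $a_w$. Alice sends exactly the message the $\textsf{Eval}_{\Cc}$ protocol prescribes, Bob runs his side of the protocol and obtains $b\in[0,1]$ with $\Pr[|b-f_x(z)|\le\zeta]\ge 1-\eps$. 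Now the fat-shattering property forces a gap at $z=x_w$: since $w$ is an ancestor of the leaf $x$ at depth $i-1$, we have $f_x(x_w)\le a_w-\zeta$ if $x_i=0$ and $f_x(x_w)\ge a_w+\zeta$ if $x_i=1$. Hence if $|b-f_x(x_w)|\le\zeta$, then $b\le a_w$ in the first case and $b\ge a_w$ in the second (strictly, one gets $b \le a_w$ vs.\ $b \ge a_w$, and one can break ties by outputting, say, $0$ when $b \le a_w$). So Bob outputs $0$ if $b\le a_w$ and $1$ otherwise; this equals $x_i$ whenever the $\textsf{Eval}_{\Cc}$ protocol succeeded, i.e.\ with probability at least $1-\eps$. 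This is a valid one-way protocol for $\textsf{AugIndex}_d$ with the same communication, so $R_{\zeta,\eps}^{\rightarrow}(\Cc)\ge R_{\eps}^{\rightarrow}(\textsf{AugIndex}_d)$.

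The main thing to be careful about — not really an obstacle, but the one place the argument could be sloppy — is the boundary case where $b$ lands exactly on $a_w$, or more generally making sure the $\zeta$-slack in $\textsf{Eval}_{\Cc}$ exactly matches the $\zeta$-gap in the shattering tree so that a correct $\zeta$-approximation $b$ of $f_x(x_w)$ always lands on the correct side of $a_w$. With the gap $2\zeta$ between $a_w-\zeta$ and $a_w+\zeta$ and the approximation error at most $\zeta$, a successful $b$ satisfies $b \le a_w$ when $x_i = 0$ and $b \ge a_w$ when $x_i = 1$, so a fixed tie-breaking rule at $a_w$ resolves it cleanly. A second minor point is that Bob must be able to locate the depth-$(i-1)$ vertex $w$ from $x_{[i-1]}$ alone; this is immediate because $T$ is a \emph{complete} binary tree, so the prefix $x_1\cdots x_{i-1}$ names a unique root-to-vertex path of length $i-1$. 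Chaining this with the standard bound $R_{\eps}^{\rightarrow}(\textsf{AugIndex}_d) > (1-H(\eps))d$ then yields Lemma~\ref{lem:R(C)tosfat(C)}.
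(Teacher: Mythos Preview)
Your proof is correct and follows essentially the same approach as the paper: both construct a reduction from $\textsf{AugIndex}_d$ to $\textsf{Eval}_{\Cc}$ by having Alice map her string $x$ to the function at the corresponding leaf of the $\zeta$-sequential fat-shattering tree, and having Bob map his prefix $x_{[i-1]}$ to the domain point and threshold at the corresponding internal node, then thresholding the $\textsf{Eval}_{\Cc}$ output at $a_w$. Your discussion of the tie-breaking at $b=a_w$ and the use of completeness of the tree are slightly more explicit than the paper's, but the argument is the same.
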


\begin{proof}
The idea of the proof is to show that a a one-way communication protocol for $\textsf{Eval}_{\Cc}$ can also be used to compute $\textsf{AugIndex}_d$ for $d =\sfat_{\zeta}(\Cc)$. The protocol for $\textsf{AugIndex}_d$ is as follows:
\begin{enumerate}
    \item Alice and Bob agree on the $\zeta$-fat-shattering tree for the concept class $\Cc$ ahead of time.
    \item Upon being given an instance of the $\textsf{AugIndex}_d$ problem, Alice (who has the $d$-bit string $x$) identifies some function in $\Cc$ as follows: she follows the $\zeta$-fat-shattering tree down the path of left-right turns defined by string $x$. This takes her to a leaf $\ell$ which is associated with some unique function $c_{\rm Alice} \in \Cc$.
    Bob (who has the $(i-1)$-bit string $x_{[i-1]}$) identifies some $z_{\rm Bob} \in \mathcal{X}, \,a_{\rm Bob} \in [0,1]$ as follows: he follows the $\zeta$-fat-shattering tree down the path of left-right turns defined by $x_{[i-1]}$. This takes him to some node $w$ at level $i-1$ and Bob sets $z_{\rm Bob},\,a_{\rm Bob}$ to be the domain point and threshold associated with that node. 
    \item Alice and Bob use their protocol $\pi$ for ${\rm Eval}_{\Cc}$ on the inputs $c_{\rm Alice},z_{\rm Bob} $, and following this protocol allows Bob to compute a $b$ that satisfies 
    \begin{equation}\label{eq:b_specific}
    \Pr_{}\big[|b- c_{\rm Alice}(z_{\rm Bob})| \leq \zeta\big] \geq 1-\eps.
    \end{equation}
    \item If $b > a_{\rm Bob}$, Bob outputs 1; else output 0. 
\end{enumerate}
We now prove the correctness of this protocol. Eq.~\eqref{eq:b_specific} states that with probability $1-\eps$, $b$ is a $\zeta$-approximation of $c_{\rm Alice}(z_{\rm Bob})$. Condition on this. In parallel, observe that the Alice's leaf $\ell$ associated with the function $c_{\rm Alice}$ is a descendent of Bob's node $w$ associated with the values $(z_{\rm Bob}, a_{\rm Bob})$, therefore one of the following two statements must be true by definition of $\zeta$-fat-shattering tree and by the procedure outlined in Step 2:
\begin{itemize}
\item $\ell$ is in the right subtree of $w$ i.e., $c_{\rm Alice}(z_{\rm Bob}) > a_{\rm Bob} + \zeta$, and $x_i=1$. By Eq.~\eqref{eq:b_specific}, this implies $b> a_{\rm Bob}$. By Step 4, Bob outputs 1, which is also the value  of $x_i = \textsf{AugIndex}_d(x,i)$.
\item  $\ell$ is in the left subtree of $w$ i.e., $c_{\rm Alice}(z_{\rm Bob}) < a_{\rm Bob} - \zeta$, and $x_i=0$. By Eq.~\eqref{eq:b_specific}, this implies $b < a_{\rm Bob}$. By Step 4, Bob outputs 0, which is also the value  of $x_i = \textsf{AugIndex}_d(x,i)$.
\end{itemize}
This means that the output of Bob in Step 4, $\tilde{b}$, satisfies 
\begin{equation}\label{eq:tb}
    \Pr_{}[\tilde{b}= \textsf{AugIndex}_d(x,i)] \geq 1-\eps,
\end{equation}
where again the probability is taken over the randomness of Alice and Bob. Hence, the protocol above is a valid protocol for computing $\textsf{AugIndex}_d$.
\end{proof}

Finally we can prove the lemma stated at the beginning of the section. 
\begin{proof}[Proof of Lemma~\ref{lem:R(C)tosfat(C)}]
Follows from Lemma~\ref{lem:Ctoaugindex} combined with the inequality $R_{\eps}^{\rightarrow}(\textsf{AugIndex}_d) \geq (1-H(\eps))d$ which was proven in~\cite{feldman2014sample}. 
\end{proof}

In fact, below we strengthen the above into a bound on the one-way {\em quantum} communication complexity of computing real-valued concept classes. 
\begin{corollary}\label{corr:Q(C)tosfat(C)}
Let $\Cc\subseteq \{f:\X\rightarrow [0,1]\}$ be a concept class. Then
 $   Q_{\zeta,\eps}^{\rightarrow}(\Cc) \geq (1-H(\eps))\cdot \sfat_\zeta(\Cc)$.
\end{corollary}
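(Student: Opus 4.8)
The plan is to follow the proof of Lemma~\ref{lem:R(C)tosfat(C)} line by line, substituting a quantum lower bound for $\textsf{AugIndex}_d$ in place of the classical one. The first step is to observe that Lemma~\ref{lem:Ctoaugindex} ports to the quantum model without change: the protocol it gives for $\textsf{AugIndex}_d$ consists only of (i) \emph{local}, input-dependent pre-processing (Alice descends the $\zeta$-fat-shattering tree along $x$ to obtain a function $c_{\rm Alice}\in\Cc$; Bob descends along $x_{[i-1]}$ to obtain a pair $(z_{\rm Bob},a_{\rm Bob})$), (ii) one black-box call to a protocol for $\textsf{Eval}_{\Cc}$, and (iii) Bob's local thresholding, outputting $1$ iff $b>a_{\rm Bob}$. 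Handing this reduction a \emph{quantum} one-way protocol for $\textsf{Eval}_{\Cc}$ instead, Alice and Bob realize (i) and (iii) by local unitaries and Alice transmits the same number of qubits; the correctness analysis is untouched, since it only uses Eq.~\eqref{eq:b_specific}. Hence $Q_{\zeta,\eps}^{\rightarrow}(\Cc)\geq Q_{\eps}^{\rightarrow}(\textsf{AugIndex}_d)$ with $d=\sfat_\zeta(\Cc)$.

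It then remains to establish $Q_{\eps}^{\rightarrow}(\textsf{AugIndex}_d)\geq (1-H(\eps))\,d$; this is essentially the quantum random access code bound of~\cite{nayak,ambainis2002dense}, and I would include its short proof for completeness. First fix the shared randomness of an optimal protocol to a value that preserves success probability $1-\eps$ under a uniformly random input $x\in\{0,1\}^d$, and let $\rho^x$ be the $q$-qubit message, $q=Q_{\eps}^{\rightarrow}(\textsf{AugIndex}_d)$. With $X$ uniform on $\{0,1\}^d$ and $\rho=\mathbb{E}_X[\rho^X]$, the chain rule for the mutual information of the classical--quantum state $\sum_x 2^{-d}\ketbra{x}{x}\otimes\rho^x$ gives $I(X:\rho^X)=\sum_{i=1}^d I(X_i:\rho^X\mid X_{<i})$. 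For every fixed prefix $x_{<i}$, Bob holds $\rho^X$ together with $x_{<i}$ and recovers the uniformly random bit $X_i$ with error at most $\eps$; by Holevo's theorem the quantum mutual information upper-bounds the accessible information, and by Fano's inequality the accessible information is at least $1-H(\eps)$, so every summand is $\geq 1-H(\eps)$. Since $I(X:\rho^X)\leq S(\rho)\leq q$, this yields $q\geq(1-H(\eps))\,d$.

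Chaining the two inequalities proves the corollary: $Q_{\zeta,\eps}^{\rightarrow}(\Cc)\geq Q_{\eps}^{\rightarrow}(\textsf{AugIndex}_d)\geq(1-H(\eps))\,\sfat_\zeta(\Cc)$. The only step that is not a verbatim translation of the classical argument is the $\textsf{AugIndex}$ lower bound: there the classical chain-rule/Fano computation must be replaced by the quantum chain rule together with Holevo's bound (equivalently, by the known $\Omega(d)$ lower bound on the length of a quantum random access code), and I expect this to be the only point requiring genuine care. Everything else — in particular the tree-walking reduction and its correctness analysis — is oblivious to whether the channel carries classical bits or qubits, since a one-way protocol for $\textsf{Eval}_{\Cc}$ is used purely as a black box sandwiched between local pre- and post-processing.
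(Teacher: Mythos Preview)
Your proposal is correct and follows essentially the same approach as the paper: port the tree-walking reduction of Lemma~\ref{lem:Ctoaugindex} to the quantum one-way model (unchanged, since the $\textsf{Eval}_{\Cc}$ protocol is used as a black box), then invoke the quantum lower bound for $\textsf{AugIndex}_d$. The only difference is that the paper cites Nayak's serial-encoding bound~\cite[Theorem~2.3]{nayak} directly for the second step, whereas you sketch the chain-rule/Holevo/Fano argument; note that after fixing shared randomness you only retain an \emph{average}-case error guarantee over inputs, so your claim ``error at most $\eps$ for every fixed prefix'' needs a convexity-of-$1-H$ averaging step, but this is standard.
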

\begin{proof}[Proof of Corollary~\ref{corr:Q(C)tosfat(C)}]
In the proof of Lemma~\ref{lem:Ctoaugindex}, simply replace the classical one-way randomized protocol to compute ${\rm Eval}_C$ with the quantum one-way randomized protocol. This gives that $Q_{\zeta,\eps}^{\rightarrow}(C) \geq Q_{\eps}^{\rightarrow}(\textsf{AugIndex}_d).$ Next, \cite[Theorem~2.3]{nayak} provides a bound for the complexity of quantum {\em serial encoding} that amounts to the statement 
$Q_{\eps}^{\rightarrow}(\textsf{AugIndex}_d) \geq (1-H(1-\varepsilon))d.$ Combining the two yields the claim.  
\end{proof}
\noindent We remark that a similar corollary for \emph{Boolean} concept classes was proven earlier by  Zhang~\cite{Zhang2011OnTP} (where the RHS of Corollary~\ref{corr:Q(C)tosfat(C)} is replaced by Littlestone dimension). Our proof technique easily generalizes to the Boolean setting and significantly simplifies his proof~\cite[Appendix A]{Zhang2011OnTP}.

\section{Applications of our results}
\label{sec:app}
We now present a few applications of the results we established in the previous sections. For the rest of this section, let $\U$ be a class of quantum states on $n$ qubits, and let $\U_n$ refer to the the set of {\em all} quantum states on $n$ qubits. So far, we have shown that the complexity of learning the quantum states from the class $\U$, in two models of learning (pure $\DP$ $\PAC$ and online learning in the mistake bound model), depends on the sequential fat shattering dimension of the real-valued function class $\Cc_{\U}$ associated with $\U$: here $\Cc_{\U} := \{f_{\rho}: \mathcal{X} \rightarrow[0,1]\}_{\rho\in {\U}}$, where $\mathcal{X}$ is the set of all possible two-outcome measurements, and $f_\rho$ is given by $f_{\rho}(E) = \Tr(E \rho)$ for every $E\in \X$. 

In the online learning work of Aaronson et al.~\cite{aaronson2018online} they consider the setting where ${\U}$ is the set of all $n$-qubit states $\U_n$. Let us denote the corresponding function class as $\Cc_n$. In this case,~\cite{aaronson2018online} showed that $\sfat_{\eps}(\Cc_{n}) \leq O(n/\eps^2)$, thus effectively upper-bounding the $\sfat(\cdot)$ dimension of the class of all $n$-qubit quantum states by $n$. This section asks what happens when we allow ${\U} \subseteq {\U}_n$ -- for instance, when ${\U}$ is a special class of states that may be of particular interest or more experimentally feasible to prepare. Are there any meaningful such classes for which we can improve this bound? We first answer this affirmatively for a few classes of quantum states
and finally improve the sample complexity of gentle shadow tomography for these classes of states. 

\subsection{Holevo information and sequential fat shattering dimension}\label{subsec:holevo}
In this section we provide an upper bound on $\sfat(\Cc_{\U})$ in terms of the Holevo information of an ensemble defined on the class of states ${\U}$.
 Using this new upper bound leads to improved upper bounds on $\sfat(\cdot)$ for many classes of quantum states ${\U}$, and hence improved upper bounds on the sample complexity of learning ${\U}$. Previously for $\U=\U_n$, Aaronson~\cite{aaronson2007learnability,aaronson2018online} observed that one could use arguments from quantum random access code by Nayak~\cite{nayak} to obtain a \emph{combinatorial} upper bound on learning.  In this section we show that a better upper bound can be achieved by maximizing the Holevo information, $\chi(\{p_{i}, \rho_{i}\}_{\rho_i\in \U})$ (over all possible distributions~$\vec{p}$ on $\U$), where Holevo information is defined as
\begin{equation}
\chi\left(\left\{p_{i}, \rho_{i}\right\}_{\rho_i \in \U}\right) =S_{}(\bar{\rho})-\sum_{i: \rho_i \in \U} p_{i} S_{}\left(\rho_{i}\right), \quad \bar{\rho}=\sum_{i: \rho_i \in \U} p_{i} \rho_{i}, 
\end{equation}
where $\vec{p}$ is a distribution and $S$ is the von Neumann entropy $S(\rho) := -\Tr[\rho \log \rho]$.

\subsubsection{Quantum Random Access Codes}
 We first define random access codes and serial random access codes over the set $\U$, modifying the definition in~\cite{nayak} so that $\U$ -- the set of states from which the code states may be chosen -- is part of the definition of these codes. 

\begin{definition}[Random access codes and serial random access codes]\label{def:RAC}
Let $\U$ be a class of quantum states over $n$ qubits. A $(k,n,p,\U)$-random access code $(\rac)$ consists of a set of $2^k$ code states $\{\rho_s\}_{s \in \{0,1\}^k}\subseteq \U$ such that, for every $i \in [k]$ and  $s \in\{0,1\}^{k}$, there exists a $2$-outcome measurement~$\mathcal{O}_i$ such that
\begin{equation}\label{eq:rac}
\operatorname{Pr}\left[\mathcal{O}_{i}(\rho_s) =s_{i}\right] \geq p.
\end{equation}
A $(k,n,p,\U)$-{\em serial} random access code $(\srac)$ consists of $2^k$ code states $\{\rho_s\}_{s \in \{0,1\}^k}\subseteq \U$ such that, for every $i \in [k],$ and for all $s \in\{0,1\}^{k}$, there exists a measurement with outcome $0$ or $1$, {\em possibly depending on the last $k-i$ bits $x_{i+1},\ldots ,x_k$}, such that Eq.~\eqref{eq:rac} holds.
\end{definition}

In words, a $\rac$ over $\U$ is a way of encoding $k$ classical bits into $n$-qubit states from $\U$, such that for every $ i \in [k]$  and $x \in\{0,1\}^{k}$, the probability of `recovering' the bit $x_i$ by performing the $2$-outcome measurement $\mathcal{O}_i$ on $\rho_x$ is at least $p$. A serial $\rac$ (denoted $\srac$) is defined similarly except that one is allowed to use information from decoding the {\em subsequent bits} to decode $x_i$. Nayak~\cite{nayak} showed the following  relation between the number of encodable classical bits and the number of qubits in the code states  
\begin{equation}\label{eq:nayak}
    \text{Every $(k,n,p,{\U}_n)$-$\rac$ or $(k,n,p,{\U}_n)$-$\srac$ satisfies $n \geq (1-H(p))k.$}
\end{equation}
Here, $H(\cdot)$ is the binary entropy function, and note that the statement applies to code states drawn from the {\em entire} class of $n$-qubit states.

 Aaronson et al.~\cite{aaronson2018online} in a recent work showed the surprising connection that a $p$-sequential fat-shattering tree for ${\U}$ of depth $k$ can be used to construct a $(k,n,p,{\U})$-$\srac$.\footnote{We remark that such a connection between $\rac$ and learnability was established in an earlier work by Aaronson~\cite{aaronson2007learnability} to understand $\PAC$ learnability of quantum states.} As a corollary of this observation, we have
\begin{equation}\label{eq:aaronson}
\sfat_{p}(\Cc_{\U}) \leq \max \{k: \text{ there exists } (k,n,p,{\U})-\srac\}.
\end{equation}
Combining Eq.~\eqref{eq:nayak},~\eqref{eq:aaronson} yields $\sfat_{p}(\Cc_{\U}) \leq n/(1-H(p))$. In this section, we consider the scenario where $\U \subseteq \U_n$ and show that this bound can be improved to the following.
\begin{theorem}[Bounding $\sfat(\cdot)$ by the Holevo information\label{thm:bettersfat}]
Let $p\in [0,1]$ and $\U$ be some class of quantum states over $n$ qubits. Then  
$$
\sfat_{p}(\Cc_{\U}) \leq \frac{1}{1-H(p)} \max \Big\{\chi\big(\{(q_i, \sigma_i)\}_{\sigma_i\in \U}\big): \sum_i q_i = 1\Big\}.
$$
\end{theorem}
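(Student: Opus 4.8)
The plan is to combine Eq.~\eqref{eq:aaronson} with a Holevo-information refinement of Nayak's serial random access code bound Eq.~\eqref{eq:nayak}. Write $\chi^{\star}(\U) := \max\{\chi(\{(q_i,\sigma_i)\}_{\sigma_i\in\U}) : \sum_i q_i = 1\}$ for the right-hand side maximum. By Eq.~\eqref{eq:aaronson} it suffices to show that every $(k,n,p,\U)$-$\srac$ satisfies $k\,(1-H(p)) \le \chi^{\star}(\U)$. So fix such an $\srac$ with code states $\{\rho_s\}_{s\in\{0,1\}^k}\subseteq\U$, let $S=(S_1,\dots,S_k)$ be uniform on $\{0,1\}^k$, and consider the classical-quantum state $\tau_{SB} = 2^{-k}\sum_s \ketbra{s}{s}_S\otimes(\rho_s)_B$. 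Its quantum mutual information $I(S:B)_\tau$ equals the Holevo information $\chi(\{2^{-k},\rho_s\}_{s})$ of this ensemble; since every $\rho_s$ lies in $\U$ (and merging any coinciding code states into an ensemble indexed by distinct states of $\U$ changes neither $S(\bar\rho)$ nor $\sum_i q_i S(\sigma_i)$), this Holevo information is at most $\chi^{\star}(\U)$. It therefore remains to prove $I(S:B)_\tau \ge k\,(1-H(p))$.

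For this I would apply the chain rule for quantum mutual information — legitimate here because the $S_i$ are classical, and valid in any order of the subsystems — in the order matching the serial decoding:
\[
I(S:B)_\tau \;=\; \sum_{i=1}^{k} I\big(S_i : B \mid S_{i+1}\cdots S_k\big)_\tau ,
\]
and bound each summand below by $1-H(p)$. Fix $i$ and condition on $S_{>i}=s_{>i}$: the residual state on $S_i B$ is the binary ensemble $\{\tfrac12,\rho_{0,s_{>i}};\,\tfrac12,\rho_{1,s_{>i}}\}$, where $\rho_{b,s_{>i}}$ is the average of $\rho_s$ over all $s$ whose $i$-th bit is $b$ and whose last $k-i$ bits are $s_{>i}$, and $I(S_i:B\mid S_{>i}=s_{>i})_\tau$ is exactly the Holevo information of this ensemble. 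Applying the $\srac$ decoding measurement $\mathcal{O}_i^{(s_{>i})}$ — which is permitted to depend on the true subsequent bits $s_{>i}$ — to $B$ yields a guess of $S_i$ correct with probability at least $p$, since each defining inequality $\Pr[\mathcal{O}_i^{(s_{>i})}(\rho_s)=s_i]\ge p$ is $\ge p$ and we are merely averaging them. By Fano's inequality on a binary alphabet, $H(S_i\mid\text{outcome})\le H(p)$ (using $p\ge 1/2$, so the binary entropy is monotone on the relevant range; for $p<1/2$ the theorem is vacuous since $1-H(p)\le 0$), hence the accessible information, and a fortiori the Holevo information $I(S_i:B\mid S_{>i}=s_{>i})_\tau$, is at least $1-H(p)$. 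Averaging over $s_{>i}$ gives $I(S_i:B\mid S_{>i})_\tau\ge 1-H(p)$, and summing over $i$ yields $I(S:B)_\tau\ge k\,(1-H(p))$.

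Combining, $k\,(1-H(p))\le \chi(\{2^{-k},\rho_s\}_s)\le \chi^{\star}(\U)$, so $\max\{k:\exists\,(k,n,p,\U)\text{-}\srac\}\le \chi^{\star}(\U)/(1-H(p))$, which together with Eq.~\eqref{eq:aaronson} proves the theorem; taking $\U=\U_n$ and noting $\chi^{\star}(\U_n)=n$ recovers Eq.~\eqref{eq:nayak} as a sanity check. I expect the main obstacle to be the bookkeeping forced by the word ``serial'': one must condition on the \emph{true} subsequent bits rather than the (possibly erroneous) decoded ones, invoke the chain rule in the order $k,k-1,\dots,1$, and check that the conditional binary ensembles inherit the $\ge p$ recovery guarantee — together with the standard but easy-to-misstate chaining of the Holevo bound with Fano's inequality. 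A minor point to handle cleanly is the implicit assumption $p>1/2$ (needed so that $1-H(p)>0$ and so that binary entropy is monotone where it is used), and, if $\U$ is infinite, reading the outer ``$\max$'' as a supremum (or assuming it is attained).
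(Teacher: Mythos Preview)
Your proof is correct and establishes the same intermediate statement the paper isolates as Lemma~\ref{lem:RAC}: every $(k,n,p,\U)$-$\srac$ satisfies $k(1-H(p))\le\chi^{\star}(\U)$, which together with Eq.~\eqref{eq:aaronson} yields the theorem. The two arguments differ only in packaging. The paper invokes Nayak's mixture lemma (Lemma~\ref{lem:mixture}) --- for a binary ensemble $\sigma=\tfrac12(\sigma_0+\sigma_1)$ distinguishable with probability $p$ one has $S(\sigma)\ge\tfrac12[S(\sigma_0)+S(\sigma_1)]+(1-H(p))$ --- and applies it recursively to the suffix-averaged states $\sigma_y=2^{-(k-\ell)}\sum_{z}\rho_{zy}$, obtaining $S(\psi)-2^{-k}\sum_s S(\rho_s)\ge k(1-H(p))$ directly. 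You instead form the cq-state $\tau_{SB}$, use the chain rule $I(S:B)=\sum_i I(S_i:B\mid S_{>i})$, and lower-bound each summand by $1-H(p)$ via Fano plus the Holevo bound. These are the same argument in different clothing: Nayak's lemma \emph{is} precisely the statement that the Holevo information of a $p$-distinguishable binary ensemble is at least $1-H(p)$, and the recursion is the chain rule written out in entropies rather than mutual informations. Your formulation makes the information-theoretic structure --- and in particular why the ``serial'' conditioning on $S_{>i}$ is exactly what is needed --- more explicit; the paper's version is slightly more self-contained because it hides the Fano/Holevo step inside the cited lemma. Your side remarks about $p>1/2$ and reading the $\max$ as a supremum for infinite $\U$ are valid caveats that the paper leaves implicit.
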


To do so, we tighten the argument of Nayak~\cite{nayak} which was originally derived for ${\U} = {\U}_n$. To prove our result, we make use of the following lemma.
\begin{lemma}[\cite{nayak}]
\label{lem:mixture}
Let $\sigma_{0},\sigma_{1}$ be density matrices and  $\sigma=\frac{1}{2}\left(\sigma_{0}+\sigma_{1}\right)$. If $\mathcal{O}$ is a measurement with $\01$-outcome such that making the measurement on $\sigma_{b}$ yields the bit $b$ with probability $p$,~then
$$
S(\sigma) \geq \frac{1}{2}\left[S\left(\sigma_{0}\right)+S\left(\sigma_{1}\right)\right]+(1-H(p)).
$$
\end{lemma}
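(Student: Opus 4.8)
The plan is to recognize that the quantity $S(\sigma)-\tfrac12[S(\sigma_0)+S(\sigma_1)]$ is exactly the Holevo information $\chi(\{(\tfrac12,\sigma_0),(\tfrac12,\sigma_1)\})$ of the binary ensemble encoding a uniform bit, and that the hypothesis on $\mathcal{O}$ says this bit can be decoded with success probability at least $p$. So the lemma is precisely the Holevo bound specialized to a two-state random-access code, and all that remains is to lower bound the accessible information. Concretely, I would introduce the classical--quantum state
\[
\rho_{XB}=\tfrac12\,\ketbra{0}{0}_X\otimes\sigma_0+\tfrac12\,\ketbra{1}{1}_X\otimes\sigma_1 ,
\]
and compute its quantum mutual information: since $\rho_{XB}$ is block diagonal, $S(\rho_{XB})=1+\tfrac12[S(\sigma_0)+S(\sigma_1)]$ and $S(\rho_X)=1$, $S(\rho_B)=S(\sigma)$, so $I(X;B)_\rho=S(\sigma)-\tfrac12[S(\sigma_0)+S(\sigma_1)]$. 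It therefore suffices to show $I(X;B)_\rho\ge 1-H(p)$.

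Next I would apply the two-outcome measurement $\mathcal{O}=\{M_0,M_1\}$ (with $M_0+M_1=\id$) to the $B$-register, i.e. act with the measurement channel $\Lambda(\tau)=\sum_{y\in\{0,1\}}\Tr(M_y\tau)\,\ketbra{y}{y}$. Data processing for quantum mutual information (equivalently, monotonicity of quantum relative entropy under the CPTP map $\id_X\otimes\Lambda$) gives $I(X;B)_\rho\ge I(X;Y)$, where $(X,Y)$ is the classical joint distribution obtained by drawing $X$ uniform and letting $Y$ be the outcome of measuring $\sigma_X$ with $\mathcal{O}$. This is exactly the Holevo bound for the binary ensemble; I would either cite it or include this one-line derivation, since it is the only genuinely quantum ingredient.

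Finally I would lower bound $I(X;Y)=1-H(X\mid Y)$. The hypothesis on $\mathcal{O}$ says $\Pr[Y=b\mid X=b]\ge p$ for $b\in\{0,1\}$, so the decoding error probability is $P_e:=\Pr[X\ne Y]\le 1-p$. Since $X$ is binary, Fano's inequality reduces to $H(X\mid Y)\le H(P_e)$; as $H$ is non-decreasing on $[0,\tfrac12]$ and we are in the regime $p\ge\tfrac12$ (the only regime relevant for the serial-RAC application, so that $1-p\le\tfrac12$), we get $H(X\mid Y)\le H(1-p)=H(p)$. Hence $I(X;Y)\ge 1-H(p)$, and chaining with the previous steps yields $S(\sigma)-\tfrac12[S(\sigma_0)+S(\sigma_1)]\ge 1-H(p)$, which rearranges to the claimed inequality.

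There is no deep obstacle here; the care needed is bookkeeping. One must use Fano's inequality in the right direction and restrict to $p\ge\tfrac12$ so that $H(1-p)=H(p)$ is genuinely an upper bound (for $p<\tfrac12$ only the trivial bound $I(X;Y)\ge 0$, i.e. concavity of entropy, survives). One also has to decide whether to invoke the Holevo bound / data-processing inequality as a black box or to spell out the two lines needed here; either is acceptable, but the quantum step should be stated explicitly.
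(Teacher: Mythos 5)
Your proof is correct: the identification of $S(\sigma)-\tfrac12[S(\sigma_0)+S(\sigma_1)]$ with the Holevo quantity $I(X;B)$ of the binary ensemble, followed by data processing under the measurement channel and Fano's inequality for a binary source, is exactly the standard argument, and your observation that $p\ge\tfrac12$ is implicitly required (otherwise $\sigma_0=\sigma_1$ maximally mixed gives a counterexample) is a legitimate point the lemma statement glosses over. The paper itself states this lemma without proof, citing Nayak, and your argument is essentially the one in that reference, so there is nothing to correct.
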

We now state and prove our main lemma.
\begin{lemma}\label{lem:RAC}
Let $\U$ be some class of quantum states over $n$ qubits. Every $(k,n,p,\U)$-$\rac$ or $(k,n,p,\U)$-$\srac$ satisfies
\begin{equation}\label{eq:thm_RAC}
(1-H(p))k \leq \max \Big\{\chi\big(\{(q_i, \sigma_i)\}_{\sigma_i\in \U}\big): \sum_i q_i = 1\Big\},
\end{equation}
where $H(\cdot)$ is the binary entropy function and $\chi$ is the Holevo information $\chi\big(\{(q_i, \sigma_i)\}_{\sigma_i\in \U}\big)= S_{}(\sum_i p_i \sigma_i ) - \sum_i p_i S_{}(\sigma_i)$ and $S_{}(\cdot)$ is the von Neumann entropy function.
\end{lemma}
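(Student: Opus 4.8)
The plan is to induct on $k$, the number of classical bits encoded, following the structure of Nayak's argument~\cite{nayak} but keeping track of which states actually appear in the code, so that the entropy terms that show up are Holevo informations over sub-ensembles of $\U$ rather than just $S(\cdot) \le n$. First I would set up the recursive quantity: for $\srac$ code states $\{\rho_s\}_{s \in \{0,1\}^k} \subseteq \U$, and for any prefix $t \in \{0,1\}^j$ with $0 \le j \le k$, let $\rho_t := 2^{-(k-j)} \sum_{s: s_{[j]} = t} \rho_s$ be the uniform average of the code states consistent with that prefix (so $\rho_{\emptyset} = \bar\rho$ is the average of all code states). Each $\rho_t$ with $|t| < k$ is the equal mixture $\tfrac12(\rho_{t0} + \rho_{t1})$, and since the code is serial, there is a $\{0,1\}$-outcome measurement (allowed to depend on the already-decoded bits, which for a fixed analysis path we treat as fixed) that recovers bit $j+1$ from $\rho_{tb}$ with probability $\ge p$. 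Lemma~\ref{lem:mixture} then gives
\[
S(\rho_t) \ge \tfrac12\big[S(\rho_{t0}) + S(\rho_{t1})\big] + (1 - H(p)).
\]

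Next I would unroll this inequality down all $k$ levels of the binary tree of prefixes. Summing telescopically, one gets
\[
S(\bar\rho) \ge \sum_{t \in \{0,1\}^k} 2^{-k} S(\rho_t) + k\,(1 - H(p)),
\]
i.e. $k(1-H(p)) \le S(\bar\rho) - \mathbb{E}_{s}\, S(\rho_s)$, where $s$ is uniform over $\{0,1\}^k$ and each $\rho_s \in \U$. But the right-hand side is \emph{exactly} the Holevo information $\chi(\{(2^{-k}, \rho_s)\}_{s})$ of the ensemble consisting of the $2^k$ code states with uniform weights — an ensemble supported on $\U$. Hence $k(1-H(p)) \le \chi(\{(2^{-k},\rho_s)\}_s) \le \max\{\chi(\{(q_i,\sigma_i)\}_{\sigma_i \in \U}) : \sum_i q_i = 1\}$, which is the claimed bound. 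For the plain $\rac$ case the same argument works verbatim since a $\rac$ is a special case of an $\srac$ (the recovery measurement simply does not use the subsequent bits), so the single proof covers both.

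The one place that needs care — and I expect it to be the main (if modest) obstacle — is the serial-decoding subtlety: in an $\srac$ the measurement recovering bit $j+1$ from $\rho_{tb}$ is permitted to depend on bits $j+2, \dots, k$, which are \emph{not} fixed within the mixture $\rho_{tb}$. I would handle this exactly as Nayak does: fix any extension of the prefix, or equivalently observe that by averaging over the later bits the success probability stays $\ge p$ and $S$ is concave, so Lemma~\ref{lem:mixture} can still be applied at each node after conditioning appropriately; the telescoping then goes through because the weights $2^{-k}$ on the leaves are the product of the $\tfrac12$'s along each root-to-leaf path. A secondary point is simply to note that the vector of weights $(2^{-k}, \ldots, 2^{-k})$ is a valid distribution, so the maximization on the right-hand side of Eq.~\eqref{eq:thm_RAC} indeed dominates it; no optimization is actually performed, we only need that the uniform ensemble is one feasible point. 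Everything else is the routine telescoping of the entropy recursion.
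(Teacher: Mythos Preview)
Your proposal follows essentially the same route as the paper: recursively apply Lemma~\ref{lem:mixture} down a binary tree of partial mixtures, telescope to get $S(\bar\rho) - 2^{-k}\sum_s S(\rho_s) \ge k(1-H(p))$, recognize the left side as the Holevo information of the uniform ensemble of code states, and bound it by the maximum over all ensembles supported on $\U$. For the plain $\rac$ case your argument is correct as written.

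There is, however, a real (if easily repaired) gap in the $\srac$ case. You recurse on \emph{prefixes}: to split $\rho_t$ into $\rho_{t0},\rho_{t1}$ you must recover bit $j{+}1$, whose decoding measurement in the paper's $\srac$ definition may depend on bits $j{+}2,\dots,k$---precisely the bits you have averaged over in $\rho_{tb}$. Your proposed fix, that ``by averaging over the later bits the success probability stays $\ge p$'', is not correct: if $\mathcal{O}_{j+1}^{(z)}$ succeeds on $\rho_{tbz}$ with probability $\ge p$, the randomized measurement that picks $z$ uniformly and applies $\mathcal{O}_{j+1}^{(z)}$ to $\rho_{tb}=\mathbb{E}_{z'}\rho_{tbz'}$ has success probability $\mathbb{E}_{z,z'}\Pr[\mathcal{O}_{j+1}^{(z)}(\rho_{tbz'})=b]$, and the cross terms $z\neq z'$ carry no guarantee, so the average can drop below $p$. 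Hence Lemma~\ref{lem:mixture} cannot be invoked at that node.

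The paper sidesteps this by recursing on \emph{suffixes}: define $\sigma_y = 2^{-(k-\ell)}\sum_{z}\rho_{zy}$ for $y\in\{0,1\}^\ell$, so that $\sigma_y=\tfrac12(\sigma_{0y}+\sigma_{1y})$ and the bit being recovered sits at position $k-\ell$. Then the $\srac$ measurement for that bit depends only on bits $k-\ell+1,\dots,k$, which is exactly the fixed suffix $y$, so a single well-defined measurement distinguishes $\sigma_{0y}$ from $\sigma_{1y}$ and Lemma~\ref{lem:mixture} applies cleanly at every node. Once you reverse the recursion direction in this way, your telescoping and the identification with $\chi$ go through verbatim, and the rest of your writeup is fine.
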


\begin{proof}
Using Definition~\ref{def:RAC}, a $(k,n,p,\U)$-$\rac$ consists of a set of code states $\{\rho_x\}_{x \in \{0,1\}^k}\subseteq \U$ and measurements $\{\mathcal{O}_i\}_{i\in [k]}$ satisfying $\operatorname{Pr}\left[\mathcal{O}_{i}(\rho_x) = x_{i}\right] \geq p.$ Proceeding as in~\cite{nayak}, we first define the following states which are derived from the code states:
For every  $0 \leq \ell \leq k$ and $y \in\{0,1\}^{\ell}$,~let
$$
\sigma_{y}=\frac{1}{2^{k-\ell}} \sum_{z \in\{0,1\}^{k-\ell}} \rho_{z y}.
$$
In words, for a $\ell$-bit string $y$, let $\sigma_y$ be a uniform superposition over all $2^{n-\ell}$ code states with the suffix $y$. Let $\psi  = \frac{1}{2^n} \sum_{z \in\{0,1\}^{n}} \rho_z$ be the uniform superposition over {\em all} code states. Then we have
\begin{align}
\label{eq:rac_3}
S(\psi)\geq \frac{1}{2^k} \sum_{z \in\{0,1\}^{k}} S(\rho_z) + k (1-H(p)).
\end{align}
To see this, first one can use Lemma~\ref{lem:mixture} to show $S(\psi) \geq \frac{1}{2} \left(S(\sigma_0) + S(\sigma_1) \right) + 1-H(p)$ and recursively applying this lemma to each of the $S(\cdot)$ quantities, we get the equation above (observe that each application of the lemma is justified because for every $y \in \{0,1\}^{\ell}$, we may write $\sigma_y = \frac{1}{2} \left(S(\sigma_{0y}) + S(\sigma_{1y}) \right)$; and  by assumption of a $(k,n,p,\U)$-$\rac$, $\mathcal{O}_{\ell+1}$ can distinguish $\sigma_{0y}, \sigma_{1y}$ with success probability $p$ and thus is a measurement that meets the conditions of Lemma~\ref{lem:mixture}.) 
 Using Eq.~\eqref{eq:rac_3} it now follows that
\begin{align}\label{eq:rac_4}
k (1-H(p)) &\leq S_{}(\psi) - \frac{1}{2^k} \sum_{z \in\{0,1\}^{k}} S_{}(\rho_z)  = \chi\Big(\big\{ \frac{1}{2^k},\rho_x\big\}_{x\in \{0,1\}^k}\Big) \leq \max_{T\subseteq \U} \chi\Big( \Big\{\frac{1}{|T|}, \sigma_i\Big\}_{\sigma_i\in T} \Big).
\end{align}
where the last inequality follows because the uniform ensemble of code states  $\big\{\frac{1}{2^k},\rho_x\big\}_{x\in \{0,1\}^k}$ is precisely of the form $\{p_i, \sigma_i\big \}_{\sigma_i\in \U}$ with zero weight on non-code states in $\U$. In Eq.~\eqref{eq:thm_RAC}, to get a simpler-looking bound, we further relax this inequality by taking the optimization over arbitrary probability distributions on the code states, not just the ones that are uniform on a subset.  Eq.~\eqref{eq:thm_RAC} also holds for $\srac$ by noting that the argument above doesn't change by allowing $\mathcal{O}_i$ to depend on bits $x_{i+1},\ldots ,x_k$. 
\end{proof}
\noindent The proof of Theorem~\ref{thm:bettersfat} follows immediately from combining Lemma~\ref{lem:RAC} and Observation~\eqref{eq:aaronson}. 

An interesting consequence of our result is the following. As far as we are aware,  there is no way of computing $\sfat(\cdot)$ directly, but there exist algorithms to compute our bound in Theorem~\ref{thm:bettersfat}. For a set $\U$ of states, performing the maximization $\max \{\chi\big(\{(q_i,\sigma_i)\}\big):  \sum_i q_i = 1\}$ is a convex optimization problem which can be solved using the Blahut-Arimoto algorithm\cite{blahut}. However, for certain special classes of states, one can present simple bounds on the maximal Holevo information  which we present next.

\subsubsection{Classes of states with bounded $\sfat(\cdot)$ dimension}\label{sec:smallsfat}
A natural question is, how does the new upper bound on $\sfat(\U)$ in Theorem~\ref{thm:bettersfat} compare to the previous upper bound $\sfat(\U_n) < n/\eps^2$ given in~\cite{aaronson2018online}. Observe that that the $\eps$ dependence comes about from a Taylor expansion of $1-H((1-\eps)/2)$ and our new bounds do not change this dependence, hence for the remainder of this section we set $\eps=1$ for simplicity. We now mention a few classes of states for which our new bound improves the $n$ dependence of the previous bound.

\begin{itemize}
\item Suppose our quantum states are ``$k$-juntas", i.e.,      each  $n$-qubit quantum state lives in the same \emph{unknown} $k$-dimensional subspace of the $2^n$-dimensional Hilbert space. Then clearly, the right-hand-side of Eq.~\eqref{eq:thm_RAC} is upper-bounded by $\log k < n$. 
In particular for $n$-juntas the $\sfat(\cdot)$ dimension is $O(\log n)$, hence the sample complexity of learning scales as $O(\log n)$ which is exponentially better than the prior upper bounds of~$n$. 
\item $\U$ consists of a small set of states with small pairwise trace distance; in~\cite{Audenaert_2014} and~\cite{shirokov} they showed that 
\begin{equation}\label{eq:highd}
    \chi(\{p_i,\rho_i\}) \leq v_m \log |\U| 
\end{equation}
where $v_{\mathrm{m}}=\frac{1}{2} \sup _{i, j}\left\|\rho_{i}-\rho_{j}\right\|_{1}$ is the maximal trace norm distance between the states in the class $\U$. This bound could be significantly better than the trivial $\log |\U|$ if $v_m$ is sufficiently~small. 
\item Let $\U = \mathcal{N}(\U_n)$ be the set of all $n$-qubit states obtained after passing the states in $\U_n$ through the channel $\mathcal{N}$. That is, we would like to learn some arbitrary $n$-qubit state that has been passed through an \emph{unknown} quantum channel $\mathcal{N}$. This is the case in many experimentally-relevant settings and is in fact one way to  understand the effect of experimental noise (which can be modelled by a quantum channel during state preparation).   The Holevo information of the quantum channel $\mathcal{N}$ is the following quantity
\begin{equation}\label{eq:holevo}
\chi(\mathcal{N}) := \max_{\vec{p},\rho_i} S_{}\Big(\sum_i p_i \mathcal{N}(\rho_i)\Big) - \sum_i p_i S_{}(\mathcal{N}(\rho_i)),
\end{equation}
where the maximization is over  (arbitrary-sized) ensembles $\{(p_i,\rho_i)\}$. Observe that using Eq.~\eqref{eq:thm_RAC} one can upper bound $\sfat(\cdot)$ dimension of the set $\U = \mathcal{N}(\U_n)$ in terms of $\chi(\mathcal{N})$. 
A centerpiece of quantum Shannon theory is the Holevo-Schumacher-Westmoreland (HSW) theorem~\cite{HSW}, which states that (see for example~\cite{wilde} for a pedagogical proof) $
\chi(\mathcal{N}) \leq~C(\mathcal{N})
$
where $C(\mathcal{N})$ is the classical capacity of the channel. Putting these two bounds together gives
\begin{equation}\label{eq:capacitybound}
\sfat(\mathcal{N}(\U_n)) \leq C(\mathcal{N}).
\end{equation}
Now, using the connection above one can upper bound $\sfat(\cdot)$ of noisy quantum states using results developed in quantum Shannon theory to bound the classical channel capacity. For a depolarizing channel acting on $d$-dimensional states with parameter $\lambda$ for instance (a common noise model), one can upper bound $C(\mathcal{N})$ in  Eq.~\eqref{eq:capacitybound} by a result of King~\cite{king} as follows
\begin{equation}
    \log d - S_{\min }\left(\Delta_{\lambda}\right)
\end{equation}
where $S_{\min }\left(\Delta_{\lambda}\right)=-\left(\lambda+\frac{1-\lambda}{d}\right) \log \left(\lambda+\frac{1-\lambda}{d}\right)-(d-1)\left(\frac{1-\lambda}{d}\right) \log \left(\frac{1-\lambda}{d}\right)$ and the subtractive quantity in the quantity above makes this bound strictly better than~\cite{aaronson2018online}. Similar upper bounds on channel capacity are also known for Pauli channels~\cite{siudzinska2019regularized} and generalized Pauli channels~\cite{siudzinska2020classical}. 

\item Interestingly, we may now also bound $\sfat(\cdot)$ of the class of quantum Gaussian states. Since these states are infinite-dimensional, the previous bound of~\cite{aaronson2018online} is not useful.
However, our channel capacity upper-bound on $\sfat(\cdot)$ yields a finite bound:
It is known from~\cite{giovanetti2004} that the channel capacity of a pure-loss Bosonic channel  with transmissivity $\eta \in [0,1]$,\footnote{This channel is a simple model for communication over free space or through a fiber optic link, where $\eta$ models how much noise is `mixed' into the states.} when the input Gaussian states have photon number at most $N_p$ (and hence bounded energy, which is physically realistic), is $g(\eta N_p)$ where $g(x) \equiv(x+1) \log _{2}(x+1)-x \log _{2} x$. In particular, the case $\eta=1$ corresponds to zero loss, hence $g(N_p)$ bounds $\sfat(\cdot)$ for the {\em entire} class of Gaussian states with $N_p$ photons. 

Alternatively, one might be interested in states prepared through phase-insensitive Bosonic channels. These model other kinds of noise, such as thermalizing or amplifying processes. A recent work~\cite{giovannetti_2014} allows one to bound the capacities of these channels, and hence the $\sfat(\cdot)$ dimensions of these noisy Gaussian states. 
\end{itemize}

\subsection{Faster online shadow tomography}\label{subsec:shadow}
We now discuss how our results can also improve \emph{shadow tomography}, a learning framework recently introduced by Aaronson~\cite{aaronson:shadow}. This is a variant of quantum state tomography in which the goal is not to learn $\rho$ completely, but to learn its `shadows', i.e.,  the expectation values of $\rho$ on a fixed (known) set of measurements. 

To be precise, let $\U$ be a subset of $n$-qubit states. Given $T$ copies of an unknown state $\rho \in \U$, and a set of known two-outcome measurements $E_{1}, \ldots, E_{m}$. The goal is to learn (with probability at least $2/3$) $\Tr(E_{i} \rho)$ up to  additive error $\varepsilon$ for every $i \in[m]$. A trivial learning algorithm uses $T=O((2^n+m)\cdot \varepsilon^{-2})$ many copies of $\rho$  to solve the task, and surprisingly Aaronson showed how to solve this task using  $T=\poly(n,\log m,\varepsilon^{-1})$ copies of $\rho$, exponentially better than the trivial algorithm. An intriguing open question left open by Aaronson~\cite{aaronson:shadow} and others is, is the $n$ dependence necessary? There have been follow up results by Huang et al.~\cite{huangpreskill} that improved Aaronson's procedure when the goal is to obtain `classical shadows' and more recently B{\u{a}}descu and O'Donnell~\cite{buadescu2020improved} gave a procedure which has the best known dependence on all parameters for standard shadow tomography. 

Subsequently Aaronson and Rothblum~\cite{aaronson2019gentle} considered \emph{gentle} shadow tomography, a (stronger) variant of shadow tomography (we do not define gentleness here and refer the interested reader to~\cite{aaronson2019gentle}).  Here, we show that suppose we were performing gentle shadow tomography with the prior knowledge that the unknown state $\rho$ came from a class of states $\U$, then the $n$-dependence in the sample complexity can be replaced with $\sfat(\Cc_\U)$. As we discussed in the previous section, clearly $\sfat(\Cc_\U)\leq O(n/\varepsilon^2)$, but for many class of states $\sfat(\Cc_\U)$ could be much lesser than $n$, giving us a significant improvement over Aaronson's result.  We first state our main statement.  %
\begin{theorem}[Faster gentle shadow tomography]
\label{thm:onlineimpliesST}
The complexity of gentle shadow tomography on a class of states $\U$ is
\begin{equation}\label{eq:gentle}
O\left(\frac{\sfat_{\eps}(\Cc_{\U})^2 \log^2 m \log(1/\delta)}{\eps^2 \min\{\alpha^2,\eps^2\}}\right).
\end{equation}
where $\alpha, \delta$ are gentleness parameters and the goal is to learn $\Pr[E_i (\rho) \text{ accepts}]$ to within an additive error of $\eps$ for every $i\in [m]$.\footnote{Implicitly in the complexity above we have assumed that the algorithm succeeds with probability at least $2/3$.} Moreover, there exists an explicit algorithm that achieves this.
\end{theorem}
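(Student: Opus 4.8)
The plan is to instantiate the black-box reduction of Aaronson and Rothblum~\cite{aaronson2019gentle} from quantum online learnability to gentle shadow tomography, feeding in our $\RSOA$ algorithm (Theorem~\ref{thm:RSOA}) as the online learner and exploiting the fact that its mistake bound on $\Cc_{\U}$ is $\sfat_{2\zeta}(\Cc_{\U})$ rather than the generic bound of $n$ used by~\cite{aaronson2018online,aaronson2019gentle} for the class of all $n$-qubit states. Concretely, one maintains a hypothesis $\sigma_t$ (initialized from $R_1(\cdot)$ in $\RSOA$, or the maximally mixed state), processes the known measurements $E_1,\dots,E_m$ one at a time, and for each $E_i$ uses a small batch of copies of $\rho$ to \emph{gently} test whether $|\Tr(E_i\sigma_t)-\Tr(E_i\rho)|\le \eps$. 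If the test passes, output $\Tr(E_i\sigma_t)$ as the estimate; if it fails, this is a \emph{mistake}, so one spends a second batch of copies to obtain a $\zeta$-accurate estimate of $\Tr(E_i\rho)$ with $\zeta=\Theta(\eps)$, hands the pair $(E_i, \text{estimate})$ to $\RSOA_{\zeta}$ as strong feedback, and updates $\sigma_{t}\to\sigma_{t+1}$ to the hypothesis $\RSOA$ returns. Since $\RSOA$ is an improper learner, the hypothesis is a priori a function on $\M$; as in Section~\ref{sec:prelim}, when $E_1,\dots,E_m$ lie in an orthogonal operator basis this function corresponds to a genuine state $\sigma_t$, and in any case the gentle test only needs the predicted numbers $\Tr(E_i\sigma_t)$, so the disturbance analysis of~\cite{aaronson2019gentle} goes through with these values.

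For the quantitative bound, the key input is that by Theorem~\ref{thm:RSOA} the algorithm $\RSOA_{\zeta}$ makes at most $d:=\sfat_{2\zeta}(\Cc_{\U})$ mistakes on \emph{any} sequence whose feedback is $\zeta$-accurate, and the feedback produced above — a statistical estimate of $\Tr(E_i\rho)$ to within $\zeta$ — is exactly of that form (property~1 of Lemma~\ref{lem:props_RSOA} is what guarantees consistency of the surviving set under this imprecise feedback). Hence the ``mistake branch'' is entered at most $d$ times, bounding the number of hypothesis updates, while the ``pass branch'' is executed at most $m$ times. Following the accounting of~\cite{aaronson2019gentle}: the gentle tests must succeed simultaneously over all rounds in which they are invoked, which costs a union-bound factor yielding $\log^2 m$; the per-round sample cost scales as $\widetilde{O}\big(d^2/(\eps^2\min\{\alpha^2,\eps^2\})\big)$, where one factor of $d$ reflects the number of updates and the $\min\{\alpha^2,\eps^2\}$ the trade-off between the gentleness parameter $\alpha$ and the accuracy $\eps$; and amplifying overall success to $1-\delta$ contributes $\log(1/\delta)$. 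Collecting these contributions and substituting $d=\sfat_{2\zeta}(\Cc_{\U})=\sfat_{\Theta(\eps)}(\Cc_{\U})$ gives the stated bound $O\!\big(\sfat_{\eps}(\Cc_{\U})^2\log^2 m\,\log(1/\delta)/(\eps^2\min\{\alpha^2,\eps^2\})\big)$.

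The main obstacle — and essentially the only ingredient that is not already in~\cite{aaronson2019gentle} — is verifying that the imprecision of the statistical feedback does not break the mistake bound, i.e.\ that replacing exact labels with $\zeta$-accurate estimates still limits $\RSOA$ to $\sfat_{\Theta(\eps)}(\Cc_{\U})$ mistakes. This is precisely what Theorem~\ref{thm:RSOA} and the robustness properties of $\RSOA$ in Lemma~\ref{lem:props_RSOA} were designed to supply, so the work reduces to matching the per-batch estimation precision (and hence batch size) to $\zeta=\Theta(\eps)$, tracking the constants through the definition of an $\eps$-mistake (Definition~\ref{def:mistake}), and checking that the composition of the (up to $m$) gentle measurements on the single copy-bank of $\rho$ remains gentle overall — the latter handled verbatim by the gentle/sequential composition lemma of~\cite{aaronson2019gentle}, whose overhead is already absorbed into the $\log^2 m$ and $\log(1/\delta)$ factors. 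The algorithm described above is explicit given $\RSOA$, establishing the ``moreover'' clause.
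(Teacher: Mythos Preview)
Your proposal is correct and takes essentially the same approach as the paper: plug an $\RSOA$-based online learner with mistake bound $\sfat(\Cc_{\U})$ into the black-box reduction of~\cite{aaronson2019gentle} (their Theorem~38/39), replacing the generic $\ell=n$ by $\ell=\sfat_{\eps}(\Cc_{\U})$. The paper is slightly more explicit about one point you handle operationally --- since the~\cite{aaronson2019gentle} interface supplies feedback \emph{only} on mistake rounds, the paper writes out a variant $\RSOA$ (its Algorithm~4) that sets $V_{t+1}\leftarrow V_t$ on non-mistake rounds and then notes that the proof of Theorem~\ref{thm:RSOA} carries over verbatim because it only analyzes what happens on mistake rounds; also, $\delta$ here is a gentleness parameter of~\cite{aaronson2019gentle}, not a success-amplification parameter, though this does not affect the bound since you correctly defer to their accounting.
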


Indeed the parameter $\sfat(\Cc_{\U})$ in this bound means that for the classes of states mentioned in Section~\ref{sec:smallsfat}, the sample complexity of shadow tomography is better than the complexity in~\cite{aaronson:shadow} (in terms of $n$). We now prove Theorem~\ref{thm:onlineimpliesST}. The connection comes from the implication in~\cite{aaronson2019gentle} that under certain conditions, an online learner for quantum states can be used as a black box for what they term `Quantum Private Multiplicative Weights', an algorithm that performs shadow tomography in both an online and a gentle manner.  We now state the precise setting in which this black box online learner must operate. As usual, we are concerned with the function class $\Cc_{\U} := \{f_{\rho}\}_{\rho\in {\U}}$ where the domain $\mathcal{X}$ is the set of all possible two-outcome measurements $E$ on the states in ${\U}$ and the functions in the class are defined as $f_{\rho}(E) = \Tr(E \rho)$ for every $E$. The unknown state $\rho$ defines some target function $c \in \Cc_{\U}$.
\begin{enumerate}
    \item Adversary provides input point in the domain: $x_t \in \mathcal{X}$.
    \item Learner outputs a prediction $\hat{y}_t \in [0,1]$. 
    \item If the learner makes a mistake, i.e.,     if $|\hat{y}_t - c(x_t)| > \eps$, then adversary provides strong feedback $\widehat{c}(x_{t}) \in [0,1]$ where $\widehat{c}(x_t)$ is an $\varepsilon/10$-approximation of $c(x_t)$, i.e.,       $|\widehat{c}(x_t)- c(x_t)| < \eps/10$, and the learner is allowed to update its hypothesis. Else, the adversary does not provide any feedback, and the learner must use the same hypothesis on the next round.
    \item Learner suffers loss $\left|\hat{y}_t - c(x_t)\right|.$
\end{enumerate}

Observe that this is a close variant of our setting in Section~\ref{subsec:learning_models}, the only difference being that the adversary here only gives feedback on rounds where the learner makes a mistake (i.e.,      when the learner's prediction is grossly wrong). This means that the learner updates her hypothesis if and only if it makes a mistake. Given an online learner $\mathcal{A}$ in the above setting that makes at most $\ell$ updates,~\cite[Theorem 38]{aaronson2019gentle} shows that there exists a randomized algorithm $\mathcal{B}$ for shadow tomography using
\begin{equation}\label{eq:qpmw}
n = O\left(\frac{\ell^2 \log^2 m \log(1/\delta)}{\eps^2 \min\{\alpha^2,\eps^2\}}\right).
\end{equation}
many examples of the unknown state $\rho$ where
such that algorithm $\mathcal{B}$'s error is bounded by $\varepsilon$ with probability at least $1-\beta$. Moreover, this algorithm is $(\alpha,\delta)$-gentle. 
We are now equipped with all we need to prove Theorem~\ref{thm:onlineimpliesST}. The proof boils down to the observation that for any concept class~$\Cc$, we can always construct an online learner that is guaranteed to make at most $\sfat(\Cc)$ mistakes, and therefore $\ell =\sfat(\Cc)$ in Eq.~\eqref{eq:qpmw}. The online learner we construct is a variant of the proper version of our $\RSOA$ Algorithm~\ref{algo:RSOA}.

\begin{proof}[Proof of Theorem~\ref{thm:onlineimpliesST}]
The proof follows from the Quantum Private Multiplicative Weights algorithm in~\cite{aaronson2019gentle} and its accompanying Theorem 39, simply by exhibiting an online learner $\mathcal{A}$ for $\U$ in the setting described above, that makes at most $\ell = \sfat_{\eps}(\Cc_S)$ mistakes. In the rest of this proof, we exhibit just such an algorithm, which is a variant of the proper version of $\RSOA$.

The difference between Algorithm~\ref{algo:RSOA_variant} and $\RSOA$ is that in $\RSOA$, the learner is allowed to update the set $V_t$ on all rounds $t\in [T]$, while in Algorithm~\ref{algo:RSOA_variant}, the update happens only on the rounds for which it made a mistake (`mistake rounds'). Because the learner's current hypothesis for the target concept is computed based on the `set of surviving functions' $V_t$, updating $V_t$ amounts to updating the algorithm's hypothesis.
  \begin{algorithm}[H]
		\textbf{Input:} Concept class $\Cc\subseteq \{f:\X\rightarrow [0,1]\}$, target (unknown) concept $c\in \Cc$, and $\varepsilon\in [0,1]$.
		\vspace{5pt}
		\textbf{Initialize}: $V_1 \gets \Cc$
		\vspace{5pt}
		\begin{algorithmic}[1]
\For{$t = 1, \ldots, T$}
\vspace{1mm}
    \State A learner receives $x_t$ and maintains set $V_t$, a set of ``surviving functions". \;
    \State For every super-bin midpoint $r\in \tilde{\In}_{2\eps/5}$
    the learner computes the set of functions $V_t(r,x_t)$. 
    \State A learner finds the super-bin which achieves the maximum $\sfat(\cdot)$ dimension
      $$
     R_{t}(x_t):=\left\{
    \argmax _{r \in \tilde{\In}_{2\eps/5}} \sfat_{2\eps/5}\left(V_{t}(r, x_t)\right)\in \tilde{\In}_{2\eps/5}\right\}
    $$
    \State The learner computes the mean of the set $R_t(x_t)$, i.e.,      let 
    $$
    \hat{y}_t:=\frac{1}{\left|R_{t}(x_t)\right|} \sum_{r \in R_{t}(x_t)} r.
    $$ 
    \State The learner outputs $\hat{y}_t$,  receives feedback $\widehat{c}(x_{t})$ if it has made a mistake, i.e.,      if $|\hat{y}_t - c(x_t)| > \eps$.     
    \State If the learner received feedback, update $V_{t+1} \leftarrow \{g \in V_t \mid g(x_t) \in B_{\eps/5}(\widehat{c}(x_{t}))\}$; else $V_{t+1} \leftarrow~V_{t}$. 
\EndFor
\end{algorithmic}
\textbf{Outputs:} The intermediate predictions $\hat{y}_t$ for $t\in[T]$, and a final prediction function/hypothesis which is given by $f(x):= R_{T+1}(x)$.

\caption{\textsf{Alternative Robust Standard Optimal Algorithm}
}
\label{algo:RSOA_variant}
\end{algorithm}
  We thus aim to show that Algorithm~\ref{algo:RSOA_variant} has no more than $\sfat(\cdot)$ mistake rounds. However, we observe that we may directly import the proof of Theorem~\ref{thm:RSOA} to do so. This is because that proof is independent of what happened on the non-mistake rounds, which are the only rounds that differ between $\RSOA$ and Algorithm~\ref{algo:RSOA_variant}. Rather, it argues that on the rounds on which $\RSOA$ made a mistake, $\sfat(V_t)$ decreases by at least $1$ due to the update on $V_t$, and having initialized $V_1=\Cc$, no more than $\sfat(\Cc)$ updates may happen in total. Exactly the same argument can be used to bound the mistakes of Algorithm~\ref{algo:RSOA_variant}, though note that for the constants to work out, the $\eps$ of $\RSOA$ must be multiplied by 5.
\end{proof}

\bibliographystyle{alpha}
\bibliography{online}
\end{document}